\DeclarePairedDelimiter\ceil{\lceil}{\rceil}
\DeclarePairedDelimiter\floor{\lfloor}{\rfloor}
\newtheorem{thm}{Theorem}[section]
 \newtheorem{lem}{Lemma}[section]
 \newtheorem{rem}{Remark}[section]
\newtheorem{prop}{Proposition}[section]
\newtheorem{asm}{Assumption}[section]
\newtheorem{exm}{Example}[section]
\numberwithin{equation}{section}
\begin{document}

\title{Continuity of Utility Maximization under Weak Convergence}

 \author{Erhan Bayraktar \address{Department of Mathematics, University of Michigan} \email{erhan@umich.edu}} \thanks{E. Bayraktar is supported in part by  the National
 Science Foundation under grant DMS-1613170 and in part by the Susan M. Smith Professorship.}
 \author{Yan Dolinsky  \address{
  Department of Statistics, Hebrew University} \email{yan.dolinsky@mail.huji.ac.il}} \thanks{Y. Dolinsky is supported in part by the Israeli Science Foundation under Grant
  160/17.}
\author{Jia Guo \address{Department of Mathematics, University of Michigan}\email{guojia@umich.edu}}

\date{}

\subjclass[2010]{91G10, 91G20}
 \keywords{Incomplete Markets, Utility Maximization, Weak Convergence }%

\maketitle \markboth{}{}
\renewcommand{\theequation}{\arabic{section}.\arabic{equation}}
\pagenumbering{arabic}

\begin{abstract}
In this paper we find tight sufficient conditions for the continuity of the
value of the utility maximization problem from terminal wealth with respect to
the convergence in distribution of the underlying processes.
We also establish a weak convergence result for the terminal
wealths of the optimal portfolios.
Finally, we apply our results to the computation of
the minimal expected shortfall (shortfall risk) in the Heston model by building an appropriate lattice
approximation.

\end{abstract}

\tableofcontents
\section{Introduction}\label{sec:1}
A basic problem of mathematical finance is the problem of an economic agent, who
invests in a financial market so as to maximize the expected utility of his terminal
wealth.
The problem of utility maximization problem is going back
to the seminal work by R. Merton \cite{Merton:69, Merton:71}
and continuing
e.g. in \cite{CH:89, KLSX:91, KS:99, CSW:01, CSW:18, HIM:05, RS:05, BF:08}.

This paper deals with the following question:
Given a utility function
and a sequence of financial markets with underlying assets $(S^{(n)})_{n\in\mathbb N}$ that are converging weakly
to $S$, under which conditions
do the values of the utility maximization problems (from terminal wealth) converge to the corresponding value
for the model given by $S$?

Our paper is motivated by the following economic applications.
First, in a general (incomplete)
financial market the problem of utility maximization
does not admit an explicit solution and so numerical
schemes come naturally into the picture.
Another important motivation is the practical limitations of calibration. Namely, we want to understand whether
the utility maximization problem is stable under small misspecifications of the law of the asset prices.

To the best of our knowledge, the continuity under weak convergence
was studied only in a complete market setup (see \cite{H:91,P:03, Reichlin:16}).
In this work we consider this convergence question for general incomplete market models and continuous
(as a function of the terminal wealth) random utility functions.

We divide the proof of our main result, namely Theorem~\ref{main}, into two main steps identifying when we have lower and upper-semi-continuity respectively.
We show that for the lower semi--continuity to hold, it is sufficient (in addition to some technical assumptions) that
the approximating sequence $(S^{(n)})_{n\in\mathbb N}$ has vanishing jump activity.
The formal condition is given in Assumption \ref{asm5.1}.
The main idea is to prove that an admissible integral of the form $\int\gamma dS$ can be approximated in the weak sense
by admissible integrals
of the form $\int \gamma^{(n)}d S^{(n)}$, $n\in\mathbb N$.
The assumption on the jump activity is essential for the admissibility of the approximating sequence.
We demonstrate the necessity of this assumption with an example; see Section~\ref{sec:necas3}.
We would like to emphasize that the concavity of the utility function is not necessary in this step.

The second step, namely, the upper semi--continuity is more delicate.
We prove that if the utility function is concave and the state price densities in the limit model
can be approximated by state price densities in the approximating sequence (see Assumption \ref{asm4.1})
then upper semi--continuity holds. The proof relies on the optional decomposition theorem.
In Sections~\ref{sec:necofas4f}--\ref{sec:necofas4s} we discuss the necessity of our assumptions.
Example \ref{exm4.2} in Section~\ref{sec:necofas4f} is surprising and quite interesting in its own right.
In this example we construct a sequence of complete market models (binomial models) which
converge weakly to an incomplete market model (a stochastic volatility model).

In addition to the convergence of the values, we prove a weak convergence for the terminal wealths of the optimal portfolios; see Theorem~\ref{thm2.2}.
An open question is whether
there is a convergence of the optimal trading strategies, i.e. of the integrands.
In a complete market setup convergence of the optimal trading strategies
were obtained in \cite{H:91,P:03}.
The proof was based on an explicit characterization of the optimal
trading strategies.
In the incomplete market setup we do not have explicit formulas for the optimal portfolios. Hence,
the problem is much more complicated and requires additional machinery (see Remark \ref{rem:integrand}).

We apply our continuity results in order to construct an approximating sequence verifying all our assumptions
for the Heston model in Section~\ref{sec:7}.
Our method is based on recombining trinomial trees and so,
for technical reasons we truncate the model in such a
way that the volatility is bounded. The novelty of our construction is that the
approximating sequence lies on a grid and satisfies the assumptions required for the
continuity of the value of the utility maximization problem from terminal wealth.
The grid structure enables efficient numerical computations for stochastic control
problems via dynamic programming.

Our last contribution, which is the subject of Section~\ref{sec:6}, is the implementation
of the constructed approximating models for the numerical computations in the Heston model.
For the shortfall risk measure we show that the truncation error can be controlled, see Lemma~\ref{error}, so our result applies to the non-truncated Heston model.
It is well known (see \cite{CPT:99,FS:99,DN:18,N:18}) that in the Heston model the super--replication price
is prohibitively high and lead to buy--and--hold strategies.
Namely, the cheapest way to super--hedge a European call option
is to buy one stock at the initial time and keep that position till maturity.
That is why the computation of shortfall risk is important. This cannot be done analytically and so numerical schemes come into picture.

A closely related topic to the one studied in the present paper is the
stability of the utility maximization problem under market parameters and the investor preferences.
Since the work \cite{JN:04} which dealt with complete markets, large progress
was made in the study of the stability of the utility maximization problem  in incomplete markets
(see, for instance, \cite{LZ:07,L:09,KZ:11,LY:12,BK:13,BS:18,LMZ:18,MS:18}).
The main difference from our setup is that in these papers the stochastic
base is fixed while in our setup each financial model is defined on
its own probability space. As a result, while the above cited papers deal with the stability
of the models with respect to small perturbations, we are able to obtain numerical
approximations using discrete models.

The rest of the paper is organized as follows. In the next section we introduce
the setup and formulate the main results.
In Section \ref{sec:nec} we discuss Assumptions \ref{asm5.1},\ref{asm4.1},\ref{asm4.2} and demonstrate their necessity.
In Section~\ref{sec:5} we prove the lower semi--continuity. In Section~\ref{sec:4}
we prove the upper semi--continuity. In Section \ref{sec:4WEAK} we establish
Theorem~\ref{thm2.2}.
Section~\ref{sec:7} is devoted to the construction
of an approximating sequence for the Heston model.
In Section~\ref{sec:6} we provide a detailed numerical
analysis for shortfall risk minimization.

\section{Preliminaries and Main Results}\label{sec:2}\setcounter{equation}{0}

We consider a model of a security
market which consists of $d$ risky assets which we denote by
$S=(S^{(1)}_t,...,S^{(d)}_t)_{0\leq t\leq T}$, where $T<\infty$ is the time horizon.
We assume that the investor has a bank account that,
for simplicity, bears no interest. The process $S$ is assumed to be a continuous semi--martingale on a filtered probability
space $(\Omega,\mathcal F,(\mathcal F^S_t)_{0\leq t\leq T},\mathbb P)$ where
the filtration
$(\mathcal F^S_t)_{0\leq t\leq T}$ is the usual filtration generated by $S$. Namely, the filtration
$\{\mathcal F^S_t\}_{t=0}^T$ is the minimal filtration which is complete, right continuous and satisfies
$\mathcal F_t\supset\sigma\{S_{u}: u\leq t\}$.
Without a loss of generality we take $\mathcal F:=\mathcal F^S_T$.

 A (self--financing) portfolio $\pi$ is defined as a pair $\pi=(x,\gamma)$ where the constant
$x$ is the initial value of the portfolio and $\gamma=(\gamma^{(i)})_{1\leq i\leq d}$
is a predictable $S$--integrable process
specifying the amount of each asset held in the portfolio.
The corresponding portfolio value process is given by
$$V^\pi_t:=x+\int_{0}^t\gamma_{u} dS_{u}, \ \ t\in [0,T].$$

Observe that the continuity of $S$ implies that the wealth process $\{V^\pi_t\}_{t=0}^T$
is continuous as well.
We say that a  trading strategy $\pi$ is admissible if
$V^\pi_t\geq 0$, $\forall t\geq 0.$
For any $x>0$ we denote by $\mathcal A(x)$
the set of all admissible trading strategies.

Denote by $\mathcal M(S)$ the set of all equivalent (to $\mathbb P$) local
 martingale measures. We assume that $\mathcal M(S)\neq\emptyset$.
 This condition is intimately related to the absence of arbitrage opportunities
on the security market. See \cite{DelbSch:94} for a precise statement and references.

Next, we introduce our utility maximization problem.
Consider a continuous function $U:(0,\infty) \times\mathbb D([0,T];\mathbb R^d)\rightarrow\mathbb R$.
As usual, $\mathbb D([0,T];\mathbb R^d)$ denotes the space
of all RCLL (right continuous with left limits) functions
$f:[0,T]\rightarrow\mathbb R^d$ equipped with the Skorokhod topology (for details see \cite{B:68}).
\begin{asm}\label{asm2.1}
${}$\\
(i) For any $s\in \mathbb D([0,T];\mathbb R^d)$ the function $U(\cdot,s)$ is non--decreasing.\\
(ii) For any $x>0$ we have
$\mathbb E_{\mathbb P}[U(x,S)]>-\infty$ .
\end{asm}
We extend $U$ to $\mathbb R_{+}\times\mathbb D([0,T];\mathbb R^d)$ by
$U(0,s):=\lim_{v\downarrow 0} U(v,s)$. In view of Assumption \ref{asm2.1}(i) the limit exists (might be $-\infty$).

For a given initial capital $x>0$ consider the optimization problem
$$
u(x):=\sup_{\pi\in\mathcal A(x)}\mathbb E_{\mathbb P}[U(V^\pi_T,S)],
$$
where we set $-\infty+\infty=-\infty$. Namely, for a random variable $X$ which satisfies $\mathbb E_{\mathbb P}[\max(-X,0)]=\infty$ we set
$\mathbb E_{\mathbb P}[X]:=-\infty$.

Let us notice that Assumption \ref{asm2.1}(ii) implies $u(x)>-\infty$.

\begin{asm}\label{asm2.1+}
The function $u:(0,\infty)\rightarrow \mathbb R\cup \{\infty\}$ is continuous.
Namely, for any $x>0$ we have
$u(x)=\lim_{y\rightarrow x} u(y)$
where a priori the joint value can be equal to $\infty$.
\end{asm}
Next, for any $n$,
let $S^{(n)}=(S^{n,1}_t,...,S^{n,d}_t)_{0\leq t\leq T}$
be a RCLL semi--martingale defined on some filtered probability space
$(\Omega_n,\mathcal F^{(n)},(\mathcal F^{(n)}_t)_{0\leq t\leq T},\mathbb P_n)$ where the filtration
$(\mathcal F^{(n)}_t)_{0\leq t\leq T}$ satisfies the usual assumptions (right continuity and completeness).
For the $n$--th model we define $\mathcal A_n(x)$
as the set of all pairs $\pi_n=(x,\gamma^{(n)})$ such that
$\gamma^{(n)}$ is a predictable $S^{(n)}$--integrable process
and the resulting portfolio value process
$$V^{\pi_n}_t:=x+\int_{0}^t \gamma^{(n)}_{u} dS^{(n)}_{u}\geq 0, \ \ t\in [0,T],$$
is non-negative.
Set, $$u_n(x):=\sup_{\pi_n\in\mathcal A_n(x)}\mathbb E_{\mathbb P_n}[U(V^{\pi_n}_T,S^{(n)})].$$

We assume the
weak convergence $S^{(n)}\Rightarrow S$ on the space $\mathbb D([0,T];\mathbb R^d)$ equipped with the Skorokhod topology
(for details see Chapter 4 in  \cite{B:68}).
Moreover, we assume the
following uniform integrability assumptions.
\begin{asm}\label{asm2.2}
${}$\\
(i)  For any $x>0$
the family of random variables $\{U^{-}(x,S^{(n)})\}_{n\in\mathbb N}$ is uniformly integrable
where $U^{-}:=\max(-U,0)$. \\
(ii) For any $x>0$ the family of random variables $\{U^{+}(V^{\pi_n}_T,S^{(n)})\}_{n\in\mathbb N, \pi_n\in\mathcal A_n(x)}$ is uniformly integrable, where
$U^{+}:=\max(U,0)$.
\end{asm}
\begin{rem}
The verification of Assumption \ref{asm2.1+}
and Assumption \ref{asm2.2}(ii) can be a difficult task.
In Section \ref{sec:AE} we provide quite general and easily verifiable conditions which are sufficient
for the above assumptions to hold true.
\end{rem}
Due to the admissibility requirements we will need the following assumption
which bounds the uncertainty of the jump activity. This assumption will be discussed in details in Section
\ref{sec:necas3}.
\begin{asm}\label{asm5.1}
For any $n\in\mathbb N$
consider the non-decreasing RCLL process given by $D^{(n)}_t:=\sup_{0\leq u\leq t} |S^{(n)}_{u}-S^{(n)}_{u-}|$, $t\in [0,T]$
where $|\cdot|$ denotes the Euclidean norm in $\mathbb R^d$.
For any $n$, there exists an adapted (to $(\mathcal F^{(n)}_t)_{0\leq t\leq T}$)
left continuous process $\{J^{(n)}_t\}_{t=0}^T$, $n\in\mathbb N$
such that
$\inf_{0\leq t\leq T}\left(J^{(n)}_t-D^{(n)}_t\right)\geq 0$ a.s.
and $J^{(n)}_T\rightarrow 0$ in probability.
\end{asm}
\begin{rem}
Let us notice that Assumption \ref{asm5.1}
and the weak convergence $S^{(n)}\Rightarrow S$ imply that $S$ is a continuous process. Indeed,
from Assumption \ref{asm5.1} we have
$\sup_{0\leq u\leq T} |S^{(n)}_{u}-S^{(n)}_{u-}|\rightarrow 0$. This together with Theorem 13.4 in \cite{B:68} yields that
$S$ is continuous.
\end{rem}
Now, we ready to formulate our first result (lower semi--continuity) which will be proved in Section \ref{sec:5}.
\begin{prop}\label{prop5.1}
Under Assumptions \ref{asm2.1}--\ref{asm2.1+}, Assumption \ref{asm2.2}(i)
and Assumption \ref{asm5.1} we have
$$u(x)\leq \lim\inf_{n\rightarrow\infty} u_n(x), \ \ \forall  x>0.$$
\end{prop}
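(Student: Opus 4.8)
The plan is to verify the inequality one strategy at a time. Since $u(x)=\sup_{\pi\in\mathcal A(x)}\mathbb E_{\mathbb P}[U(V^\pi_T,S)]$, it suffices to produce, for an arbitrary admissible $\pi=(x,\gamma)\in\mathcal A(x)$, a sequence of admissible strategies $\pi_n\in\mathcal A_n(x)$ such that $(V^{\pi_n}_T,S^{(n)})\Rightarrow(V^\pi_T,S)$ on $\mathbb R_+\times\mathbb D([0,T];\mathbb R^d)$; taking the supremum over $\pi$ afterwards gives the claim. Granting such a construction, the continuity of $U$ and the continuous mapping theorem yield $U(V^{\pi_n}_T,S^{(n)})\Rightarrow U(V^\pi_T,S)$, and then $\mathbb E_{\mathbb P}[U(V^\pi_T,S)]\le\liminf_n\mathbb E_{\mathbb P_n}[U(V^{\pi_n}_T,S^{(n)})]\le\liminf_n u_n(x)$ follows from a Fatou/Portmanteau argument: the positive parts are controlled by lower semicontinuity of $\mathbb E[U^{+}]$ under weak convergence, while the negative parts are controlled by the uniform integrability in Assumption \ref{asm2.2}(i) (applied at a positive floor of the wealth, see below), and Assumption \ref{asm2.2}(ii) ensures the limiting expectation is not spuriously inflated.

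Two reductions make the construction feasible. First, I would push the wealth away from the origin: for $\lambda\in(0,1)$ the scaled strategy $\lambda\gamma$ has wealth $(1-\lambda)x+\lambda V^\pi_T\ge(1-\lambda)x>0$ at all times, and as $\lambda\uparrow1$ its terminal utility converges to that of $\pi$ by continuity of $U$ and dominated convergence (the lower bound $U^{-}((1-\lambda)x,S)$ is integrable by Assumption \ref{asm2.1}(ii), and Assumption \ref{asm2.1+} controls the values); hence it is enough to treat strategies whose wealth stays above some $\epsilon>0$. Second, I would approximate $\gamma$ by a simple, bounded, predictable strategy $\hat\gamma$ (piecewise constant on a deterministic partition, with bounded coefficients depending on finitely many path values) for which $\int\hat\gamma\,dS$ is a continuous functional of the path near continuous paths; by refining the approximation the wealth still stays above $\epsilon/2$ and the terminal utility changes by at most $\epsilon$.

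I then lift $\hat\gamma$ to the $n$-th model by evaluating the same functional on $S^{(n)}$, obtaining a bounded predictable $\hat\gamma^{(n)}$ with $|\hat\gamma^{(n)}|\le C$. The delicate point is admissibility, which can fail because $S^{(n)}$ has jumps, and here Assumption \ref{asm5.1} is decisive: since $J^{(n)}$ is left-continuous and $D^{(n)}_t\le J^{(n)}_t$, every jump of $S^{(n)}$ at a time $u$ obeys $|\Delta S^{(n)}_u|\le D^{(n)}_u\le J^{(n)}_u=J^{(n)}_{u-}$, a \emph{predictable} bound. Consequently, writing $V_t:=x+\int_0^t\hat\gamma^{(n)}\,dS^{(n)}$ and liquidating at the barrier via $\tau_n:=\inf\{t:\ V_t\le C\,J^{(n)}_t\}$, one has at every jump time $u\le\tau_n$ the inequality $V_{u-}\ge C\,J^{(n)}_{u-}\ge|\hat\gamma^{(n)}_u|\,|\Delta S^{(n)}_u|$, so the post-jump wealth stays nonnegative and $\pi_n:=(x,\hat\gamma^{(n)}\mathbf 1_{[0,\tau_n]})\in\mathcal A_n(x)$. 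Because $J^{(n)}_T\to0$ in probability while the limiting wealth is bounded below by $\epsilon$, the barrier is asymptotically inactive: on events whose probability tends to one the liquidation does not occur, and there the stopped integral coincides with the full integral; by choosing the barrier to have a fixed positive floor $\delta\in(0,\epsilon)$ for all large $n$ one also keeps the wealth bounded below by a positive constant off the liquidation event, which is exactly what feeds Assumption \ref{asm2.2}(i).

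It remains to identify the weak limit. Invoking the Skorokhod representation theorem for the joint laws of $(S^{(n)},J^{(n)})$, which converge to $(S,0)$ since $S^{(n)}\Rightarrow S$ and $J^{(n)}_T\to0$, I may assume almost sure convergence; as the limit $S$ is continuous (the Remark following Assumption \ref{asm5.1}), Skorokhod convergence upgrades to uniform convergence on $[0,T]$, and for the simple integrand $\hat\gamma$ the stochastic integrals converge pathwise, so that $V^{\pi_n}_T\to\hat V_T$ in probability. Feeding this into the Fatou/Portmanteau step of the first paragraph and then undoing the two reductions ($\hat\gamma\to\lambda\gamma$ and $\lambda\uparrow1$) yields $\mathbb E_{\mathbb P}[U(V^\pi_T,S)]\le\liminf_n u_n(x)$, hence the proposition. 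I expect the main obstacle to be precisely the interplay in the preceding paragraph: one must keep the approximating strategies admissible against the jumps of $S^{(n)}$ without destroying the convergence of the integrals, and it is exactly the controlled predictable jump bound of Assumption \ref{asm5.1}, together with the away-from-zero reduction, that reconciles these two competing requirements.
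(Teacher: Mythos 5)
Your proposal is correct and follows essentially the same route as the paper's proof: reduce to strategies whose wealth is bounded away from zero and to bounded simple integrands that are continuous functionals of finitely many path values (the paper's $\Gamma^{(n)}_M$ and Lemma \ref{lem5.+}, the paper using a Bank--Baum capital shift where you scale by $\lambda$), lift them to the $n$-th model by evaluating the same functional on $S^{(n)}$, restore admissibility by stopping at the barrier $\delta + C J^{(n)}_t$ via the left-continuity (hence predictability) of $J^{(n)}$ exactly as in the paper's stopping time $\Theta_n$ in (\ref{5.1000}), show the barrier is asymptotically inactive since $J^{(n)}_T\rightarrow 0$ while the limiting wealth stays above the floor, and conclude via the Skorokhod representation theorem and Fatou's lemma with the uniform integrability of Assumption \ref{asm2.2}(i) at the wealth floor. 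Two minor repairs: the $\lambda\uparrow 1$ step is not literally dominated convergence, since the bound $U^{-}((1-\lambda)x,S)$ degrades as $\lambda\uparrow 1$ --- instead use Fatou with the $\lambda$-uniform integrable minorant $-U^{-}(V^{\pi}_T,S)-U^{-}(x,S)$ (valid by Assumption \ref{asm2.1}(i), splitting on $\{V^\pi_T\leq x\}$), or the paper's cleaner mechanism of comparing value functions at nearby capitals through Assumption \ref{asm2.1+}, which you also invoke; and your appeal to Assumption \ref{asm2.2}(ii) is neither among the proposition's hypotheses nor needed for this direction of the inequality.
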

Next, we treat upper semi--continuity.
\begin{asm}\label{asm4.1}
Recall the set $\mathcal M(S)$ of all equivalent local martingale measures.
Denote by $\mathcal M(S^{(n)})$, $n\in\mathbb N$ the set of all equivalent local martingale measures for the $n$--th model.
For any $\mathbb Q\in\mathcal M(S)$ there exists a sequence of probability measures $\mathbb Q_n\in\mathcal M(S^{(n)})$, $n\in\mathbb N$ such that
under $\mathbb P_n$ the joint distribution of
$\left(\{S^{(n)}_t\}_{t=0}^T,\frac{d\mathbb Q_n}{d\mathbb P_n}\right)$ on the space $\mathbb D([0,T];\mathbb R^d)\times \mathbb R$ converges to
the joint distribution of
$\left(\{S_t\}_{t=0}^T,\frac{d\mathbb Q}{d\mathbb P}\right)$ under $\mathbb P$. We denote this relation by
\begin{equation}\label{4.1}
\left(\left(S^{(n)},\frac{d\mathbb Q_n}{d\mathbb P_n}\right);\mathbb P_n\right)\Rightarrow
\left(\left(S,\frac{d\mathbb Q}{d\mathbb P}\right);\mathbb P\right).
\end{equation}
\end{asm}
\begin{rem}\label{AE:4}
The verification of Assumption \ref{asm4.1} requires a comfortable
representation of the corresponding local martingale measures. This is the case for
tree based approximations of diffusion processes. In Section \ref{sec:verification} we illustrate the verification of Assumption \ref{asm4.1}
for tree based approximations of the Heston model.

We do notice that in order to verify Assumption \ref{asm4.1}
it is sufficient to establish (\ref{4.1}) for a dense subset of
$\left\{\frac{d\mathbb Q}{d\mathbb P} :
\mathbb Q\in\mathcal M(S)\right\}$.
This simplification will be used in Section \ref{sec:verification}.
\end{rem}
\begin{rem}
In the complete market setup, Assumption \ref{asm4.1} is equivalent
to the joint convergence
of the underlying assets and of the Radon--Nikodym derivatives with respect to the unique risk neutral probability measure.
In this case (see \cite{P:03}) Assumption \ref{asm4.1} guarantees
the convergence of both the values and the optimal strategies.
\end{rem}

\begin{asm}\label{asm4.2}
For any $s\in \mathbb D([0,T];\mathbb R^d)$, the function $U(\cdot,s)$ is concave.
\end{asm}
Assumption \ref{asm4.2} says that the investor can not gain from additional randomization.

The following upper semi--continuity result will be proved in Section \ref{sec:4}.
\begin{prop}\label{prop4.2}
Under Assumption \ref{asm2.1}(i), Assumption \ref{asm2.2}(ii) and Assumptions \ref{asm4.1},\ref{asm4.2} we have
$$u(x)\geq \lim\sup_{n\rightarrow\infty} u_n(x), \ \ \forall  x>0.$$
\end{prop}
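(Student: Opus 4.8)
The plan is to bound each $u_n(x)$ uniformly from above using the martingale/super-replication duality, then pass to the limit along a subsequence realizing $\limsup_n u_n(x)$. First I would fix an arbitrary $\mathbb Q\in\mathcal M(S)$ and use Assumption \ref{asm4.1} to obtain measures $\mathbb Q_n\in\mathcal M(S^{(n)})$ with the joint weak convergence \eqref{4.1}. For each admissible $\pi_n=(x,\gamma^{(n)})\in\mathcal A_n(x)$, the value process $V^{\pi_n}$ is a nonnegative local $\mathbb Q_n$--martingale, hence a $\mathbb Q_n$--supermartingale, so $\mathbb E_{\mathbb Q_n}[V^{\pi_n}_T]\le x$. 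The concavity of $U(\cdot,s)$ (Assumption \ref{asm4.2}) is what lets me convert this budget constraint into a bound on the utility: writing $Z_n:=\tfrac{d\mathbb Q_n}{d\mathbb P_n}$, I would use the Fenchel/Legendre conjugate $\tilde U(y,s):=\sup_{v>0}\bigl(U(v,s)-y\,v\bigr)$ and the pointwise inequality $U(V^{\pi_n}_T,S^{(n)})\le \tilde U(\lambda Z_n,S^{(n)})+\lambda Z_n V^{\pi_n}_T$ for every $\lambda>0$. Taking $\mathbb P_n$--expectations and using $\mathbb E_{\mathbb Q_n}[V^{\pi_n}_T]\le x$ gives
\begin{equation}\label{eq:proposal-dual}
\mathbb E_{\mathbb P_n}\bigl[U(V^{\pi_n}_T,S^{(n)})\bigr]\le \mathbb E_{\mathbb P_n}\bigl[\tilde U(\lambda Z_n,S^{(n)})\bigr]+\lambda x,
\end{equation}
a bound that is uniform over $\pi_n\in\mathcal A_n(x)$. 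Taking the supremum over $\pi_n$ yields $u_n(x)\le \mathbb E_{\mathbb P_n}[\tilde U(\lambda Z_n,S^{(n)})]+\lambda x$.

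Next I would pass to the limit in \eqref{eq:proposal-dual}. The joint weak convergence \eqref{4.1} gives $(S^{(n)},Z_n)\Rightarrow(S,Z)$ with $Z=\tfrac{d\mathbb Q}{d\mathbb P}$, so by continuity of $U$ (hence of $\tilde U$ in its arguments) the random variables $\tilde U(\lambda Z_n,S^{(n)})$ converge in distribution to $\tilde U(\lambda Z,S)$. To move the expectations through the weak limit I need control from above on this sequence; here is where Assumption \ref{asm2.2}(ii) enters, controlling the positive part of the utility and hence bounding $\tilde U$ from above uniformly, which combined with $U(\cdot,s)$ being non-decreasing (Assumption \ref{asm2.1}(i)) lets me invoke an extended Fatou/Portmanteau argument to conclude $\limsup_n \mathbb E_{\mathbb P_n}[\tilde U(\lambda Z_n,S^{(n)})]\le \mathbb E_{\mathbb P}[\tilde U(\lambda Z,S)]$. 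Taking $\limsup_n$ on both sides then gives $\limsup_n u_n(x)\le \mathbb E_{\mathbb P}[\tilde U(\lambda Z,S)]+\lambda x$ for every $\lambda>0$ and every $\mathbb Q\in\mathcal M(S)$.

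Finally I would identify the right-hand side with $u(x)$. Optimizing over $\lambda>0$ and $\mathbb Q\in\mathcal M(S)$ and invoking the dual characterization of $u(x)$ — which in this incomplete-market setting rests on the optional decomposition theorem, as flagged in the introduction — should give $\inf_{\lambda,\mathbb Q}\bigl(\mathbb E_{\mathbb P}[\tilde U(\lambda Z,S)]+\lambda x\bigr)=u(x)$, completing the chain $\limsup_n u_n(x)\le u(x)$. I expect the main obstacle to be the limiting step: the weak convergence of $(S^{(n)},Z_n)$ only controls convergence in distribution of the \emph{composed} random variable $\tilde U(\lambda Z_n,S^{(n)})$, so I must be careful that $\tilde U$ is genuinely continuous on the relevant domain (including boundary behavior as the conjugate variable degenerates) and that the uniform-integrability-type control furnished by Assumption \ref{asm2.2}(ii) is strong enough to justify interchanging limit and expectation in the correct (upper) direction. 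Establishing that the dual bound is actually tight — i.e. that no duality gap appears in the limit model — is the delicate part and is precisely where concavity and the optional decomposition theorem must be used carefully.
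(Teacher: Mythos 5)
Your dual (conjugate-function) route is genuinely different from the paper's argument, but as written it has two gaps that are fatal under the proposition's hypotheses. First, the limiting step goes the wrong way on both counts. The conjugate $\tilde U(y,s):=\sup_{v>0}\bigl(U(v,s)-yv\bigr)$ is a supremum of continuous functions of $(y,s)$ and is therefore only \emph{lower} semicontinuous in $s$; weak convergence of $(S^{(n)},Z_n)$ plus lower semicontinuity yields $\liminf_n\mathbb E_{\mathbb P_n}[\tilde U(\lambda Z_n,S^{(n)})]\geq\mathbb E_{\mathbb P}[\tilde U(\lambda Z,S)]$, the opposite of the inequality you need. Moreover, Assumption \ref{asm2.2}(ii) controls $U^{+}$ only along \emph{attainable} terminal wealths $V^{\pi_n}_T$, whereas the supremum defining $\tilde U(\lambda Z_n,S^{(n)})$ ranges over all $v>0$; it therefore furnishes no upper bound and no uniform integrability for $\tilde U^{+}(\lambda Z_n,S^{(n)})$, which may fail even to be integrable (already for $U(v,s)=\sqrt v$ one has $\tilde U(y)=\frac{1}{4y}$, and $\mathbb E_{\mathbb P_n}[1/Z_n]=\mathbb E_{\mathbb Q_n}\bigl[\bigl(\frac{d\mathbb P_n}{d\mathbb Q_n}\bigr)^2\bigr]$ is not controlled by the stated assumptions). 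Second, the closing identity $\inf_{\lambda>0,\,\mathbb Q\in\mathcal M(S)}\bigl(\mathbb E_{\mathbb P}[\tilde U(\lambda\frac{d\mathbb Q}{d\mathbb P},S)]+\lambda x\bigr)=u(x)$ is not available here: $U$ is a random, path-dependent utility, no Inada or asymptotic-elasticity conditions are assumed, and in incomplete markets the dual domain must in general be enlarged beyond $\mathcal M(S)$ to supermartingale deflators, so restricting the infimum to martingale measures can leave a duality gap. Proving that no gap appears is essentially as hard as the proposition itself, and nothing in Assumptions \ref{asm2.1}(i), \ref{asm2.2}(ii), \ref{asm4.1}, \ref{asm4.2} delivers it.

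The paper sidesteps utility duality entirely and works on the primal side: take asymptotically optimal $\hat\pi_n$, use the budget bound $\mathbb E_{\mathbb Q_n}[V^{\hat\pi_n}_T]\leq x$ to get tightness of $\bigl(S^{(n)},\frac{d\mathbb Q_n}{d\mathbb P_n},V^{\hat\pi_n}_T\frac{d\mathbb Q_n}{d\mathbb P_n}\bigr)$, extract a weak limit $(S,Z,Y)$, set $V:=\mathbb E_{\tilde{\mathbb P}}\bigl(\frac{Y}{Z}\,\big|\,\mathcal F^S_T\bigr)$ and apply Jensen's inequality — this is the only place concavity (Assumption \ref{asm4.2}) enters. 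It then superhedges $V$ from initial capital $x$ via the optional decomposition theorem, verifying $\mathbb E_{\hat{\mathbb Q}}[V]\leq x$ for \emph{every} $\hat{\mathbb Q}\in\mathcal M(S)$ by a second application of Assumption \ref{asm4.1} and Fatou's lemma applied to the nonnegative variables $V^{\hat\pi_n}_T\frac{d\hat{\mathbb Q}_n}{d\mathbb P_n}$ — a Fatou estimate in the correct (lower) direction, unlike the upper-direction interchange your argument requires. Assumption \ref{asm2.2}(ii) is invoked only where it actually applies, namely to $U^{+}(V^{\hat\pi_n}_T,S^{(n)})$ along attainable wealths when passing the utilities to the limit in the final inequality. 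If you want to salvage a dual argument, you would need to add hypotheses (growth/elasticity conditions on $U$ and integrability of dual quantities as in Lemma \ref{AE:2}) and prove a duality theorem for random path-dependent utilities, which is considerably more than the proposition assumes.
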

We now combine the statements of the above propositions and state them as the main theorem of our paper:
\begin{thm}\label{main}
Under Assumptions \ref{asm2.1}--\ref{asm2.2},\ref{asm5.1},\ref{asm4.1},\ref{asm4.2} we have
\begin{equation}\label{2.result}
u(x)=\lim_{n\rightarrow\infty} u_n(x), \ \ \forall x>0.
\end{equation}
\end{thm}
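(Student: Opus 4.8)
The plan is to obtain Theorem \ref{main} as an immediate consequence of the two semi--continuity results, Propositions \ref{prop5.1} and \ref{prop4.2}, once one checks that the hypotheses of the theorem subsume the hypotheses of both propositions. The genuine work is entirely contained in those two propositions (the weak--approximation--of--admissible--integrals argument of Section \ref{sec:5} for the lower bound, and the optional--decomposition argument of Section \ref{sec:4} for the upper bound); the present statement only splices them together.

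First I would record the assumption bookkeeping. The hypotheses of Theorem \ref{main} are Assumptions \ref{asm2.1}--\ref{asm2.2} together with \ref{asm5.1}, \ref{asm4.1}, \ref{asm4.2}; in particular this includes the monotonicity and integrability of $U$ (Assumption \ref{asm2.1}), the continuity of $u$ (Assumption \ref{asm2.1+}), and both uniform integrability conditions (Assumption \ref{asm2.2}(i) and (ii)). The lower--semicontinuity Proposition \ref{prop5.1} requires only Assumptions \ref{asm2.1}--\ref{asm2.1+}, \ref{asm2.2}(i) and \ref{asm5.1}, all of which are present, so it yields $u(x)\le\liminf_{n\rightarrow\infty}u_n(x)$. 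The upper--semicontinuity Proposition \ref{prop4.2} requires only Assumption \ref{asm2.1}(i), \ref{asm2.2}(ii), \ref{asm4.1} and \ref{asm4.2}, again all present, so it yields $u(x)\ge\limsup_{n\rightarrow\infty}u_n(x)$.

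Combining the two inequalities gives
$$\limsup_{n\rightarrow\infty}u_n(x)\le u(x)\le\liminf_{n\rightarrow\infty}u_n(x)\le\limsup_{n\rightarrow\infty}u_n(x),\qquad\forall x>0,$$
whence all four quantities coincide. In particular $\liminf_n u_n(x)=\limsup_n u_n(x)$, so the limit $\lim_{n\rightarrow\infty}u_n(x)$ exists and equals $u(x)$, which is exactly \eqref{2.result}. Since no genuine analytic difficulty arises at this stage, the only point I would watch is that both halves of Assumption \ref{asm2.2} are genuinely consumed --- part (i) drives the lower bound and part (ii) the upper bound --- so that neither proposition is invoked with a weaker hypothesis set than it actually needs; the substantive obstacles all live inside Propositions \ref{prop5.1} and \ref{prop4.2} themselves.
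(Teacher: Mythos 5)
Your proposal is correct and matches the paper exactly: the paper's proof of Theorem \ref{main} is the one-line deduction from Propositions \ref{prop5.1} and \ref{prop4.2}, with the same hypothesis bookkeeping you carry out. Your added verification that the theorem's assumptions subsume those of each proposition is sound and merely makes explicit what the paper leaves implicit.
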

\begin{proof}
Follows from Proposition \ref{prop5.1} and Proposition \ref{prop4.2}.
\end{proof}
\begin{rem}
Observe that in view of Assumption \ref{asm2.2} we have
$$-\infty<\lim\inf_{n\rightarrow\infty} u_n(x)\leq \lim\sup_{n\rightarrow\infty} u_n(x)<\infty, \ \ \forall x>0.$$
We conclude that the joint value in (\ref{2.result}) is finite.
\end{rem}
Next, we establish the weak convergence for the optimal terminal wealths.
\begin{thm}\label{thm2.2}
Assume that Assumptions \ref{asm2.1}--\ref{asm2.2},\ref{asm5.1},\ref{asm4.1},\ref{asm4.2} hold true.
Moreover, assume that for any $s\in \mathbb D([0,T];\mathbb R^d)$ the function $U(\cdot,s)$ is strictly concave.
Let $x>0$
and $\hat\pi_n\in\mathcal{A}_n(x)$, $n\in\mathbb N$ be a sequence of asymptotically optimal portfolios, namely
\begin{equation}\label{2.asp}
\lim_{n\rightarrow\infty} \left(u_n(x)-\mathbb E_{\mathbb P_n}[U(V^{\hat\pi_n}_T,S^{(n)})]\right)=0.
\end{equation}
Then
$$\left(S^{(n)},V^{\hat\pi_n}_T\right)\Rightarrow \left(S,V^{\hat \pi}_T\right),$$
where $\hat \pi\in\mathcal A(x)$ is the unique portfolio that satisfies
$u(x)=\mathbb E_{\mathbb P}[U(V^{\hat\pi}_T,S)]$.
\end{thm}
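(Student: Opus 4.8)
The plan is to run a Prokhorov-type subsequence argument, identifying the weak limit of $V^{\hat\pi_n}_T$ through the dual (state-price-density) description of attainable wealths, and to use strict concavity both to pin the limit down and to eliminate any spurious randomness. Throughout I use that Theorem~\ref{main} already gives $u_n(x)\to u(x)$, so by \eqref{2.asp} also $\mathbb E_{\mathbb P_n}[U(V^{\hat\pi_n}_T,S^{(n)})]\to u(x)$, and that strict concavity makes an optimal terminal wealth in the limit model, once shown to exist, unique. First I would prove tightness of $(S^{(n)},V^{\hat\pi_n}_T)$ on $\mathbb D([0,T];\mathbb R^d)\times\mathbb R_+$. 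Tightness of the first marginal is immediate from $S^{(n)}\Rightarrow S$. For the second, fix one $\mathbb Q\in\mathcal M(S)$ and the approximating $\mathbb Q_n\in\mathcal M(S^{(n)})$ from Assumption~\ref{asm4.1}, and set $Z_n=d\mathbb Q_n/d\mathbb P_n$, $Z=d\mathbb Q/d\mathbb P$. Since each $V^{\hat\pi_n}$ is a nonnegative local $\mathbb Q_n$-martingale, hence a $\mathbb Q_n$-supermartingale, one has $\mathbb E_{\mathbb P_n}[Z_nV^{\hat\pi_n}_T]\le x$. Splitting on $\{Z_n\ge\delta\}$ and applying Markov's inequality gives $\mathbb P_n(V^{\hat\pi_n}_T>K)\le x/(\delta K)+\mathbb P_n(Z_n<\delta)$; as $Z_n\Rightarrow Z$ with $Z>0$ a.s. (equivalence of $\mathbb Q$ and $\mathbb P$), one has $\limsup_n\mathbb P_n(Z_n<\delta)\le\mathbb P(Z\le\delta)\to 0$ as $\delta\downarrow 0$, so choosing first $\delta$ and then $K$ makes the tail uniformly small.

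Next I would pass to the limit. By Prokhorov the triple $(S^{(n)},Z_n,V^{\hat\pi_n}_T)$ is tight, so along any subsequence there is a further subsequence converging in law, and by the Skorokhod representation theorem I realize it on a single space as almost sure convergence to $(\bar S,\bar Z,\bar\xi)$, with $(\bar S,\bar Z)\overset{d}{=}(S,Z)$ and $(\bar S,\bar\xi)\overset{d}{=}(S,\xi)$, where $\xi$ denotes the subsequential limit of $V^{\hat\pi_n}_T$. Since $\mathcal F=\mathcal F^S_T$, the density $Z$ is a measurable functional $\Phi(S)$, whence $\bar Z=\Phi(\bar S)$. Continuity of $U$ gives $U(V^{\hat\pi_n}_T,S^{(n)})\to U(\bar\xi,\bar S)$ a.s.; combining uniform integrability of $U^+$ (Assumption~\ref{asm2.2}(ii)) with Fatou's lemma for the nonnegative part $U^-$ yields $\mathbb E[U(\bar\xi,\bar S)]\ge\limsup_n\mathbb E_{\mathbb P_n}[U(V^{\hat\pi_n}_T,S^{(n)})]=u(x)$, while Fatou for the products gives $\mathbb E[\bar Z\bar\xi]\le\liminf_n\mathbb E_{\mathbb P_n}[Z_nV^{\hat\pi_n}_T]\le x$.

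It then remains to identify $\bar\xi$. Put $\xi':=\mathbb E[\bar\xi\mid\bar S]=\psi(\bar S)$; since $\bar Z=\Phi(\bar S)$, the tower property yields $\mathbb E[\bar Z\xi']\le x$, i.e. $\mathbb E_{\mathbb Q}[\psi(S)]\le x$, and as Assumption~\ref{asm4.1} is available for every $\mathbb Q\in\mathcal M(S)$ this holds for all of them. The optional decomposition theorem (as used for Proposition~\ref{prop4.2}) then yields $\tilde\pi\in\mathcal A(x)$ with $V^{\tilde\pi}_T\ge\psi(S)$, so by monotonicity $\mathbb E[U(\psi(S),S)]\le u(x)$. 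On the other hand, conditional Jensen (Assumption~\ref{asm4.2}) gives $\mathbb E[U(\psi(\bar S),\bar S)]\ge\mathbb E[U(\bar\xi,\bar S)]\ge u(x)$, so all of these inequalities are in fact equalities; strict concavity makes the conditional Jensen step an equality only when $\bar\xi=\psi(\bar S)$ a.s., so $\bar\xi$ is a function of $\bar S$ and $\psi(S)$ is an optimal terminal wealth, necessarily equal to $V^{\hat\pi}_T$ by uniqueness (which also supplies existence of $\hat\pi$). Hence every subsequential limit satisfies $(S,\xi)\overset{d}{=}(S,V^{\hat\pi}_T)$, and being subsequence-independent this gives $(S^{(n)},V^{\hat\pi_n}_T)\Rightarrow(S,V^{\hat\pi}_T)$.

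I expect the main obstacle to be that the weak limit $\bar\xi$ need not, a priori, be measurable with respect to $S$, so that neither super-replication duality (which is formulated on $\mathcal F^S_T$) nor the uniqueness of $\hat\pi$ applies to it directly. The conditioning step above is designed precisely to overcome this: projecting onto $\sigma(S)$ cannot decrease expected utility by Jensen, the projection is dominated by an admissible wealth via the dual bound, and strict concavity forces equality exactly when $\bar\xi$ already coincides with its projection. A secondary point requiring care is the tightness of $V^{\hat\pi_n}_T$, which must be extracted from the positivity of the limiting density together with the convergence of densities in Assumption~\ref{asm4.1}, rather than from any growth of $U$, which may well be bounded.
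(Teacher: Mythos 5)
Your proposal is correct and follows essentially the same route as the paper's proof: a subsequence/Prokhorov--Skorokhod argument using the $\mathbb Q_n$-supermartingale bound from Assumption~\ref{asm4.1}, identification of the limiting densities as measurable functionals of $S$, projection of the limit wealth onto $\sigma(S)$, super-replication of the projection via the optional decomposition theorem, and the chain of inequalities closed to equalities by Theorem~\ref{main} and asymptotic optimality, with strict concavity (conditional Jensen) and strict monotonicity pinning down $V^{\hat\pi}_T$. The only cosmetic difference is that you prove tightness of $V^{\hat\pi_n}_T$ directly by splitting on $\{Z_n\geq\delta\}$, whereas the paper establishes tightness of the product $V^{\hat\pi_n}_T\,\frac{d\mathbb Q_n}{d\mathbb P_n}$ and recovers $V^{\hat\pi_n}_T$ by dividing by the a.s.\ positive limit $Z$; both hinge on the same positivity of $Z$.
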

The proof of Theorem \ref{thm2.2} will be given in Section \ref{sec:4WEAK}.
\begin{rem}
It is well known (see Theorem 2.2 in \cite{KS:99})
that for a utility function which is strictly concave there exists a unique optimizer.
Although in \cite{KS:99} the authors do not consider a random utility, their argument can be without much effort extended to our setup.
\end{rem}
\begin{rem}\label{rem:integrand}
A natural question is whether
Theorem \ref{thm2.2}
can be applied for establishing a convergence result for optimal trading strategies (the integrands).
It seems that this problem is closely related to the
robustness of martingale representations which was studied in
\cite{JMP:00}.

The main result in \cite{JMP:00} (Theorem A) deals with case where the underlying processes are martingales with respect to the same filtration, this is not satisfied in our setup.
Section 4 in \cite{JMP:00} deals with the weak convergence setup,
however the obtained results are limited to some particular cases
for which there are explicit representation for the integrands.
\end{rem}

\subsection{On the verification of Assumption \ref{asm2.1+}
and Assumption \ref{asm2.2}(ii)}\label{sec:AE}
${}$\\
The following result provides a simple and quite general condition which implies Assumption \ref{asm2.1+}.
\begin{lem}\label{AE:1}
Assume that Assumption \ref{asm2.1} holds true and there exist continuous functions
$m_1,m_2:[0,1)\rightarrow\mathbb{R}_{+}$ with $m_1(0)=m_2(0)=0$ (modulus of continuity) and
a non-negative random variable $\zeta\in L^1(\Omega,\mathcal F,\mathbb P)$ such that for any $\lambda\in (0,1)$ and
$v>0$
$$
U((1-\lambda) v,S)\geq (1-m_1(\lambda)) U(v,S)-m_2(\lambda) \zeta.
$$
Then Assumption \ref{asm2.1+} holds true.
\end{lem}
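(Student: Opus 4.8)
The plan is to first reduce the claim to one-sided continuity of a monotone function, and then to extract from the hypothesis a single scaling estimate for $u$ that yields both one-sided bounds at once.

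First I would record that $u$ is non-decreasing: given $0<x\le y$ and $\pi=(x,\gamma)\in\mathcal A(x)$, the strategy $\tilde\pi=(y,\gamma)$ lies in $\mathcal A(y)$ because $V^{\tilde\pi}_t=(y-x)+V^\pi_t\ge V^\pi_t\ge 0$, and $V^{\tilde\pi}_T\ge V^\pi_T$ gives $U(V^{\tilde\pi}_T,S)\ge U(V^\pi_T,S)$ by Assumption \ref{asm2.1}(i); taking expectations and suprema yields $u(y)\ge u(x)$. Testing with the pure-cash strategy $\pi=(x,0)$ together with Assumption \ref{asm2.1}(ii) also gives $u(x)\ge\mathbb E_{\mathbb P}[U(x,S)]>-\infty$, so $u$ takes values in $(-\infty,\infty]$. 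Since a non-decreasing function is continuous at $x$ as soon as it is both left- and right-continuous there, it suffices to control $u$ along capitals of the form $(1-\lambda)x$ with $\lambda\downarrow 0$.

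The heart of the argument is the scaling estimate
$$u((1-\lambda)x)\ \ge\ (1-m_1(\lambda))\,u(x)-m_2(\lambda)\,\mathbb E_{\mathbb P}[\zeta],$$
valid for all $x>0$ and all $\lambda$ small enough that $m_1(\lambda)<1$ (such $\lambda$ exist since $m_1$ is continuous with $m_1(0)=0$). To prove it I would fix $\pi=(x,\gamma)\in\mathcal A(x)$ and use that $(1-\lambda)\pi:=((1-\lambda)x,(1-\lambda)\gamma)$ belongs to $\mathcal A((1-\lambda)x)$ with terminal wealth $(1-\lambda)V^\pi_T$. The hypothesis holds for each deterministic $v>0$ as an almost-sure inequality; intersecting the corresponding null sets over the countable set of rational $v$ and invoking the continuity of $U(\cdot,S)$ upgrades it to the pathwise statement $U((1-\lambda)V^\pi_T,S)\ge(1-m_1(\lambda))U(V^\pi_T,S)-m_2(\lambda)\zeta$ almost surely, with the random wealth $V^\pi_T$ now inserted for $v$. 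Taking expectations is legitimate because $\zeta\in L^1$ removes any $\infty-\infty$ ambiguity, while $1-m_1(\lambda)\ge 0$ lets the constant pass through the expectation and then through the supremum over $\mathcal A(x)$; this produces the displayed estimate.

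Finally I would read off continuity from the estimate together with monotonicity. For the left side, the estimate gives $\liminf_{\lambda\downarrow 0}u((1-\lambda)x)\ge u(x)$ (let $\lambda\downarrow 0$ and use $m_1,m_2\to 0$; when $u(x)=\infty$ one applies the estimate to an arbitrarily large test value of $\mathbb E_{\mathbb P}[U(V^\pi_T,S)]$), whereas monotonicity supplies the reverse inequality, so $u$ is left-continuous. For the right side, writing $x=(1-\lambda)y$ with $\lambda=(y-x)/y\downarrow 0$ as $y\downarrow x$ and rearranging the estimate gives $u(y)\le (u(x)+m_2(\lambda)\mathbb E_{\mathbb P}[\zeta])/(1-m_1(\lambda))\to u(x)$, hence $\limsup_{y\downarrow x}u(y)\le u(x)$, with monotonicity again supplying the reverse bound; the case $u(x)=\infty$ is immediate from monotonicity. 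The only genuinely delicate point is the measurability and integrability bookkeeping in the middle step—passing from the deterministic inequality to the random terminal wealth and then through an extended-real-valued expectation—so that is where I would spend the most care.
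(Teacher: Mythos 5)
Your proof is correct and takes essentially the same route as the paper's: both rest on the scaling bijection $(y,\{\gamma_t\}_{t=0}^T)\mapsto(\beta y,\{\beta\gamma_t\}_{t=0}^T)$ between $\mathcal A(y)$ and $\mathcal A(\beta y)$, which yields the estimate $u((1-\lambda)x)\geq(1-m_1(\lambda))\,u(x)-m_2(\lambda)\,\mathbb E_{\mathbb P}[\zeta]$, combined with the monotonicity of $u$ to squeeze the one-sided limits together (the paper phrases this as $\lim_{\alpha\downarrow 0}u((1-\alpha)x)\geq\lim_{\alpha\downarrow 0}u((1+\alpha)x)$ rather than splitting into left- and right-continuity). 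Your extra bookkeeping --- intersecting null sets over rational $v$ before inserting the random terminal wealth, and handling the extended-real expectations and the $u(x)=\infty$ case --- only fills in details the paper leaves implicit.
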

\begin{proof}
In view of the fact that $u$ is a non-decreasing function (follows from Assumption \ref{asm2.1}(i))
it sufficient to prove that for any $x>0$
\begin{equation*}
\lim_{\alpha\downarrow 0} u((1-\alpha)x)\geq \lim_{\alpha\downarrow 0} u((1+\alpha)x).
\end{equation*}
For any $\beta, y>0$
the map $(y,\{\gamma_t\}_{t=0}^T)\rightarrow (\beta y, \{\beta\gamma_t\}_{t=0}^T)$ is a bijection between $\mathcal A(y)$ and
$\mathcal A(\beta y)$. Thus,
\[
\begin{split}
&\lim_{\alpha\downarrow 0}u((1-\alpha)x) \\
&\geq\lim_{\alpha\downarrow 0} \left(\left(1-m_1\left(1-\frac{1-\alpha}{1+\alpha}\right)\right) u((1+\alpha)x)-
m_2\left(1-\frac{1-\alpha}{1+\alpha}\right)\mathbb E_{\mathbb P}[\zeta]\right)\\
&=\lim_{\alpha\downarrow 0} u((1+\alpha)x).
\end{split}
\]
\end{proof}
\begin{rem}\label{AE:new}
We notice that the power and the log utility satisfy the assumptions of Lemma \ref{AE:1}.
On the other hand for these utility functions
Assumption \ref{asm2.1+} is straightforward.

A ``real" application of Lemma \ref{AE:1} is the case which corresponds to the utility function given by
(\ref{5.utility}). In this case,
if $v\geq \frac{S_T}{1-\lambda}$ then
$U((1-\lambda) v,S)=U(v,S)=0$. If $v<\frac{S_T}{1-\lambda}$ then
$|U((1-\lambda) v,S)-U(v,S)|\leq \lambda v\leq \frac{\lambda}{1-\lambda} S_T$.
Thus, for
$m_1(\lambda):= 0$, $m_2(\lambda):=\frac{\lambda}{1-\lambda}$ and $\zeta:=S_T$ the assumptions of Lemma \ref{AE:1} hold true
(provided that $\mathbb E_{\mathbb P}[S_T]<\infty$).
\end{rem}
Next, we treat Assumption \ref{asm2.2}(ii).
\begin{lem}\label{AE:2}
Suppose there exist constants $C>0$, $0<\gamma<1$ and $q>\frac{1}{1-\gamma}$ which satisfy the following. \\
(I) For all $(v,s)\in (0,\infty) \times\mathbb D([0,T];\mathbb R^d)$,
\begin{equation}\label{bound}
U(v,s)\leq C (1+v^{\gamma}).
\end{equation}
(II) For any $n\in\mathbb N$ there exists
a local martingale measure ${\mathbb Q}_n\in\mathcal M(S^{(n)})$ such that
\begin{equation}\label{boundmeasure}
\sup_{n\in\mathbb N}\mathbb E_{{{\mathbb Q}}_n}\left[\left(\frac{d{\mathbb P}_n}{d{\mathbb Q}_n}\right)^{q}\right]<\infty.
\end{equation}
Then Assumption \ref{asm2.2}(ii) holds true.
\end{lem}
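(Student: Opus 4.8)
The plan is to show that the family $\{U^{+}(V^{\pi_n}_T,S^{(n)})\}$ is bounded in $L^{p}(\mathbb P_n)$ for some $p>1$, uniformly in $n$ and in $\pi_n\in\mathcal A_n(x)$; since boundedness in $L^{p}$ with $p>1$ implies uniform integrability, this yields Assumption \ref{asm2.2}(ii). First I would use hypothesis (I): as $C(1+v^{\gamma})>0$, the growth bound gives $U^{+}(v,s)\le C(1+v^{\gamma})$, and therefore, using $(1+a)^{p}\le 2^{p-1}(1+a^{p})$ for $a\ge 0$,
$$\left(U^{+}(V^{\pi_n}_T,S^{(n)})\right)^{p}\le C^{p}2^{p-1}\left(1+(V^{\pi_n}_T)^{\gamma p}\right).$$
Thus the whole matter reduces to producing a $p>1$ for which $\sup_{n,\pi_n}\mathbb E_{\mathbb P_n}\big[(V^{\pi_n}_T)^{\gamma p}\big]<\infty$.

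To control this moment I would pass to the martingale measure $\mathbb Q_n$ supplied by hypothesis (II). Because $S^{(n)}$ is a local martingale under $\mathbb Q_n$ and $\gamma^{(n)}$ is $S^{(n)}$-integrable, the value process $V^{\pi_n}$ is a local martingale under $\mathbb Q_n$; being non-negative by admissibility, it is in fact a $\mathbb Q_n$-supermartingale, so $\mathbb E_{\mathbb Q_n}[V^{\pi_n}_T]\le V^{\pi_n}_0=x$. Writing $Z_n:=\frac{d\mathbb P_n}{d\mathbb Q_n}$ and changing measure, Hölder's inequality with conjugate exponents $1/a+1/b=1$ gives
$$\mathbb E_{\mathbb P_n}\big[(V^{\pi_n}_T)^{\gamma p}\big]=\mathbb E_{\mathbb Q_n}\big[Z_n\,(V^{\pi_n}_T)^{\gamma p}\big]\le \big(\mathbb E_{\mathbb Q_n}[Z_n^{a}]\big)^{1/a}\big(\mathbb E_{\mathbb Q_n}[(V^{\pi_n}_T)^{\gamma p b}]\big)^{1/b},$$
and I would choose $b=1/(\gamma p)$ so that the last factor collapses to $\big(\mathbb E_{\mathbb Q_n}[V^{\pi_n}_T]\big)^{1/b}\le x^{1/b}$, which is uniformly bounded.

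With this choice the conjugate exponent is $a=1/(1-\gamma p)$, and it remains to bound $\mathbb E_{\mathbb Q_n}[Z_n^{a}]$ uniformly in $n$. Since $(\Omega_n,\mathbb Q_n)$ is a probability space, the $L^{a}(\mathbb Q_n)$ norm is dominated by the $L^{q}(\mathbb Q_n)$ norm whenever $a\le q$, so it suffices to arrange $a\le q$, i.e. $\frac{1}{1-\gamma p}\le q$. This rearranges to $p\le\frac{q-1}{\gamma q}$, which simultaneously forces $\gamma p<1$ (hence $b>1$, as needed). A value $p>1$ satisfying $p\le\frac{q-1}{\gamma q}$ exists precisely when $\frac{q-1}{\gamma q}>1$, that is when $q>\frac{1}{1-\gamma}$ — exactly the standing hypothesis. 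Fixing any such $p$ and invoking (II) gives $\sup_n\mathbb E_{\mathbb Q_n}[Z_n^{a}]<\infty$, which closes the estimate.

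The only genuinely substantive point is the supermartingale inequality $\mathbb E_{\mathbb Q_n}[V^{\pi_n}_T]\le x$: it is what converts the admissibility constraint $V^{\pi_n}\ge 0$ into a usable moment bound under the dual measure, and it rests on the standard fact that a non-negative local martingale is a supermartingale (Fatou along a localizing sequence). Everything else is bookkeeping with Hölder's inequality, and the role of the hypothesis $q>\frac{1}{1-\gamma}$ is exactly to leave enough room to select the integrability exponent $p$ strictly above $1$.
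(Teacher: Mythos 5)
Your proposal is correct and takes essentially the same route as the paper: reduce the uniform integrability to a uniform $\mathbb{P}_n$-moment bound on $V^{\pi_n}_T$ via the growth condition (I), use the non-negativity of the wealth process to get the $\mathbb{Q}_n$-supermartingale bound $\mathbb{E}_{\mathbb{Q}_n}[V^{\pi_n}_T]\le x$, and close with a change of measure plus H\"older against the $q$-th moment of $\frac{d\mathbb{P}_n}{d\mathbb{Q}_n}$. The paper fixes the conjugate pair from the outset by setting $p=\frac{q}{q-1}$ and bounding $\mathbb{E}_{\mathbb{P}_n}[(V^{\pi_n}_T)^{1/p}]$ with $\frac{1}{p}>\gamma$, which corresponds exactly to the extremal choice of your exponent and makes your intermediate Jensen step $\|Z_n\|_{L^a(\mathbb{Q}_n)}\le\|Z_n\|_{L^q(\mathbb{Q}_n)}$ unnecessary.
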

\begin{proof}
Let $p=\frac{q}{q-1}$.  Clearly $\frac{1}{p}>\gamma$. Thus in view of (\ref{bound}), in order to prove that Assumption \ref{asm2.2}(ii)
holds true,
it suffices to show that for any $x>0$
$$\sup_{n\in\mathbb N}\sup_{\pi_n\in\mathcal A_n(x)}\mathbb E_{\mathbb P_n}[(V^{\pi_n}_T)^{1/p}]<\infty .$$
For any $n\in\mathbb N$ and $\pi_n\in\mathcal A_n(x)$, $\{V^{\pi_n}_t\}_{t=0}^T$ is a $\mathbb Q_n$ super--martingale. Hence,
from the Holder inequality (observe that $\frac{1}{p}+\frac{1}{q}=1$) we get
\begin{equation*}
\begin{split}
\sup_{n\in\mathbb N}&\sup_{\pi_n\in\mathcal A_n(x)}\mathbb E_{\mathbb P_n}[(V^{\pi_n}_T)^{1/p}]\\
&=\sup_{n\in\mathbb N}\sup_{\pi_n\in\mathcal A_n(x)}\mathbb E_{{\mathbb Q}_n}\left [(V^{\pi_n}_T)^{1/p}\frac{d\mathbb P_n}{d{\mathbb Q}_n}\right]\\
&\leq \sup_{n\in\mathbb N}\sup_{\pi_n\in\mathcal A_n(x)}
(\mathbb E_{{\mathbb Q}_n}[V^{\pi_n}_T])^{1/p}\sup_{n\in\mathbb N}\left(\mathbb E_{{\mathbb Q}_n}\left[\left(\frac{d\mathbb P_n}
{d{\mathbb Q}_n}\right)^q\right]\right)^{1/q}\\
&\leq x^{1/p}\sup_{n\in\mathbb N}\left(\mathbb E_{{\mathbb Q}_n}\left[\left(\frac{d\mathbb P_n}
{d{\mathbb Q}_n}\right)^q\right]\right)^{1/q}<\infty,
\end{split}
\end{equation*}
and the result follows.
\end{proof}

\section{The necessity of Assumptions \ref{asm5.1},\ref{asm4.1},\ref{asm4.2}}\label{sec:nec}\setcounter{equation}{0}

\subsection{On the necessity of Assumption \ref{asm5.1}}\label{sec:necas3}
Let us explain by example why Assumption \ref{asm5.1} is essential for the lower semi--continuity to hold.
\begin{exm}

\textbf{Naive discretization does not work}.\\
Let $d=1$.
Consider a random utility which corresponds to shortfall risk minimization for a call option with strike price
$K>0$. Namely, we set
\begin{equation}\label{5.utility}
U(v,s):=-((s_T-K)^{+}-v)^{+}.
\end{equation}
We have,
$$u(x)=-\inf_{\pi\in\mathcal A(x)} \mathbb E_{\mathbb P}\left[\left((S_T-K)^{+}-V^\pi_T\right)^{+}\right].$$

Consider the Black--Scholes model
\begin{equation*}
S_t=S_0 e^{\sigma W_t-\sigma^2 t/2}, \ \ t\in [0,T]
\end{equation*}
where $\sigma>0$ is a constant volatility and $W=\{W_t\}_{t=0}^T$ is a Brownian motion (under $\mathbb P$).

We take the naive discretization and define the processes $S^{(n)}$, $n\in\mathbb N$, by
\begin{equation*}
S^{(n)}_t:=S_{\frac{kT}{n}}, \ \ k T/n\leq t<(k+1)T/n.
\end{equation*}
Let $\mathcal F^{(n)}$ the usual filtration which is generated by $S^{(n)}$. Namely,
\begin{equation*}
\mathcal F^{(n)}_t:=\sigma\left\{S_{\frac{T}{n}},...,S_{\frac{kT}{n}},\mathcal N\right\}, \ \ k T/n\leq t<(k+1)T/n
\end{equation*}
where $\mathcal N$ is the collection of all null sets. We also set
$\mathbb P_n:=\mathbb P$.

It is easy to see that $S^{(n)}\Rightarrow S$ and Assumptions \ref{asm2.1}--\ref{asm2.2} hold true (for Assumption \ref{asm2.1+} see Remark \ref{AE:new}).

Next, we check Assumption \ref{asm5.1}. Fix $n$. Recall the processes $D^{(n)},J^{(n)}$ from Assumption \ref{asm5.1}.
First, observe that if $J^{(n)}$ is an adapted left continuous process, then for all $k<n$
$J^{(n)}_{\frac{(k+1)T}{n}}$ is $\mathcal F^{(n)}_{\frac{kT}{n}}$ measurable.
Notice that for or all $k<n$,
$$ess \sup\left(S^{(n)}_{\frac{(k+1)T}{n}}-S^{(n)}_{\frac{kT}{n}}|\mathcal F^{(n)}_{\frac{kT}{n}}\right)=\infty \ \mbox{a.s.}$$
As usual $ess \sup(Y|\mathcal G)$ is the minimal random variable (which may take the value $\infty$)
that is $\mathcal G$ measurable and $\geq Y$ a.s.
These two simple observations yield that
there is no (finite) adapted left continuous process $\{J^{(n)}_t\}_{t=0}^T$
which satisfy
$J^{(n)}_{\frac{(k+1)T}{n}}\geq D^{(n)}_{\frac{(k+1)T}{n}}$. Thus,
Assumption \ref{asm5.1} is not satisfied.

In \cite{M:11} (see Section 6.1.2) it was proved that for the processes $S^{(n)}$, $n\in\mathbb N$ defined above
and the initial capital $x:=\mathbb E_{\mathbb P}[(S_T-K)^{+}]$ (i.e. the Black--Scholes price)
we have
$$\lim\inf_{n\rightarrow\infty} \inf_{\pi_n\in\mathcal A_n(x)} \mathbb E_{\mathbb P}\left[\left((S_T-K)^{+}-V^{\pi_n}_T\right)^{+}\right]>0.
$$
Clearly, the fact that $x$ is the Black--Scholes price implies that
$$\inf_{\pi\in\mathcal A(x)} \mathbb E_{\mathbb P}\left[\left((S_T-K)^{+}-V^\pi_T\right)^{+}\right]=0.$$
We get $$u(x)=0> \lim\sup_{n\rightarrow\infty} u_n(x),$$
and as a result
Proposition \ref{prop5.1} does not hold true.
\end{exm}
\begin{exm}\label{exm.bin}
\textbf{Discrete approximations with vanishing growth rates do work}.\\
Consider a setup where for any $n$, $S^{(n)}$ is a pure jump process of the form
$$S^{(n)}_t=\sum_{i=1}^{m_n} S^{(n)}_{\tau^{(n)}_i}\mathbb I_{\tau^{(n)}_i\leq t<\tau^{(n)}_{i+1}}+S^{(n)}_T\mathbb{I}_{t=T}$$
where $m_n\in\mathbb N$ and
$0=\tau^{(n)}_1<\tau^{(n)}_2<...<\tau^{(n)}_{m_n+1}=T$ are stopping times with respect to $\{\mathcal F^{(n)}_t\}_{t=0}^T$.

Assume that there exists a deterministic sequence
$a_n>0$, $n\in\mathbb N$ such that $\lim_{n\rightarrow\infty} a_n=0$ and
$$|S^{(n)}_{\tau^{(n)}_{i+1}}-S^{(n)}_{\tau^{(n)}_{i}}|\leq a_n | S^{(n)}_{\tau^{(n)}_{i}}| \ \ \mbox{a.s}, \  \ \forall i,n. $$
Then Assumption \ref{asm5.1} holds true with the processes
$$J^{(n)}_t:=a_n \left(\sum_{i=1}^{m_n} \max_{1\leq j\leq i} |S^{(n)}_{\tau^{(n)}_j}| \ \mathbb I_{\tau^{(n)}_i<t\leq\tau^{(n)}_{i+1}}\right), \ \ n\in\mathbb N.$$

In other words, if the growth rates go to zero uniformly then Assumption \ref{asm5.1} holds true.
This is exactly the
case for binomial approximations
of diffusion models with bounded volatility.
\end{exm}

\subsection{On the necessity of Assumption~\ref{asm4.1}}\label{sec:necofas4f}

A natural question to ask is whether Assumption \ref{asm4.1} can be replaced by a simpler one.

In
\cite{HS:98} the authors
analyzed when weak convergence implies the convergence of option prices.
Roughly speaking, the main result was
that under contiguity
properties of the sequences of physical measures
with respect to the martingale measures there
is a convergence of
prices of derivative securities. The contiguity assumption (for the exact definition see \cite{HS:98})
is simpler than Assumption \ref{asm4.1} and deals only with the approximating sequence. The main advantage of such assumption that
it does not require establishing weak convergence (unlike Assumption \ref{asm4.1}).
 However, this classical result assumes that the limit model is complete.
In general, in incomplete markets ``strange phenomena" can happen as we will demonstrate in Example \ref{exm4.2}.

In Example \ref{exm4.2} we construct a sequence of binomial (discrete) martingales $S^{(n)}$ considered with their natural filtrations
that converge weakly to a continuous martingale $S$ (the contiguity assumption trivially holds true).
Surprisingly, the limiting model, which is given by the martingale $S$,
is a fully incomplete market (see Definition 2.1 in \cite{DN:18})
and the set of all equivalent martingale measures is dense in the set of all martingale measures
(for a precise formulation see Lemma 8.1 in \cite{DN:18}).
We use this construction to illustrate that Assumption \ref{asm4.1} is the ``right" assumption to make.

The cornerstone of our construction is the following result which was established in \cite{CHS:16} (see Theorem 8 there). For the reader's
convenience we provide a short self-contained proof.
\begin{lem}\label{lem.kais}
Let $\xi_i=\pm 1$, $i\in\mathbb N$ be i.i.d. and symmetric. Define the processes
$W^{(n)}_t,\hat W^{(n)}_t$, $t\in [0,T]$ by
\begin{eqnarray*}
&W^{(n)}_t:=\sqrt\frac{T}{n}\sum_{i=1}^k \xi_i, \ \ \frac{kT}{n}\leq t<\frac{(k+1) T}{n},\\
&\hat W^{(n)}_t:=\sqrt\frac{T}{n}\sum_{i=1}^k \prod_{j=1}^i \xi_j, \ \ \frac{kT}{n}\leq t<\frac{(k+1) T}{n}
\end{eqnarray*}
where $\sum_{i=1}^0\equiv 0$.
Then, we
have the weak convergence $$(W^{(n)},\hat W^{(n)})\Rightarrow (W,\hat W),$$
where $W$ and $\hat W$ are independent Brownian motions.
\end{lem}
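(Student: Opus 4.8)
The plan is to recognize $W^{(n)}$ and $\hat W^{(n)}$ as rescaled martingale random walks driven by the same signs and to invoke a functional martingale central limit theorem, thereby reducing the whole statement to a computation of the limiting predictable covariation matrix. Write $\eta_i:=\prod_{j=1}^i\xi_j$ with $\eta_0:=1$, so that the (unscaled) increments of $W^{(n)}$ are the $\xi_i$ and those of $\hat W^{(n)}$ are the $\eta_i$, and set $\mathcal G_k:=\sigma(\xi_1,\dots,\xi_k)$. The first observation I would record is that the map $(\xi_1,\dots,\xi_m)\mapsto(\eta_1,\dots,\eta_m)$ is a bijection of $\{-1,1\}^m$ (its inverse is $\xi_i=\eta_{i-1}\eta_i$) preserving the uniform law; hence $(\eta_i)_{i\ge 1}$ is again an i.i.d.\ sequence of symmetric signs. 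In particular each of $W^{(n)}\Rightarrow W$ and $\hat W^{(n)}\Rightarrow\hat W$ holds marginally by Donsker's theorem, so the genuine content is the \emph{joint} convergence together with the asymptotic independence of the two limits.

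Next I would check that, relative to $(\mathcal G_k)$, both $(\xi_i)$ and $(\eta_i)$ are martingale difference sequences, since $\mathbb E[\xi_i\mid\mathcal G_{i-1}]=0$ and $\mathbb E[\eta_i\mid\mathcal G_{i-1}]=\eta_{i-1}\,\mathbb E[\xi_i]=0$. Thus the rescaled pair $(W^{(n)},\hat W^{(n)})$ is a two–dimensional martingale with uniformly small jumps of size $\sqrt{T/n}$, so the conditional Lindeberg condition holds trivially (for large $n$ every increment is below any fixed $\varepsilon$). It then remains to identify the limit of the three entries of the predictable covariation on $[0,T]$. Because $\xi_i^2=\eta_i^2=1$, the diagonal entries are
\[
\langle W^{(n)}\rangle_t=\langle\hat W^{(n)}\rangle_t=\frac{T}{n}\lfloor nt/T\rfloor\longrightarrow t .
\]

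The crux is the off–diagonal entry. Using $\xi_i\eta_i=\xi_i^2\eta_{i-1}=\eta_{i-1}$ we get $\mathbb E[\xi_i\eta_i\mid\mathcal G_{i-1}]=\eta_{i-1}$, so
\[
\langle W^{(n)},\hat W^{(n)}\rangle_t=\frac{T}{n}\sum_{i=1}^{\lfloor nt/T\rfloor}\eta_{i-1}.
\]
The key point is that $\sum_{i=1}^{m}\eta_{i-1}=\sum_{i=0}^{m-1}\eta_i$ is itself a symmetric random walk (by the bijection above) and is therefore of order $\sqrt n$: Doob's maximal inequality gives $\mathbb E\big[\max_{m\le n}(\sum_{i=0}^{m-1}\eta_i)^2\big]\le 4n$, whence
\[
\sup_{0\le t\le T}\big|\langle W^{(n)},\hat W^{(n)}\rangle_t\big|\le\frac{T}{n}\max_{1\le m\le n}\Big|\sum_{i=0}^{m-1}\eta_i\Big|=O_{\mathbb P}(n^{-1/2})\longrightarrow 0
\]
in probability. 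Consequently the limiting covariation matrix is $t\,I_2$, and the functional martingale CLT (for instance Theorem~7.1.4 in Ethier--Kurtz, or the limit theorems of Jacod--Shiryaev) yields $(W^{(n)},\hat W^{(n)})\Rightarrow(W,\hat W)$, a two–dimensional Brownian motion with covariance $t\,I_2$, i.e.\ a pair of \emph{independent} Brownian motions. The one place requiring genuine care, and what I expect to be the main obstacle, is exactly this vanishing of the off–diagonal covariation: the conditional cross term does not vanish pointwise (it equals $\eta_{i-1}=\pm1$), and one must instead exploit the cancellation arising from the fact that its partial sums form a random walk and are thus negligible after division by $n$.
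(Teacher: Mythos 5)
Your proposal is correct and follows essentially the same route as the paper: both invoke a functional martingale CLT (the paper cites Whitt's Theorem 2.1, you cite Ethier--Kurtz/Jacod--Shiryaev), verify the trivial jump and diagonal-bracket conditions, and reduce the whole statement to the vanishing of the cross-variation via the identity $\xi_i\eta_i=\eta_{i-1}$ together with the observation that $(\eta_i)$ is again an i.i.d.\ symmetric sign sequence. The only cosmetic difference is that you control $\sup_t|\langle W^{(n)},\hat W^{(n)}\rangle_t|$ uniformly by Doob's maximal inequality, whereas the paper bounds the second moment $\mathbb{E}\bigl[([W^{(n)},\hat W^{(n)}]_t)^2\bigr]\le Tt/n$ pointwise in $t$ --- both suffice, and note that here the optional and predictable cross brackets coincide since $\xi_i\eta_i$ is $\mathcal G_{i-1}$-measurable.
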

\begin{proof}
We apply the martingale invariance principle given by Theorem 2.1 in \cite{W:07}.
For any $n$ define the filtration $\{\mathcal G^{(n)}_t\}_{t=0}^T$ by
$\mathcal G^{(n)}_t=\sigma\{\xi_1,...,\xi_k\}$ for $k T/n\leq t<(k+1)T/n$. Observe that
$W^{(n)},\hat W^{(n)}$ are martingales with respect to the filtration $\mathcal G^{(n)}$. Thus it remains to establish (2)--(3) in \cite{W:07}.
Clearly,
$$\sup_{0\leq t\leq T} |W^{(n)}_{t}-W^{(n)}_{t-}|=\sup_{0\leq t\leq T} |\hat W^{(n)}_{t}-\hat W^{(n)}_{t-}|=\sqrt\frac{T}{n},$$
and so the maximal jump size goes to zero as $n\rightarrow\infty$. Moreover,
$[W^{(n)}]_t= [\hat {W}^{(n)}]_t=k T/n$ for $k T/n\leq t<(k+1)T/n$. Thus,
$[W^{(n)}]_t\rightarrow t$
and $[\hat W^{(n)}]_t\rightarrow t$ as $n\rightarrow\infty$.

It remains to show that for all $t\in [0,T]$
\begin{equation}\label{large}
[W^{(n)},\hat W^{(n)}]_t\rightarrow 0 \ \mbox{in} \ \mbox{probability}.
\end{equation}
Indeed, let $n\in\mathbb N$ and $k T/n\leq t<(k+1)T/n$. Clearly,
\begin{eqnarray*}
[W^{(n)},\hat W^{(n)}]_t=\frac{T}{n}\sum_{i=1}^k \prod_{j=1}^{i-1} \xi_j, \ \ \frac{kT}{n}\leq t<\frac{(k+1) T}{n},
\end{eqnarray*}
where $\prod_{i=1}^0\equiv 1$.
Observe that the random variables $\prod_{j=1}^{m} \xi_j=\pm 1$, $m\in\mathbb N$ are i.i.d. and symmetric. Thus,
$$\mathbb E\left(\left([W^{(n)},\hat W^{(n)}]_t\right)^2\right)=\frac{T^2}{n^2} k\leq \frac{T t}{n}$$
and (\ref{large}) follows.
This completes the proof.
\end{proof}
\begin{exm}\label{exm4.2}
\textbf{Binomial models can converge weakly to fully incomplete markets.}\\
Let $d=1$.
For any $n\in\mathbb N$ define the stochastic processes
$\{\nu^{(n)}_t\}_{t=0}^T$ and $\{S^{(n)}_t\}_{t=0}^T$ by
\begin{eqnarray*}
&\nu^{(n)}_t:=\prod_{i=1}^k \left(1+\sqrt\frac{T}{n}\xi_i\right), \ \ \frac{kT}{n}\leq t<\frac{(k+1)T}{n},\\
&S^{(n)}_t:=\prod_{i=1}^k \left(1+\min(\nu^{(n)}_{\frac{(i-1)T}{n}},\ln n)\sqrt\frac{T}{n}\prod_{j=1}^i\xi_j\right), \ \ \frac{kT}{n}\leq t<\frac{(k+1)T}{n},
\end{eqnarray*}
where $\xi_i=\pm 1$, $i\in\mathbb N$ are i.i.d. and symmetric. Let $\mathbb P_n$ be the corresponding probability measure.

We assume that $n$ is sufficiently large so that $S^{(n)}$ and $\min(\nu^{(n)},\ln n)$ are strictly positive.
Let
$\mathcal F^{(n)}$ be the filtration which is generated by $S^{(n)}$,
\begin{equation*}
\mathcal F^{(n)}_t:=\sigma\left\{S_{\frac{T}{n}},...,S_{\frac{kT}{n}}\right\}, \ \ k T/n\leq t<(k+1)T/n.
\end{equation*}
Observe that $\mathcal F^{(n)}_t=\sigma\{\xi_1,...,\xi_k\}$ for $k T/n\leq t<(k+1)T/n$. Moreover,
the conditional support of
$supp \left(S^{(n)}_{\frac{(k+1)T}{n}}|S^{(n)}_{\frac{T}{n}},...,S^{(n)}_{\frac{kT}{n}}\right)$ consists of exactly two points, and so
the physical measure $\mathbb P_n$ is the unique martingale measure for $S^{(n)}$.

From
Theorems 4.3--4.4 in \cite{DP:92} and Lemma \ref{lem.kais} we obtain the weak convergence
$(S^{(n)},\nu^{(n)})\Rightarrow (S,\nu)$
where $(S,\nu)$ is the (unique strong) solution of the SDE
\begin{eqnarray}\label{4.700}
&dS_t=\nu_t S_t d\hat W_t, \ \ S_0=1\nonumber\\
&\\
&d\nu_t=\nu_t dW_t, \ \ \nu_0=1\nonumber
\end{eqnarray}
where $W$ and $\hat W$ are independent Brownian motions (under $\mathbb P$).

Namely, for the complete binomial models $S^{(n)}$, $n\in\mathbb N$ we have the weak convergence
$S^{(n)}\Rightarrow S$ where $S$ is the distribution of the stochastic volatility model given by (\ref{4.700}). This is a specific case of the
Hull--White model which was introduced in \cite{HW:87}.
From Theorem 3.3 in \cite{S:98} it follows that $\{S_t\}_{t=0}^T$ is a true martingale.
Hence,
$$\mathbb E_{\mathbb P}[S_T]=S_0=1=S^{(n)}_0=\mathbb E_{\mathbb P_n}[S^{(n)}_T].$$
This together with Theorem 3.6 in \cite{B:68} gives that the random variables $\{S^{(n)}_T\}_{n\in\mathbb N}$
are uniformly integrable.

Let us observe that Assumption \ref{asm4.1} does not hold true.
Indeed, for any $n$ we have the equality
$\mathcal M(S^{(n)})=\left\{\mathbb P_n\right\}$.
Hence, $\left(\left(S,1\right);\mathbb P\right)$
is the only
cluster point for the distributions
$\left(\left(S^{(n)},\frac{d\mathbb Q_n}{d\mathbb P_n}\right);\mathbb P_n\right)$, $\mathbb Q_n\in\mathcal M(S^{(n)})$.
Since the set $\mathcal M(S)$ is not a singleton then clearly Assumption \ref{asm4.1} is not satisfied.

Next, let $K>0$. Consider a call option with strike price $K$ and the utility function given by (\ref{5.utility}).
Obviously, Assumption \ref{asm2.1}(i) and Assumption \ref{asm2.2}(ii) ($U^{+}\equiv 0$) are satisfied.
We want to demonstrate that Proposition \ref{prop4.2} does not hold true.

For any $n\in\mathbb N$ let
$\mathbb V_n$ be the unique arbitrage free price of the above call option in the (complete) model given by $S^{(n)}$.
From the weak convergence $S^{(n)}\Rightarrow S$ and the uniform integrability of $\{S^{(n)}_T\}_{n\in\mathbb N}$ we get
$$
\lim_{n\rightarrow\infty} \mathbb V_n=\mathbb E_{\mathbb P}\left[\left(S_T-K\right)^{+}\right]<S_0=1.
$$
In particular there exists $\epsilon>0$ such that for sufficiently large $n$ we have $\mathbb  V_n<1-\epsilon$. Thus,
$$
\lim_{n\rightarrow\infty} u_n(1-\epsilon)=0.
$$
On the other hand, the model given by $S$ is a fully incomplete market
(see Definition 2.1 and Example 2.5 in \cite{DN:18}).
In \cite{DN:18,N:18} it was proved that in fully incomplete markets the super--replication price
is prohibitively high and lead to buy--and--hold strategies.
Namely, the super--hedging price of a call option is equal to the initial stock price
$S_0=1$. Thus $u(1-\epsilon)<0$ and so
Proposition
\ref{prop4.2} does not hold.
\end{exm}

\subsection{On the necessity of Assumption~\ref{asm4.2}}\label{sec:necofas4s}

\begin{exm}\label{exm4.1}
\textbf{Non-concave utility.}\\
Let $d=1$. Assume that the investor utility function is given by
$$U(v,s):=\min(2,\max(v,1)),$$ and depends only on the wealth.
We notice that the function $U$ does not satisfy Assumption \ref{asm4.2}.

For any $n\in\mathbb N$ consider the binomial model given by
$$S^{(n)}_t:=\prod_{i=1}^k \left(1+\frac{\xi_i}{n^2}\right),  \ \ \frac{kT}{n}\leq t<\frac{(k+1)T}{n},$$
where $\xi_i=\pm 1$, $i\in\mathbb N$ are i.i.d. and symmetric. Namely,
$\mathbb P_n$ is the unique martingale measure for the $n$--th model.
Clearly, for the constant process $S\equiv 1$
we have the weak convergence $S^{(n)}\Rightarrow S$. Thus,
Assumption \ref{asm2.1}(i), Assumption \ref{asm2.2}(ii) and Assumption
\ref{asm4.1} are satisfied.

Next,
consider the initial capital $x:=1$. Observe that for any $n$, there is a set
$A_n\in\sigma\{\xi_1,...,\xi_n\}$ with $\mathbb P_n(A_n)=1/2$. Thus,
from the completeness of the binomial models we get that there exists
$\pi_n\in\mathcal A_n(1)$ such that
$V^{\pi_n}_T=2\mathbb I_{A_n}$.
In particular,
$$u_n(1)\geq \mathbb E_{\mathbb P_n}[\min(2,\max(2\mathbb I_{A_n},1))]=3/2, \ \ n\in\mathbb N.$$
On the other hand, trivially $u(1)=1$, which means that Proposition \ref{prop4.2}
does not hold true.
\end{exm}

The paper \cite{Reichlin:16} studies the continuity
of the value of the utility maximization
problem from terminal wealth (under convergence in distribution)
in a complete market. The author does not assume that the utility function is concave.
The main result says that if the limit probability space is atomless and the atoms in approximating sequence of models
are vanishing (see Assumption 2.1 in \cite{Reichlin:16}) then continuity holds. Clearly, this is not satisfied in the Example
\ref{exm4.1} above where the filtration generated by the limit process is trivial.

An open question is to understand whether the continuity
result from \cite{Reichlin:16} can be extended to the incomplete case.

\section{The Lower Semi--Continuity under Weak Convergence}\label{sec:5}\setcounter{equation}{0}
In this section we prove Proposition \ref{prop5.1}. We start by
establishing a general result.

For any $M>0$ and $n\in\mathbb N$
introduce the set $\Gamma^{(n)}_M$ of all simple predictable integrands of the from
$$
\gamma^{(n)}_t=\sum_{i=1}^k \beta_i\mathbb{I}_{t_i<t\leq t_{i+1}}
$$
where
$k\in\mathbb N$, $0=t_1<t_2<....<t_{k+1}=T$
is a deterministic partition and
\begin{equation*}
\beta_i=\psi_i(S^{(n)}_{a_{i,1}},...,S^{(n)}_{a_{i,m_i}}), \ \ i=1,...,k,
\end{equation*}
for a deterministic partition
$0=a_{i,1}<...<a_{i,m_i}=t_{i}$ and a continuous function
$\psi_i:(\mathbb R^d)^{m_i}\rightarrow\mathbb R^d$ that satisfies $|\psi_i|\leq M$.
\begin{lem}\label{lem5.+}
Let
$\gamma$
be a predictable
process (with respect to $(\mathcal F^S_t)_{0\leq t\leq T}$)
with $|\gamma|\leq M$ for some constant $M$.
 Then there exists
a sequence
$\gamma^{(n)}\in\Gamma^{(n)}_M$, $n\in\mathbb N$ such that
we have the weak convergence
\begin{equation}\label{5.4+}
 \left(\{S^{(n)}_t\}_{t=0}^T,\left\{\int_{0}^t\gamma^{(n)}_{u} dS^{(n)}_{u}\right\}_{t=0}^T\right)\Rightarrow
  \left(\{S_t\}_{t=0}^T,\left\{\int_{0}^t\gamma_{u} dS_{u}\right\}_{t=0}^T\right)
\end{equation}
on the space $\mathbb D([0,T];\mathbb R^d)\times \mathbb D([0,T];\mathbb R)$.
\end{lem}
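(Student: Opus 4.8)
The plan is to prove the lemma in three steps: a density step carried out entirely inside the limit model, a transfer step based on the continuous mapping theorem, and a diagonal extraction. Throughout I use that the limit $S$ is a \emph{continuous} semimartingale (as established in the remark following Assumption~\ref{asm5.1}), which is what makes the Skorokhod functionals below behave well. Write $S=S_0+N+A$ for the canonical decomposition of $S$ into a continuous local martingale $N$ and a continuous finite-variation part $A$.

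First I would approximate the bounded predictable integrand $\gamma$ by integrands of the special simple form \emph{within the limit model}. The goal is to produce simple predictable processes
\[
\tilde\gamma^{(m)}_t = \sum_{i=1}^{k_m}\psi^{(m)}_i\!\left(S_{a^{(m)}_{i,1}},\ldots,S_{a^{(m)}_{i,\ell_m}}\right)\mathbb I_{t^{(m)}_i < t \le t^{(m)}_{i+1}}, \qquad m\in\mathbb N,
\]
with each $\psi^{(m)}_i$ continuous, $|\psi^{(m)}_i|\le M$, and $a^{(m)}_{i,\ell_m}=t^{(m)}_i$, such that
\[
\sup_{0\le t\le T}\left|\int_0^t \tilde\gamma^{(m)}_u\,dS_u - \int_0^t \gamma_u\,dS_u\right|\longrightarrow 0 \quad\text{in probability as } m\to\infty.
\]
This is obtained in two substeps. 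One first approximates $\gamma$ in the ucp sense by simple predictable processes $\sum_i \xi_i\,\mathbb I_{(t_i,t_{i+1}]}$ with $\xi_i$ bounded and $\mathcal F^S_{t_i}$-measurable; the truncation $\xi_i\mapsto(\xi_i\wedge M)\vee(-M)$ preserves both the bound and the approximation. One then replaces each $\xi_i$ by a continuous cylinder functional: since $\mathcal F^S_{t_i}=\sigma(S_u:u\le t_i)$ up to null sets and $S$ is continuous, any bounded $\mathcal F^S_{t_i}$-measurable random variable is approximated in $L^2$ by $\psi_i(S_{a_{i,1}},\ldots,S_{a_{i,\ell}})$ with $\psi_i$ continuous, and projecting $\psi_i$ onto the ball of radius $M$ keeps the bound while only improving the fit. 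Passing to a subsequence gives convergence of integrands in measure with respect to $d(\langle N\rangle+|A|)\otimes d\mathbb P$ under the uniform bound $M$, so the dominated convergence theorem for stochastic integrals upgrades this to the displayed ucp convergence of the integrals.

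Next I would transfer each fixed approximant to the approximating sequence. For fixed $m$, define $\tilde\gamma^{(m,n)}\in\Gamma^{(n)}_M$ by using the same coefficient functions $\psi^{(m)}_i$ evaluated along $S^{(n)}$. For a simple integrand the integral is an explicit path functional,
\[
\int_0^t \tilde\gamma^{(m,n)}_u\,dS^{(n)}_u=\sum_{i=1}^{k_m}\psi^{(m)}_i\!\left(S^{(n)}_{a^{(m)}_{i,1}},\ldots,S^{(n)}_{a^{(m)}_{i,\ell_m}}\right)\left(S^{(n)}_{t^{(m)}_{i+1}\wedge t}-S^{(n)}_{t^{(m)}_i\wedge t}\right),
\]
so $s\mapsto\bigl(s,\int_0^\cdot \text{(same expression in }s)\bigr)$ is a fixed map on $\mathbb D([0,T];\mathbb R^d)$. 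Because evaluation at a fixed time is continuous at continuous paths in the Skorokhod topology (Skorokhod convergence to a continuous limit is uniform; see \cite{B:68}), and because sums, products, and compositions with the continuous $\psi^{(m)}_i$ preserve continuity, this map is continuous at every continuous $s$. Since $S^{(n)}\Rightarrow S$ and the law of $S$ is supported on continuous paths, the continuous mapping theorem yields, for each fixed $m$,
\[
\left(S^{(n)},\int_0^\cdot \tilde\gamma^{(m,n)}_u\,dS^{(n)}_u\right)\Longrightarrow\left(S,\int_0^\cdot \tilde\gamma^{(m)}_u\,dS_u\right)\quad\text{as } n\to\infty.
\]

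Finally I would diagonalize. The ucp convergence of the first step implies $\mathcal L\bigl(S,\int_0^\cdot\tilde\gamma^{(m)}dS\bigr)\to\mathcal L\bigl(S,\int_0^\cdot\gamma\,dS\bigr)$ as $m\to\infty$, while the transfer step gives, for each $m$, convergence of $\mathcal L\bigl(S^{(n)},\int_0^\cdot\tilde\gamma^{(m,n)}dS^{(n)}\bigr)$ to $\mathcal L\bigl(S,\int_0^\cdot\tilde\gamma^{(m)}dS\bigr)$ as $n\to\infty$. Metrizing weak convergence on $\mathbb D([0,T];\mathbb R^d)\times\mathbb D([0,T];\mathbb R)$ by the Lévy--Prokhorov distance and extracting a diagonal sequence $m_n\to\infty$ growing slowly enough, the choice $\gamma^{(n)}:=\tilde\gamma^{(m_n,n)}\in\Gamma^{(n)}_M$ satisfies \eqref{5.4+}. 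I expect the main obstacle to be the first step: obtaining the density approximation while \emph{simultaneously} respecting the structural constraints of $\Gamma^{(n)}_M$ (continuous cylinder coefficients depending only on past values, with the uniform bound $M$) and getting convergence of the \emph{integrals} rather than merely of the integrands; this is where the dominated convergence theorem for stochastic integrals and the identification $\mathcal F^S_{t_i}=\sigma(S_u:u\le t_i)$ do the essential work. The transfer and diagonal steps are then routine.
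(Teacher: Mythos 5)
Your proof is correct and follows essentially the same route as the paper's: approximate $\gamma$ by simple integrands with continuous cylinder coefficients inside the limit model, transfer the same coefficient functions $\psi_i$ to $S^{(n)}$, and exploit that the resulting path functional $s\mapsto\bigl(s,\sum_i\psi_i(s_{a_{i,1}},\ldots,s_{a_{i,m_i}})(s_{t_{i+1}\wedge\cdot}-s_{t_i\wedge\cdot})\bigr)$ is continuous at continuous paths. The only cosmetic differences are that you invoke the continuous mapping theorem where the paper runs the equivalent argument via the Skorokhod representation theorem and a.s.\ uniform estimates, and that you make explicit the diagonal extraction which the paper compresses into its ``without loss of generality we can assume $\gamma\in\Gamma_M$'' reduction.
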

\begin{proof}
On the space $(\Omega,\mathcal F,(\mathcal F^S_t)_{0\leq t\leq T},\mathbb P)$,
let $\Gamma_M$ be the set of all integrands of the form
\begin{equation}\label{5.0-}
\gamma_t=\sum_{i=1}^k \beta_i\mathbb{I}_{t_i<t\leq t_{i+1}}
\end{equation}
where $k\in\mathbb N$,
$0=t_1<t_2<....<t_{k+1}=T$
is a deterministic partition and
\begin{equation}\label{5.0}
\beta_i=\psi_i(S_{a_{i,1}},...,S_{a_{i,m_i}}), \ \ i=1,...,k
\end{equation}
for a deterministic partition
$0=a_{i,1}<...<a_{i,m_i}=t_{i}$ and a continuous function
$\psi_i:(\mathbb R^d)^{m_i}\rightarrow\mathbb R^d$
which satisfy $|\psi_i|\leq M$. From standard density arguments it follows that
for any $\epsilon>0$ we can find $\gamma'\in\Gamma_M$
which satisfy
$$\mathbb P\left(\sup_{0\leq t\leq T}\left|\int_{0}^t\gamma_{u} dS_{u}-\int_{0}^t
\gamma'_{u} dS_{u}\right|>\epsilon\right)<\epsilon.$$
Hence, without loss of generality we can assume that $\gamma\in \Gamma_{M}$. Thus, let $\gamma$ be given by (\ref{5.0-})--(\ref{5.0}).

For any $n\in\mathbb N$ define $\gamma^{(n)}\in\Gamma^{(n)}_M$
by
\begin{equation}\label{5.1}
\gamma^{(n)}_t:=\sum_{i=1}^k \psi_i\left(S^{(n)}_{a_{i,1}},...,S^{(n)}_{a_{i,m_i}}\right)\mathbb{I}_{t_i<t\leq t_{i+1}}, \ \ t\in [0,T].
\end{equation}
It is well known that there exists a metric $d$ on the Skorokhod space $\mathbb D([0,T];\mathbb R^d)$
that induces the Skorokhod topology and such that $\mathbb D([0,T];\mathbb R^d)$ is separable under $d$
(for details see
Chapter 3 in \cite{B:68}).
From the weak convergence $S^{(n)}\Rightarrow S$ and the Skorokhod representation theorem (see Theorem 3 in \cite{D:68})
 it follows that we can redefine the stochastic processes
$S^{(n)}$, $n\in\mathbb N$ and $S$ on the same probability space such that
$\lim_{n\rightarrow\infty} d(S^{(n)},S)=0$ a.s.
Recall that if $\lim_{n\rightarrow\infty} d(z^{(n)},z)=0$ and $z:[0,T]\rightarrow\mathbb R^d$ is a continuous
function then $\lim_{n\rightarrow\infty} \sup_{0\leq t\leq T}|z^{(n)}_t-z_t|=0$ (see  e.g. Chapter 3 in \cite{B:68}).
We conclude that
\begin{equation}\label{5.3+}
\sup_{0\leq t\leq T}|S^{(n)}_t-S_t|\rightarrow 0 \ \mbox{a.s.}
\end{equation}
Next, recall the partition $0=t_1<t_2<....<t_{k+1}=T$ and redefine (on the common probability space)
 the integrands $\gamma$, $\gamma^{(n)}$ by the relations (\ref{5.0-})--(\ref{5.1}).
 Since these integrands are simple then the corresponding stochastic integrals
 $\int_{0}^t \gamma_{u} dS_{u}, \int_{0}^t \gamma^{(n)}_{u} dS^{(n)}_{u}$, $t\in [0,T]$
 can be redefined as finite sums.

From (\ref{5.3+}) and the continuity of $\psi_i$, $i=1,...,k$ we get
that
$$ \sup_{0\leq t\leq T}|\gamma^{(n)}_{t}-\gamma_{t}|\rightarrow 0 \ \ \mbox{a.s.}$$
Thus,
\[
\begin{split}
\sup_{0\leq t\leq T}&\left|\int_{0}^t \gamma^{(n)}_{u} dS^{(n)}_{u}-\int_{0}^t \gamma_{u} dS_{u} \right| \\
&=\sup_{0\leq t\leq T}\left|\sum_{i=1}^k \left(\gamma^{(n)}_{t_{i+1}}(S^{(n)}_{t_{i+1}\wedge t}-S^{(n)}_{t_i\wedge t})-
\gamma_{t_{i+1}}(S_{t_{i+1}\wedge t}-S_{t_i\wedge t})\right)\right|\\
&\leq\sup_{0\leq t\leq T}\left|\sum_{i=1}^k \gamma^{(n)}_{t_{i+1}}\left((S^{(n)}_{t_{i+1}\wedge t}-S^{(n)}_{t_i\wedge t})
-(S_{t_{i+1}\wedge t}-S_{t_i\wedge t})\right)\right|\\
&+\sup_{0\leq t\leq T}\left|\sum_{i=1}^k (\gamma^{(n)}_{t_{i+1}}-\gamma_{t_{i+1}})(S_{t_{i+1}\wedge t}-S_{t_i\wedge t})\right|\\
&\leq 2 M k d
\sup_{0\leq t\leq T}|S^{(n)}_t-S_t|\\
&+2 k  d  \sup_{0\leq t\leq T}|\gamma^{(n)}_{t}-\gamma_{t}|\sup_{0\leq t\leq T} |S_t| \rightarrow 0 \ \ \mbox{a.s.}
\end{split}
\]
and the proof is completed.
\end{proof}
Now, we are ready to prove Proposition \ref{prop5.1}. \\
\textbf{Proof of Proposition \ref{prop5.1}}.\\
The proof will be done in two steps.\\
\textbf{Step I:}
For any $x>0$ let $\mathcal A_0(x)\subset\mathcal A(x)$ be the set of all
admissible portfolios $\pi=(x,\gamma)$ such that $\gamma$
is predictable, uniformly bounded and of bounded variation.
In this step we show that for any $x_1>x_2>0$
\begin{equation}\label{5.5}
u(x_2)\leq\sup_{\pi\in\mathcal A_0(x_1)}
\mathbb E_{\mathbb P}[U(V^\pi_T,S)].
\end{equation}
A priori the left hand side and the right hand side of (\ref{5.5}) can be both equal to $\infty$.

Let $\bar{\pi}=(x_2,\bar{\gamma})\in\mathcal A(x_2)$ be an arbitrary portfolio.
By applying the density argument given by Theorem 3.4 in \cite{BankBaum:04} we obtain that there exists an adapted continuous process of bounded
variation
$\tilde\gamma=\{\tilde\gamma_t\}_{t=0}^T$ such that
$$\sup_{0\leq t\leq T}\left|\int_{0}^t \tilde\gamma_{u} dS_{u}-\int_{0}^t \bar\gamma_{u} dS_{u}\right|\leq \frac{x_1-x_2}{2} \ \ \mbox{a.s.}$$
We conclude that the portfolio which is given by $\tilde\pi:=(x_1,\tilde\gamma)$
satisfies
\begin{equation}\label{5.new}
V^{\tilde\pi}_t\geq V^{\bar\pi}_t+\frac{x_1-x_2}{2}\geq\frac{x_1-x_2}{2}, \ \ t\in [0,T].
\end{equation}

Next, for the continuous process $\tilde\gamma$
define the stopping times
$$\theta_n:=T\wedge\inf\{t:|\tilde\gamma_t|= n\}, \ \ n\in\mathbb N$$
and the trading strategies $${\tilde\gamma}^{(n)}_t:=\mathbb{I}_{t\leq\theta_n}\tilde\gamma_t, \ \ t\in [0,T].$$
Set $\tilde\pi_n=(x_1,\tilde\gamma^{(n)})$. Clearly, $|\tilde\gamma^{(n)}|\leq n$ and from (\ref{5.new}) we have
$$V^{{\tilde\pi}_n}_t=V^{\tilde\pi}_{t\wedge\theta_n}\geq\frac{x_1-x_2}{2}, \ \ t\in [0,T].$$
 Hence, ${\tilde\pi}_n\in\mathcal A_0(x_1)$. Observe that
$\theta_n\uparrow T$ a.s., and so
$$\lim_{n\rightarrow\infty} V^{{\tilde\pi}_n}_T=\lim_{n\rightarrow\infty} V^{\tilde\pi}_{\theta_n}=V^{\tilde\pi}_T.$$
This together with Fatou's Lemma, Assumption \ref{asm2.1} (notice that $V^{{\tilde\pi}_n}_T\geq \frac{x_1-x_2}{2}>0$),
the fact that $U$ is continuous and (\ref{5.new}) gives
 $$\mathbb E_{\mathbb P}[U(V^{\bar\pi}_T,S)]\leq\mathbb E_{\mathbb P}[U(V^{{\tilde\pi}}_T,S)]\leq\lim\inf_{n\rightarrow\infty} \mathbb E_{\mathbb
 P}[U(V^{\tilde{\pi}_n}_T,S)].$$
Since $\bar\pi\in\mathcal A(x_2)$ was arbitrary we complete the proof of (\ref{5.5}).\\
\textbf{Step II:}
In view of (\ref{5.5})
and Assumption \ref{asm2.1+}, in order to prove Proposition \ref{prop5.1} it sufficient to show that for
any initial capital $x>0$, $0<\epsilon<\frac{x}{2}$ and $\pi\in\mathcal A_0(x-2\epsilon)$ there exists a sequence
$\pi_n\in\mathcal A_n(x)$, $n\in\mathbb N$ which satisfies
\begin{equation}\label{5.6}
\lim\inf_{n\rightarrow\infty} \mathbb E_{\mathbb P_n}[U(V^{\pi_n}_T,S^{(n)})]\geq \mathbb E_{\mathbb P}[U(V^\pi_T,S)].
\end{equation}
Let $0<\epsilon<\frac{x}{2}$ and
$\pi=(x-2\epsilon,\gamma)\in\mathcal A_0(x-2\epsilon)$. Let $M>0$ such that $|\gamma|\leq M$.
Lemma \ref{lem5.+} provides the existence of simple integrands
$\gamma^{(n)}\in\Gamma^{(n)}_M$, $n\in\mathbb N$ which satisfy (\ref{5.4+}).

For a given $n$, the portfolio $(x,\gamma^{(n)})$
might fail to be admissible and so
a modification is needed.
Recall Assumption \ref{asm5.1} and the stochastic processes $J^{(n)}$, $n\in\mathbb N$.
For any $n\in\mathbb N$ introduce the stopping time
\begin{equation}\label{5.1000}
\Theta_n:=T\wedge\inf\left\{t:x+\int_{0}^{t-}\gamma^{(n)}_{u}dS^{(n)}_{u}<\epsilon+ M d J^{(n)}_{t}\right\},
\end{equation}
and define the portfolio $\pi_n=(x,\bar\gamma^{(n)})$ by
$\bar\gamma^{(n)}_t:=\mathbb{I}_{t\leq\Theta_n}\gamma^{(n)}_t.$
Let us show that $V^{\pi_n}_t\geq\epsilon$ for all $t\in [0,T]$. Indeed,
\[
\begin{split}
V^{\pi_n}_t&=x+\int_{0}^{t\wedge\Theta_n}\gamma^{(n)}_{u}dS^{(n)}_{u}\\
& \geq x+\int_{0}^{{t\wedge\Theta_n}-}\gamma^{(n)}_{u}dS^{(n)}_{u}- M d
|S^{(n)}_{{t\wedge\Theta_n}-}-S^{(n)}_{t\wedge\Theta_n}|\\
& \geq \epsilon+M d J^{(n)}_{t\wedge\Theta_n}-M d |S^{(n)}_{\Theta}-S^{(n)}_{\Theta-}|\geq\epsilon
\end{split}
\]
as required.
The first inequality follows from $|\gamma^{(n)}|\leq M$.
The second inequality follows from the fact that on the time interval $[0,\Theta_n)$ se have
$x+\int_{0}^{\cdot-}\gamma^{(n)}_{u}dS^{(n)}_{u}\geq \epsilon+ M d J^{(n)}_{\cdot}.$
The last inequality is due to
$J^{(n)}_{\Theta}\geq |S^{(n)}_{\Theta}-S^{(n)}_{\Theta-}|$.
We conclude that $\pi_n\in\mathcal A_n(x)$ and
\begin{equation}\label{5.1002}
V^{\pi_n}_T=x+\int_{0}^{\Theta_n}\gamma^{(n)}_{u}dS^{(n)}_{u}\geq\epsilon.
\end{equation}
Next, we apply the Skorokhod representation theorem.
Recall that the processes $\{J^{(n)}_t\}_{t=0}^T$, $n\in\mathbb N$ are non--negative, non decreasing and
$J^{(n)}_T\rightarrow 0$ in probability. This together with (\ref{5.4+}) implies that we have the weak convergence
\begin{equation}\label{5.40+}
 \left(\{J^{(n)}_t\}_{t=0}^T,\{S^{(n)}_t\}_{t=0}^T,\left\{\int_{0}^t\gamma^{(n)}_{u} dS^{(n)}_{u}\right\}_{t=0}^T\right)\Rightarrow
  \left(0,\{S_t\}_{t=0}^T,\left\{\int_{0}^t\gamma_{u} dS_{u}\right\}_{t=0}^T\right)
\end{equation}
on the space $\mathbb D([0,T];\mathbb R)\times\mathbb D([0,T];\mathbb R^d)\times \mathbb D([0,T];\mathbb R)$.

For any $n\in\mathbb N$ the integrand $\gamma^{(n)}$ is of the form (\ref{5.1}). Hence the integrand
$\gamma^{(n)}$ and the corresponding stochastic integral $\int_{0}^{\cdot}\gamma^{(n)}_{u} dS^{(n)}_{u}$
are determined pathwise by $S^{(n)}$. Since $\gamma$ is of bounded variation then we have
$$\int_{0}^t \gamma_{u} dS_{u}=\gamma_t S_t-\gamma_0 S_0-\int_{0}^t S_{u} d\gamma_{u},$$
where the last term is the pathwise Riemann--Stieltjes integral. We conclude that
$\gamma$ and the corresponding stochastic integral $\int_{0}^{\cdot}\gamma_{u} dS_{u}$
are determined pathwise by $S$.

Thus, from the Skorokhod representation theorem and (\ref{5.40+}) it follows that we can redefine the stochastic processes
$\gamma^{(n)},S^{(n)},J^{(n)}$, $n\in\mathbb N$ and $\gamma,S$ on the same probability space such that (\ref{5.3+}) holds true,
\begin{equation}\label{5.nnn}
\sup_{0\leq t\leq T} J^{(n)}_t\rightarrow 0 \ \ \mbox{a.s.}
\end{equation}
and
\begin{equation}\label{5.8}
\sup_{0\leq t\leq T}\left|\int_{0}^t\gamma^{(n)}_u dS^{(n)}_u-\int_{0}^t\gamma_u dS_u\right|\rightarrow 0 \ \mbox{a.s.}
\end{equation}
As in (\ref{5.3+}) the uniform convergence is due to the fact that the limit processes are continuous.
By applying (\ref{5.1000}) we redefine
$\Theta_n$, $n\in\mathbb N$ on the common probability space.
With abuse of notations we denote by $\mathbb P$ and $\mathbb E$ the probability and the expectation on the common probability space, respectively.

First, we argue that
\begin{equation}\label{5.nn}
\lim_{n\rightarrow\infty}\mathbb P(\Theta_n=T)=1.
\end{equation}
Recall, the admissible portfolio $\pi=(x-2\epsilon,\gamma)$.
From (\ref{5.8}) it follows that
$$\lim\inf_{n\rightarrow\infty}\inf_{0\leq t\leq T}\left(x+\int_{0}^ t\gamma^{(n)}_u dS^{(n)}_u\right)=x+\inf_{0\leq t\leq T}\int_{0}^t \gamma_u dS_u\geq 2 \epsilon.$$
In particular
$$
\lim_{n\rightarrow\infty} \mathbb P\left(\inf_{0\leq t\leq T}\left(x+\int_{0}^ t\gamma^{(n)}_u dS^{(n)}_u\right)>\frac{3\epsilon}{2}\right)=1.
$$
This together with (\ref{5.nnn}) gives (\ref{5.nn}).

Finally, from Fatou's Lemma, the continuity of $U$, Assumption \ref{asm2.1}(i), Assumption \ref{asm2.2}(i) (recall that $V^{\pi_n}_T\geq\epsilon$),
(\ref{5.3+}), (\ref{5.1002}) and (\ref{5.8})--(\ref{5.nn}) we obtain
\[
\begin{split}
\lim\inf_{n\rightarrow\infty} \mathbb E_{\mathbb P_n}[U(V^{\pi_n}_T,S^{(n)})]
&=\lim\inf_{n\rightarrow\infty} \mathbb E\left[U\left(x+\int_{0}^{\Theta_n} \gamma^{(n)}_u dS^{(n)}_u,S^{(n)}\right)\right]\\
&\geq \mathbb E\left[U\left(x+\int_{0}^{T} \gamma_u dS_u,S\right)\right]\geq\mathbb E_{\mathbb P}[U(V^\pi_T,S)],
\end{split}
\]
and (\ref{5.6}) follows.
\hfill $\square$

\section{The Upper Semi--Continuity under Weak Convergence}\label{sec:4}\setcounter{equation}{0}
In this section we prove Proposition \ref{prop4.2}.
\begin{proof}
Let $x>0$. From Assumption \ref{asm2.2}(ii) it follows that for any $n\in\mathbb N$ $u_n(x)<\infty$.
Hence, we can choose a sequence $\hat\pi_n\in\mathcal{A}_n(x)$, $n\in\mathbb N$ which satisfy
(\ref{2.asp}).
Without loss of generality (by passing to a subsequence) we assume that the limit
$\lim_{n\rightarrow\infty} \mathbb E_{\mathbb P_n}[U(V^{\hat\pi_n}_T,S^{(n)})]$ exists.
We will prove that there exists $\hat\pi\in\mathcal A(x)$
such that
\begin{equation}\label{4.1000}
\mathbb E_{\mathbb P}[U(V^{\hat\pi}_T,S)]\geq\lim_{n\rightarrow\infty} \mathbb E_{\mathbb P_n}[U(V^{\hat\pi_n}_T,S^{(n)})]
\end{equation}
and this will give Proposition \ref{prop4.2}.
The proof will be done in two steps. \\
\textbf{Step I:}
Choose $\mathbb Q\in\mathcal M(S)$ (recall that we assume $\mathcal M(S)\neq\emptyset$) and
set $Z:=\frac{d\mathbb Q}{d\mathbb P}$.
From Assumption \ref{asm4.1} it follows that there exists a sequence $\mathbb Q_n\in\mathcal M(S^{(n)})$, $n\in\mathbb N$ for which (\ref{4.1}) holds true. For any $n$,
$\{V^{\hat\pi_n}_t\}_{t=0}^T$ is a $\mathbb Q_n$ super--martingale.
Hence,
$$\mathbb E_{\mathbb P_n}\left(V^{\hat\pi_n}_T\frac{d\mathbb Q_n}{d\mathbb P_n}\right)=
\mathbb E_{\mathbb Q_n}[V^{\hat\pi_n}_T]
\leq V^{\hat\pi_n}_0=x.$$
We conclude that the sequence
$\left(V^{\hat\pi_n}_T\frac{d\mathbb Q_n}{d\mathbb P_n};\mathbb P_n\right)$, $n\in\mathbb N$ is tight. This together with
(\ref{4.1}) yields that
the sequence
$\left(\left(S^{(n)},\frac{d\mathbb Q_n}{d\mathbb P_n},V^{\hat\pi_n}_T\frac{d\mathbb Q_n}{d\mathbb P_n}\right);\mathbb P_n\right)$,
$n\in\mathbb N$
is tight on the space $\mathbb D([0,T];\mathbb R^d)\times \mathbb R^2$. From Prohorov's theorem
it follows that there exists
a subsequence
$\left(\left(S^{(n)},\frac{d\mathbb Q_n}{d\mathbb P_n},V^{\hat\pi_n}_T
\frac{d\mathbb Q_n}{d\mathbb P_n}\right);\mathbb P_n\right)$
(for simplicity
 the subsequence is still denoted by $n$) which converge weakly.
From (\ref{4.1}) we obtain that
\begin{equation}\label{4.3--}
\left(\left(S^{(n)},\frac{d\mathbb Q_n}{d\mathbb P_n},V^{\hat\pi_n}_T
\frac{d\mathbb Q_n}{d\mathbb P_n}\right);\mathbb P_n\right)\Rightarrow (S,Z,Y),
\end{equation}
where $Y$ is some random variable.
In particular we have the weak convergence
\begin{equation}\label{4.3-}
\left(\left(S^{(n)},V^{\hat\pi_n}_T
\right);\mathbb P_n\right)\Rightarrow \left(S,\frac{Y}{Z}\right).
\end{equation}
The random vector $(S,Z,Y)$ is defined on a new probability space
$(\tilde\Omega,\tilde{\mathcal F},\tilde{\mathbb P})$,
which might be different from the original probability space
$(\Omega,{\mathcal F},{\mathbb P})$.
We redefine the filtration $\mathcal F^S$ (the usual filtration
which is generated by $S$) and the sets $\mathcal M(S),\mathcal A(\cdot)$ (as before, these sets defined
with respect to $\mathcal F^S$) on the new probability space
$(\tilde\Omega,\tilde{\mathcal F},\tilde{\mathbb P})$.

Set (notice that $\frac{Y}{Z}\geq 0$)
\begin{equation}\label{4.3}
V:=\mathbb E_{\tilde{\mathbb P}}\left(\frac{Y}{Z}\left|\right.\mathcal F^S_T\right)
\end{equation}
where a priori $V$ can be equal to $\infty$ with finite probability.
In order to prove (\ref{4.1000}) it is sufficient to show that there exists $\hat\pi\in\mathcal A(x)$ such that
\begin{equation}\label{opt}
V^{\hat\pi}_T\geq V \ \mbox{a.s.}
\end{equation}
Indeed, if (\ref{opt}) holds true (in particular $V<\infty$ a.s.), then from
the Jensen inequality, the continuity of $U$, Assumption \ref{asm2.1}(i), Assumption \ref{asm2.2}(ii), Assumption \ref{asm4.2} and (\ref{4.3-}) we obtain
\begin{equation}\label{4.1001}
\mathbb E_{{\mathbb P}}[U(V^{\hat\pi}_T,S)]\geq\mathbb E_{\tilde{\mathbb P}}[U(V,S)]\geq \mathbb E_{\tilde{\mathbb P}}\left[U\left(\frac{Y}{Z},S\right)\right]
\geq \lim_{n\rightarrow\infty} \mathbb E_{\mathbb P_n}[U(V^{\hat\pi_n}_T,S^{(n)})]
\end{equation}
as required.

This brings us to the second step.\\
\textbf{Step II:}
In this step we establish (\ref{opt}).
In view of the optional decomposition theorem
(Theorem 3.2 in \cite{Kram:96}) it is sufficient to show that the super-hedging price which is given by
$\sup_{\hat{\mathbb Q}\in\mathcal M(S)}\mathbb E_{\hat{\mathbb Q}} [V]$ is less or equal than $x$.
From (\ref{4.3}) we obtain
$$\sup_{\hat{\mathbb Q}\in\mathcal M(S)}\mathbb E_{\hat{\mathbb Q}} [V]=
\sup_{\hat{\mathbb Q}\in\mathcal M(S)}\mathbb E_{\tilde{\mathbb P}} \left[\frac{Y}{Z}\frac{d\hat{\mathbb Q}}{d\tilde{\mathbb P}}\right].
$$
Hence, it remains to prove that for any $\hat{\mathbb Q}\in\mathcal M(S)$
\begin{equation}\label{4.5}
x\geq \mathbb E_{\tilde{\mathbb P}} \left[\frac{Y}{Z}\frac{d\hat{\mathbb Q}}{d\tilde{\mathbb P}}\right].
\end{equation}
From Assumption
\ref{asm4.1} we get a sequence $\hat{\mathbb Q}_n\in\mathcal M(S^{(n)})$, $n\in\mathbb N$ for which
\begin{equation}\label{4.5+}
\left(\left(S^{(n)},\frac{d\hat{\mathbb Q}_n}{d\mathbb P_n}\right);\mathbb P_n\right)\Rightarrow
\left(\left(S,\frac{d\hat{\mathbb Q}}{d\mathbb P}\right);\mathbb P\right).
\end{equation}
This together with (\ref{4.3--}) yields that the sequence
$$\left(\left(S^{(n)},\frac{d\mathbb Q_n}{d\mathbb P_n},V^{\pi_n}_T
\frac{d\mathbb Q_n}{d\mathbb P_n},\frac{d\hat{\mathbb Q}_n}{d\mathbb P_n}\right);\mathbb P_n\right), \ \ n\in\mathbb N,$$
is tight on the space $\mathbb D([0,T];\mathbb R^d)\times\mathbb R^3$.
From Prohorov's Theorem and (\ref{4.3--}) there is a subsequence which converge weakly
\begin{equation}\label{4.6}
\left(\left(S^{(n)},\frac{d\mathbb Q_n}{d\mathbb P_n},V^{\hat\pi_n}_T
\frac{d\mathbb Q_n}{d\mathbb P_n},\frac{d\hat{\mathbb Q}_n}{d\mathbb P_n}\right);\mathbb P_n\right)\Rightarrow (S,Z,Y,X)
\end{equation}
for some random variable $X$.

Once again, the random vector $(S,Z,Y,X)$ is defined on a new probability space
$(\tilde{\tilde\Omega},\tilde{\tilde{\mathcal F}},\tilde{\tilde{\mathbb P}})$, on which
we redefine the filtration $\mathcal F^S$ and the sets $\mathcal M(S),\mathcal A(\cdot)$.

Observe that
$\frac{d\hat{\mathbb Q}}{d\mathbb P}$ is determined by $S$. Hence, there exists a measurable function
$g:\mathbb D([0,T];\mathbb R^d)\rightarrow\mathbb R$ such that
$\frac{d\hat{\mathbb Q}}{d\mathbb P}=g(S)$ \ $\mathbb P$ a.s, i.e.
$\mathbb E_{\mathbb P}|\frac{d\hat{\mathbb Q}}{d\mathbb P}-g(S)|=0$.
From (\ref{4.5+})--(\ref{4.6}) we get that the distribution of $(S,X)$ equals to
 $\left(\left(S,\frac{d\hat{\mathbb Q}}{d\mathbb P}\right);\mathbb P\right)$. Thus,
  $\mathbb E_{\tilde{\tilde{\mathbb P}}}|X-g(S)|=0$. We conclude that $X=g(S)$  $\tilde{\tilde{\mathbb P}}$ a.s.

  Finally, from Fatou's Lemma, (\ref{4.6}) and the fact that $\{V^{\hat\pi_n}_t\}_{t=0}^T$ is a $\hat{\mathbb Q}_n$ super--martingale
 it follows that
\[
\begin{split}
 \mathbb E_{\tilde{\mathbb P}} \left(\frac{Y}{Z}\frac{d\hat{\mathbb Q}}{d\tilde{\mathbb P}}\right)&=
 \mathbb E_{\tilde{\mathbb P}} \left(\frac{Y}{Z}g(S)\right)=
 \mathbb E_{\tilde{\tilde{\mathbb P}}} \left(\frac{Y g(S)}{Z}\right)=
 \mathbb E_{\tilde{\tilde{\mathbb P}}} \left(\frac{Y X}{Z}\right)\\
 &\leq \lim\inf_{n\rightarrow\infty}
 \mathbb E_{\mathbb P_n}\left(V^{\hat\pi_n}_T\frac{d{\mathbb Q_n}}{d\mathbb P_n}\frac{\frac{d\hat{\mathbb Q}_n}
 {d\mathbb P_n}}{\frac{d{\mathbb Q_n}}{d\mathbb P_n}}\right)
 =
\lim\inf_{n\rightarrow\infty}
 \mathbb E_{\hat{\mathbb Q}_n}[V^{\hat\pi_n}_T]\leq x,
\end{split}
\]
from which we get (\ref{4.5}).
\end{proof}
Next, we prove
Theorem \ref{thm2.2}.

\subsection{Proof of Theorem \ref{thm2.2}}\label{sec:4WEAK}
In order to prove the statement it is sufficient to
show the for any sub--sequence of laws
$\left(S^{(n)},V^{\hat\pi_n}_T\right)$ there is a further subsequence which converge weakly to
$\left(S,V^{\hat\pi}_T\right)$.

Following the same arguments as in the proof of Proposition \ref{prop4.2} we obtain that for any subsequence of laws
$\left(S^{(n)},V^{\hat\pi_n}_T\right)$
there is a further sequence
which satisfies (\ref{4.3-}).
Moreover, there exists
$\hat\pi\in\mathcal A(x)$ such that (\ref{opt})--(\ref{4.1001}) hold true.

From (\ref{4.1001}), Theorem \ref{main}
and the fact that $\hat\pi_n\in\mathcal A_n(x)$ are asymptotically optimal we get
$$u(x)\geq\mathbb E_{{\mathbb P}}[U(V^{\hat\pi}_T,S)]\geq\mathbb E_{\tilde{\mathbb P}}
\left[U\left(\mathbb E_{\tilde{\mathbb P}}\left(\frac{Y}{Z}\left|\right.\mathcal F^S_T\right),S\right)\right]
\geq \mathbb E_{\tilde{\mathbb P}}\left[U\left(\frac{Y}{Z},S\right)\right]
\geq u(x).$$
We conclude that all the above inequalities are in fact equalities.
This equality together with (\ref{opt}) and the assumption that $U$ is strictly concave and strictly increasing in the first variable (follows from Assumption \ref{asm2.1}(i) and the strict concavity) imply that
$$V^{\hat\pi}_T=\mathbb E_{\tilde{\mathbb P}}\left(\frac{Y}{Z}\left|\right.\mathcal F^S_T\right)=\frac{Y}{Z}$$ and
$\hat\pi\in\mathcal A(x)$ is the unique optimal portfolio. This completes the proof.
\hfill $\square$

We end this section with a remark on how our results can be generalized.
\begin{rem}
Consider the case where the filtration $\mathcal F:=\mathcal F^{S,Y}$ is the usual filtration generated by
$S$ and another RCLL process
$R=(R^{(1)}_t,...,R^{(m)}_t)_{0\leq t\leq T}$. The process $R$ can be viewed as a collection of non traded assets.

For the approximate model we take $(S^{(n)},R^{(n)})$ and a filtration which satisfies the usual assumptions and makes both $S^{(n)}$ and
$R^{(n)}$ adapted. Once again $R^{(n)}=(R^{n,1}_t,...,R^{n,m}_t)_{0\leq t\leq T}$ is the collection of non traded assets.
Consider a continuous utility function
$U:(0,\infty) \times\mathbb D([0,T];\mathbb R^d)\times \mathbb D([0,T];\mathbb R^m)\rightarrow\mathbb R$
and assume the weak convergence $(S^{(n)},R^{(n)})\Rightarrow (S,R)$ and
an analogous assumptions to
those in Section \ref{sec:2}.
Of course, as before the martingale measures are with respect to the traded assets.
 Then, by using similar arguments as in Sections \ref{sec:5}--\ref{sec:4} we can extend the main results Theorems \ref{main}-\ref{thm2.2} to this setup as well.
 \end{rem}

\section{Lattice Based Approximations of the Heston Model}\label{sec:7}
Consider the Heston model \cite{Hest:93} given by
\begin{eqnarray*}
&d\hat S_t=\hat S_t(\mu dt+\sqrt{\hat \nu_t}dW_t),\\
&d\hat\nu_t=\kappa(\theta-{\hat\nu_t})dt+\sigma \sqrt{\hat \nu_t}d\tilde W_t,\nonumber
\end{eqnarray*}
where $\mu\in\mathbb R$, $\kappa,\theta,\sigma>0$ are constants and $W$,
$\tilde W$ are two standard Brownian motions with a constant
correlation $\rho\in (-1,1)$. The initial values $\hat S_0,\hat\nu_0>0$ are given. We assume the condition
$2\kappa\theta>\sigma^2$ which guarantees that $\hat\nu$ does not touch zero (see \cite{CIR:85}).

For technical reasons our approximations require that the volatility will lie in an interval
of the form $[\underline{\sigma},\overline{\sigma}]$ for some $0<\underline{\sigma}<\overline{\sigma}$.
Thus, we modify the Heston model as following.
Fix two barriers $0<\underline{\sigma}<\overline{\sigma}$ and define the function
$$h(z):=\max(\underline{\sigma}^2,\min(z,\overline{\sigma}^2)),\ \ z\in \mathbb R. $$
Consider the SDE
\begin{eqnarray}
&dS_t=S_t(\mu dt+\sqrt{h(\nu_t)} dW_t)\nonumber\\
&{}\label{7.0-}\\
&d\nu_t=\kappa(\theta-h(\nu_t))dt+\sigma \sqrt{h(\nu_t)}d\tilde W_t\nonumber
\end{eqnarray}
where the initial values are $S_0:=\hat S_0$, $\nu_0:=\hat\nu_0$.
Observe that $\sqrt h,h$ are Lipschitz continuous, and so (\ref{7.0-}) has a unique solution.

We expect that if $\underline{\sigma}$ is small and $\overline{\sigma}$ is large then the value of the utility maximization problem in
the Heston model will be close
to the one in the model given by (\ref{7.0-}).
For the shortfall risk measure we provide an error estimate in Lemma \ref{error}.

\subsection{Discretization}\label{sec:dis}
In this section we construct  discrete time lattice based approximations for the model given by (\ref{7.0-}).
The novelty of our constructions is that the approximating sequence satisfies
Assumptions \ref{asm5.1},\ref{asm4.1}.

It is more convenient
to work with a transformed system of equations driven by independent Brownian motions.
 Therefore, we set
$$\Phi_t:=\ln {S_t}, \ \ \Psi_t:=\frac{\nu_t}{\sigma}-\rho \Phi_t.$$
From It\^o's formula we obtain that
\begin{eqnarray*}
&d\Phi_t=\mu_{\Phi}(\Phi_t,\Psi_t) dt+\sigma_{\Phi}(\Phi_t,\Psi_t)dW_t\\
&d\Psi_t=\mu_{\Psi}(\Phi_t,\Psi_t) dt+\sigma_{\Psi}(\Phi_t,\Psi_t) d\hat W_t
\end{eqnarray*}
where
\begin{eqnarray*}
&\mu_{\Phi}(y,z):=\mu-h\left(\sigma(\rho y+z)\right)/2, \ \ \sigma_{\Phi}(y,z):=\sqrt {h\left(\sigma(\rho y +z)\right)}, \\
&\mu_{\Psi}(y,z):=\frac{\kappa}{\sigma}\left(\theta-h\left(\sigma(\rho y +z)\right)\right)-\rho \mu_{\Phi}(y,z), \ \
\sigma_{\Psi}:=\sqrt{(1-\rho^2)} \sigma_{\Phi},
\end{eqnarray*}
and $\hat W:=\frac{\tilde W-\rho W}{\sqrt{1-\rho^2}}$ is a Brownian motion independent of $W$.

Next, we define
lattice based approximations for the process
$(\Phi,\Psi)$. Choose $\tilde\sigma\geq\overline{\sigma}$.
For any $n\in\mathbb N$ define the
stochastic processes $\Phi^{(n)}_t,\Psi^{(n)}_t$, $t\in [0,T]$ by
\begin{eqnarray*}
&\Phi^{(n)}_t:=\Phi_0+\tilde{\sigma}\sqrt{\frac{T}{n}}\sum_{i=1}^k\xi_i, \ \ \frac{kT}{n}\leq t<\frac{(k+1)T}{n}  \\
&\Psi^{(n)}_t:=\Psi_0+\tilde{\sigma}
\sqrt{\frac{T}{n}} \sum_{i=1}^k \hat\xi_i, \ \ \frac{kT}{n}\leq t<\frac{(k+1)T}{n}
\end{eqnarray*}
where $\xi_i,\hat\xi_i\in\{-1,0,1\}$.
Observe that the processes $\Phi^{(n)}-\Phi_0$, $\Psi^{(n)}-\Psi_0$
lie on the grid
$\tilde{\sigma}\sqrt\frac{T}{n}\{-n,1-n,...,n\}$.

Let $\mathcal F^{(n)}_t$, $t\leq T$ be the piece wise constant filtration generated by the processes $\Phi^{(n)},\Psi^{(n)}$.
Namely,
\begin{equation*}
\mathcal F^{(n)}_t:=\sigma\left\{\xi_1,...,\xi_k,\hat\xi_1,...,\hat\xi_k\right\}, \ \ k T/n\leq t<(k+1)T/n.
\end{equation*}
It remains to define the probability measure $\mathbb P_n$.
First since $W$ and $\hat W$ are independent Brownian motions we require that
for all $a,b\in\{-1,0,1\}$ and $k\geq 1$
\begin{equation*}
\mathbb P_n\left(\xi_k=a,\hat\xi_k=b|\mathcal F^{(n)}_{\frac{(k-1)T}{n}}\right):=\\
\mathbb P_n\left(\xi_k=a|\mathcal F^{(n)}_{\frac{(k-1)T}{n}}\right)
\mathbb P_n\left(\hat\xi_k=b|\mathcal F^{(n)}_{\frac{(k-1)T}{n}}\right).
\end{equation*}
In order to match the drift and the volatility,
we set,
\begin{eqnarray*}\label{7.5}
&\mathbb P_n\left(\xi_k=\pm 1| \mathcal
F^{(n)}_{\frac{(k-1)T}{n}}\right):=\frac{\sigma_{\Phi}^2\left(\Phi^{(n)}_{\frac{(k-1)T}{n}},\Psi^{(n)}_{\frac{(k-1)T}{n}}\right)}{2\tilde{\sigma}^2}\pm
\sqrt\frac{T}{n}\frac{\mu_{\Phi}\left(\Phi^{(n)}_{\frac{(k-1)T}{n}},\Psi^{(n)}_{\frac{(k-1)T}{n}}\right)}{2\tilde\sigma},\\
&\mathbb P_n\left(\xi_k=0| \mathcal F^{(n)}_{\frac{(k-1)T}{n}}\right):=
1-\frac{\sigma_{\Phi}^2\left(\Phi^{(n)}_{\frac{(k-1)T}{n}},\Psi^{(n)}_{\frac{(k-1)T}{n}}\right)}{\tilde{\sigma}^2},\label{7.6}
\end{eqnarray*}
and
\begin{eqnarray*}\label{7.7}
&\mathbb P_n\left(\hat \xi_k=\pm 1| \mathcal
F^{(n)}_{\frac{(k-1)T}{n}}\right):=\frac{\sigma_{\Psi}^2\left(\Phi^{(n)}_{\frac{(k-1)T}{n}},\Psi^{(n)}_{\frac{(k-1)T}{n}}\right)}{2\tilde{\sigma}^2}\pm
\sqrt\frac{T}{n}\frac{\mu_{\Psi}\left(\Phi^{(n)}_{\frac{(k-1)T}{n}},\Psi^{(n)}_{\frac{(k-1)T}{n}}\right)}{2\tilde\sigma},\\
&\mathbb P_n\left(\hat \xi_k=0| \mathcal F^{(n)}_{\frac{(k-1)T}{n}}\right):=
1-\frac{\sigma_{\Psi}^2\left(\Phi^{(n)}_{\frac{(k-1)T}{n}},\Psi^{(n)}_{\frac{(k-1)T}{n}}\right)}{\tilde{\sigma}^2}.\label{7.8}
\end{eqnarray*}
Observe that for sufficiently large $n$, the right hand side of the above equations all lie in the interval $[0,1]$.
\begin{prop}\label{lem.numerics}
For any $n\in\mathbb N$ (sufficiently large) consider the financial market given by
$S^{(n)}:=e^{\Phi^{(n)}}$ and the filtration $\mathcal F^{(n)}$ defined above. Then, the following holds true. \\
(I) We have the weak convergence
$S^{(n)}\Rightarrow S$ to the modified Heston model.\\
(II)
Assumption \ref{asm5.1} holds true.\\
\end{prop}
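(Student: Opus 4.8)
The plan is to establish (I) first and then derive (II) from (I) together with Example~\ref{exm.bin}. For (I), since the exponential map $x\mapsto e^{x}$ is continuous from $\mathbb D([0,T];\mathbb R)$ to itself in the Skorokhod topology, the continuous mapping theorem reduces the claim $S^{(n)}=e^{\Phi^{(n)}}\Rightarrow e^{\Phi}=S$ to the joint weak convergence $(\Phi^{(n)},\Psi^{(n)})\Rightarrow(\Phi,\Psi)$, where $(\Phi,\Psi)$ is the solution of the transformed SDE. Because $h$ takes values in $[\underline{\sigma}^2,\overline{\sigma}^2]$ and is Lipschitz, the coefficients $\mu_{\Phi},\sigma_{\Phi},\mu_{\Psi},\sigma_{\Psi}$ are bounded and Lipschitz; in particular $\sigma_{\Phi},\sigma_{\Psi}$ are bounded away from $0$, so the martingale problem for the generator $\mathcal L f=\mu_{\Phi}f_y+\mu_{\Psi}f_z+\tfrac12\sigma_{\Phi}^2 f_{yy}+\tfrac12\sigma_{\Psi}^2 f_{zz}$ is well posed. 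This places us in a position to apply a standard weak--convergence theorem for Markov chains approximating a diffusion, in the same spirit in which \cite{DP:92} was invoked in Example~\ref{exm4.2}, except that here the chain is genuinely state dependent, so the natural tool is a generator/martingale--problem criterion.

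To verify the hypotheses of that criterion I would compute the one--step conditional moments of $(\Phi^{(n)},\Psi^{(n)})$. Writing $\Delta\Phi^{(n)}_k:=\Phi^{(n)}_{kT/n}-\Phi^{(n)}_{(k-1)T/n}=\tilde\sigma\sqrt{T/n}\,\xi_k$ and evaluating all coefficients at $(\Phi^{(n)}_{(k-1)T/n},\Psi^{(n)}_{(k-1)T/n})$, the defining probabilities yield, conditionally on $\mathcal F^{(n)}_{(k-1)T/n}$, the exact identities $\mathbb E[\Delta\Phi^{(n)}_k]=\tfrac{T}{n}\mu_{\Phi}$ and $\mathbb E[(\Delta\Phi^{(n)}_k)^2]=\tfrac{T}{n}\sigma_{\Phi}^2$, and similarly for $\Psi^{(n)}$ with $\mu_{\Psi},\sigma_{\Psi}$; the conditional independence of $\xi_k$ and $\hat\xi_k$ makes the cross moment $\mathbb E[\Delta\Phi^{(n)}_k\Delta\Psi^{(n)}_k]$ of order $(T/n)^2$, negligible, which matches the independence of $W$ and $\hat W$. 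Hence the rescaled conditional drift and covariance converge, uniformly on compacts, to $(\mu_{\Phi},\mu_{\Psi})$ and $\mathrm{diag}(\sigma_{\Phi}^2,\sigma_{\Psi}^2)$, while the increments being bounded by $\tilde\sigma\sqrt{T/n}\to 0$ makes the Lindeberg (jump--negligibility) condition trivial. A Taylor expansion then shows the discrete generators converge to $\mathcal L$, and well--posedness upgrades this to $(\Phi^{(n)},\Psi^{(n)})\Rightarrow(\Phi,\Psi)$, whence (I) follows by continuous mapping.

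For (II) I would observe that $S^{(n)}$ is exactly of the pure--jump form treated in Example~\ref{exm.bin}, with jump times $\tau^{(n)}_i=(i-1)T/n$ and $m_n=n$. Indeed, at each grid point the multiplicative increment is $e^{\tilde\sigma\sqrt{T/n}\,\xi_k}$ with $\xi_k\in\{-1,0,1\}$, so $|S^{(n)}_{\tau^{(n)}_{i+1}}-S^{(n)}_{\tau^{(n)}_{i}}|\le a_n\,S^{(n)}_{\tau^{(n)}_{i}}$ with $a_n:=e^{\tilde\sigma\sqrt{T/n}}-1\to 0$. Example~\ref{exm.bin} then furnishes the adapted left--continuous dominating process $J^{(n)}$ with $\inf_{0\le t\le T}(J^{(n)}_t-D^{(n)}_t)\ge 0$, and it remains only to check that $J^{(n)}_T\to 0$ in probability. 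Here $J^{(n)}_T=a_n\max_{0\le k\le n-1}S^{(n)}_{kT/n}\le a_n\sup_{0\le t\le T}S^{(n)}_t$; by (I) and the continuity of the limit $S$, the continuous mapping theorem applied to the supremum functional gives $\sup_{0\le t\le T}S^{(n)}_t\Rightarrow\sup_{0\le t\le T}S_t<\infty$, so $\{\sup_{0\le t\le T}S^{(n)}_t\}_n$ is tight and multiplication by $a_n\to 0$ forces $J^{(n)}_T\to 0$ in probability.

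I expect the main obstacle to be the rigorous execution of the diffusion--approximation step in (I): one must phrase the state--dependent chain so that a single weak--convergence theorem applies, control the evaluation of the (continuous) coefficients along the interpolated grid paths, and use well--posedness of the limiting martingale problem to pass from generator convergence to convergence in law. By contrast, (II) is essentially bookkeeping once (I) and Example~\ref{exm.bin} are available, the only genuine point being the tightness of $\sup_{0\le t\le T}S^{(n)}_t$ that upgrades $a_n\to 0$ into $J^{(n)}_T\to 0$ in probability.
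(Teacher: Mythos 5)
Your proposal is correct and follows essentially the same route as the paper: for (I) you compute exactly the conditional drift, variance and cross--moment identities that the paper derives from the definition of $\mathbb P_n$, and then invoke a diffusion--approximation/martingale--problem limit theorem (the paper cites Theorem 7.4.1 in \cite{EK:86}, which is precisely such a criterion, made applicable by the bounded Lipschitz coefficients); for (II) you apply Example~\ref{exm.bin} with the identical choices $m_n=n$, $\tau^{(n)}_i=(i-1)T/n$, $a_n=e^{\tilde\sigma\sqrt{T/n}}-1$. Your only addition is the explicit tightness argument showing $J^{(n)}_T\rightarrow 0$ in probability via $\sup_{0\leq t\leq T}S^{(n)}_t\Rightarrow\sup_{0\leq t\leq T}S_t$, a verification the paper leaves implicit inside Example~\ref{exm.bin}.
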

\begin{proof}
${}$\\
\textbf{Proof of (I).} Let us prove that
\begin{equation}\label{7.3}
(\Phi^{(n)},\Psi^{(n)})\Rightarrow (\Phi,\Psi).
\end{equation}
Clearly, (\ref{7.3}) implies that $S^{(n)}\Rightarrow S$.

From the definition of $\mathbb P_n$ we have
\begin{equation}\label{7.1000}
\mathbb E_{\mathbb P_n}\left(\Phi^{(n)}_{\frac{kT}{n}}-\Phi^{(n)}_{\frac{(k-1)T}{n}}
\big|\mathcal F^{(n)}_{\frac{(k-1)T}{n}}\right)=\frac{T}{n}\mu_{\Phi}\left(\Phi^{(n)}_{\frac{(k-1)T}{n}},\Psi^{(n)}_{\frac{(k-1)T}{n}}\right),
\end{equation}
\begin{equation}\label{7.2000}
\mathbb E_{\mathbb P_n}\left(\Psi^{(n)}_{\frac{kT}{n}}-\Psi^{(n)}_{\frac{(k-1)T}{n}}
\big|\mathcal F^{(n)}_{\frac{(k-1)T}{n}}\right)=\frac{T}{n}\mu_{\Psi}\left(\Phi^{(n)}_{\frac{(k-1)T}{n}},\Psi^{(n)}_{\frac{(k-1)T}{n}}\right),
\end{equation}
\begin{equation*}
\mathbb E_{\mathbb P_n}\left((\Phi^{(n)}_{\frac{kT}{n}}-\Phi^{(n)}_{\frac{(k-1)T}{n}})^2
\big|\mathcal F^{(n)}_{\frac{(k-1)T}{n}}\right)=
\frac{T}{n} \sigma^2_{\Phi}\left(\Phi^{(n)}_{\frac{(k-1)T}{n}},\Psi^{(n)}_{\frac{(k-1)T}{n}}\right),
\end{equation*}
\begin{equation*}
\mathbb E_{\mathbb P_n}\left((\Psi^{(n)}_{\frac{kT}{n}}-\Psi^{(n)}_{\frac{(k-1)T}{n}})^2
\big|\mathcal F^{(n)}_{\frac{(k-1)T}{n}}\right)=
\frac{T}{n} \sigma^2_{\Psi}\left(\Phi^{(n)}_{\frac{(k-1)T}{n}},\Psi^{(n)}_{\frac{(k-1)T}{n}}\right)
\end{equation*}
and
\begin{equation*}
\mathbb E_{\mathbb P_n}\left((\Phi^{(n)}_{\frac{kT}{n}}-\Phi^{(n)}_{\frac{(k-1)T}{n}})
(\Psi^{(n)}_{\frac{kT}{n}}-\Psi^{(n)}_{\frac{(k-1)T}{n}})\big|\mathcal F^{(n)}_{\frac{(k-1)T}{n}}\right)=O(n^{-2}).
\end{equation*}
Thus, (\ref{7.3}) follows from the
the martingale convergence result Theorem 7.4.1 in \cite{EK:86}.\\
\textbf{Proof of II.}
The statement follows from applying Example \ref{exm.bin} for
$m_n=n$, $\tau^{(n)}_i=(i-1)T/n$ and
$a_n=e^{\tilde{\sigma}\sqrt\frac{T}{n}}-1$.
\end{proof}
\subsection{Verification of Assumption \ref{asm4.1}}\label{sec:verification}
We start with some preparations.
Denote by $\mathcal D$ the set of all stochastic processes $\Upsilon=\{\Upsilon_t\}_{t=0}^T$ of the form
$\Upsilon=F(\Phi)$ where $F:\mathbb D([0,T];\mathbb R)\rightarrow \mathbb D([0,T];\mathbb R)$
is a bounded, continuous function
(we take the Skorokhod topology on the space
$\mathbb D([0,T];\mathbb R)$ and $F$ is a progressively measurable map. Namely,
for any $t\in [0,T]$ and $f^{(1)},f^{(2)}\in \mathbb D([0,T];\mathbb R)$,
$f^{(1)}_{[0,t]}=f^{(2)}_{[0,t]}$
implies that $F_t(f^{(1)})=F_t(f^{(2)})$.

Define the set
\begin{eqnarray*}
&\mathcal M^d(S):=\\
&\left\{\mathbb Q: \ \exists \Upsilon\in\mathcal D, \ \frac{d\mathbb Q}{d\mathbb P}|{\mathcal F^S_T}=e^{\int_{0}^{T}\frac{-\mu}{\sqrt {h(\nu_t)}} dW_t+\int_{0}^{T}\Upsilon_td\hat W_t-\int_{0}^{T} \frac{\mu^2}{2 h(\nu_t)} dt-
\int_{0}^{T} \frac{1}{2}\Upsilon^2_t dt} \right\}.
\end{eqnarray*}
From the Girsanov theorem it follows that $\mathcal M^d(S)\subset\mathcal M(S)$.
Moreover, since $\Phi=\ln S$ then
the usual filtration which is generated by $S$ equals to the usual filtration which is generated by $\Phi$.
Hence standard arguments yield that
$\mathcal M^d(S)\subset\mathcal M(S)$ is dense.

Choose an arbitrary $\Upsilon=F(\Phi)\in\mathcal D$ and
 denote
\begin{equation}\label{7.new}
Z_t:=e^{\int_{0}^{t}\frac{-\mu}{\sqrt {h(\nu_u)}} dW_u+\int_{0}^{t}\Upsilon_ud\hat W_u-\int_{0}^{t} \frac{\mu^2}{2 h(\nu_u)} du-
\int_{0}^{t} \frac{1}{2}\Upsilon^2_u du}, \ \ t\in [0,T].
\end{equation}
It is sufficient to prove that (recall Remark \ref{AE:4})
there exists a sequence of probability measures $\mathbb Q_n\in\mathcal M(S^{(n)})$, $n\in\mathbb N$, such that
for the processes $Z^{(n)}_t:=\frac{d\mathbb Q_n}{d\mathbb P_n}{|\mathcal F^{(n)}_t}$, $t\in [0,T]$,
we have the weak convergence
\begin{equation}\label{7.100}
(S^{(n)},Z^{(n)})\Rightarrow (S,Z).
\end{equation}
For any $n\in\mathbb N$ (sufficiently large) define the
 probability measure $\mathbb Q_n$ by the following relations
\begin{equation*}
\mathbb Q_n\left(\xi_k=a,\hat\xi_k=b|\mathcal F^{(n)}_{\frac{(k-1)T}{n}}\right):=\\
\mathbb Q_n\left(\xi_k=a|\mathcal F^{(n)}_{\frac{(k-1)T}{n}}\right)
\mathbb Q_n\left(\hat\xi_k=b|\mathcal F^{(n)}_{\frac{(k-1)T}{n}}\right),
\end{equation*}
\begin{equation}\label{7.11}
\begin{split}
\mathbb Q_n\left(\xi_k=\pm 1| \mathcal F^{(n)}_{\frac{(k-1)T}{n}}\right)&:=
\frac{\sigma_{\Phi}^2\left(\Phi^{(n)}_{\frac{(k-1)T}{n}},\Psi^{(n)}_{\frac{(k-1)T}{n}}\right)}
{\tilde{\sigma}^2\left(1+e^{\pm\tilde\sigma\sqrt\frac{T}{n}}\right)},\\
\mathbb Q_n\left(\xi_k=0| \mathcal F^{(n)}_{\frac{(k-1)T}{n}}\right)&:=
1-\frac{\sigma_{\Phi}^2\left(\Phi^{(n)}_{\frac{(k-1)T}{n}},\Psi^{(n)}_{\frac{(k-1)T}{n}}\right)}{\tilde{\sigma}^2},
\end{split}
\end{equation}
and
\[
\begin{split}
&\mathbb Q_n\left(\hat\xi_k=\pm 1| \mathcal F^{(n)}_{\frac{(k-1)T}{n}}\right):=
\frac{\sigma_{\Psi}^2\left(\Phi^{(n)}_{\frac{(k-1)T}{n}},\Psi^{(n)}_{\frac{(k-1)T}{n}}\right)}{2\tilde{\sigma}^2}\\
&\pm
\sqrt\frac{T}{n}\frac{F_{\frac{(k-1)T}{n}}(\Phi^{(n)})
\sigma_{\Psi}\left(\Phi^{(n)}_{\frac{(k-1)T}{n}},\Psi^{(n)}_{\frac{(k-1)T}{n}}\right)+
\mu_{\Psi}\left(\Phi^{(n)}_{\frac{(k-1)T}{n}},\Psi^{(n)}_{\frac{(k-1)T}{n}}\right)}{2\tilde\sigma},\nonumber\\
&\mathbb Q_n\left(\hat\xi_k=0| \mathcal F^{(n)}_{\frac{(k-1)T}{n}}\right):=
1-\frac{\sigma_{\Psi}^2\left(\Phi^{(n)}_{\frac{(k-1)T}{n}},\Psi^{(n)}_{\frac{(k-1)T}{n}}\right)}{\tilde{\sigma}^2}.\label{7.12}
\end{split}
\]
Observe that (\ref{7.11}) implies $\mathbb Q_n\in\mathcal M(S^{(n)})$.
\begin{lem}\label{lem5.2*}
We have the weak convergence
$$(\Phi^{(n)},\Psi^{(n)},Z^{(n)})\Rightarrow (\Phi,\Psi,Z).$$
\end{lem}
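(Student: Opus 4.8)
The plan is to enlarge the pair $(\Phi^{(n)},\Psi^{(n)})$ by the log-likelihood process $L^{(n)}:=\log Z^{(n)}$ and to prove the joint convergence $(\Phi^{(n)},\Psi^{(n)},L^{(n)})\Rightarrow(\Phi,\Psi,L)$ with $L:=\log Z$; since the coordinatewise map $(\phi,\psi,\ell)\mapsto(\phi,\psi,e^{\ell})$ is continuous for the Skorokhod topology, the assertion of the lemma follows by the continuous mapping theorem. The additive structure of $L^{(n)}$ is what makes this tractable: because both $\mathbb P_n$ and $\mathbb Q_n$ render $\xi_k$ and $\hat\xi_k$ conditionally independent given $\mathcal F^{(n)}_{\frac{(k-1)T}{n}}$, and the two measures assign equal conditional weight to $\{\xi_k=0\}$ and to $\{\hat\xi_k=0\}$, the density process factorizes step by step, so that for $kT/n\le t<(k+1)T/n$ one has $L^{(n)}_t=\sum_{i=1}^{k}\big(\ell^{\xi}_i+\ell^{\hat\xi}_i\big)$, where $\ell^{\xi}_i:=\log\frac{\mathbb Q_n(\xi_i\mid\mathcal F^{(n)}_{(i-1)T/n})}{\mathbb P_n(\xi_i\mid\mathcal F^{(n)}_{(i-1)T/n})}$ and $\ell^{\hat\xi}_i$ is defined analogously. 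Thus $(\Phi^{(n)},\Psi^{(n)},L^{(n)})$ is a piecewise-constant process whose increments over the grid are conditionally independent, which is exactly the setting of a martingale functional central limit theorem.

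I would then verify convergence of the conditional characteristics of this triple, in the spirit of the proof of Proposition \ref{lem.numerics}. The increments of $\Phi^{(n)}$ and $\Psi^{(n)}$ already have the correct conditional drifts $\mu_\Phi,\mu_\Psi$, variances $\sigma_\Phi^2,\sigma_\Psi^2$ and vanishing cross term, by (\ref{7.1000})--(\ref{7.2000}) and the computations in Section \ref{sec:dis}. For $L^{(n)}$ the decisive simplification is that $Z^{(n)}$ is a positive $\mathbb P_n$--martingale: writing $R_i:=\exp(\Delta L^{(n)}_i)$ for the one-step likelihood ratio, where $\Delta L^{(n)}_i:=L^{(n)}_{iT/n}-L^{(n)}_{(i-1)T/n}$, we have $\mathbb E_{\mathbb P_n}[R_i-1\mid\mathcal F^{(n)}_{(i-1)T/n}]=0$, so a second order expansion of $\log R_i=\log(1+(R_i-1))$ yields
\[
\mathbb E_{\mathbb P_n}\!\left[\Delta L^{(n)}_i\;\middle|\;\mathcal F^{(n)}_{\frac{(i-1)T}{n}}\right]
=-\tfrac12\,\mathrm{Var}_{\mathbb P_n}\!\left[R_i\;\middle|\;\mathcal F^{(n)}_{\frac{(i-1)T}{n}}\right]+o(n^{-1}),
\]
while the conditional variance of $\Delta L^{(n)}_i$ agrees with $\mathrm{Var}_{\mathbb P_n}[R_i\mid\cdot]$ to leading order. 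Hence the It\^o drift correction is produced automatically, and it remains only to Taylor-expand the explicit weights (\ref{7.11}) and the corresponding weights for $\hat\xi_k$ in powers of $\sqrt{T/n}$ to confirm that $\tfrac{n}{T}$ times the conditional second moments converge to the entries of the diffusion matrix of $(\Phi,\Psi,L)$: the bracket $[L]$ has density $\frac{\mu^2}{\sigma_\Phi^2}+\Upsilon^2$, the cross bracket with $\Phi$ has density $-\mu$, the cross bracket with $\Psi$ has density $\Upsilon\,\sigma_\Psi$, and the drift of $L$ equals $-\frac{\mu^2}{2\sigma_\Phi^2}-\frac12\Upsilon^2$, all coefficients being evaluated at $\big(\Phi^{(n)}_{(k-1)T/n},\Psi^{(n)}_{(k-1)T/n}\big)$ and at $F_{(k-1)T/n}(\Phi^{(n)})$. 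Because $h,\sqrt h,\mu_\Phi,\mu_\Psi$ and $F$ are bounded, the increments of the triple are uniformly $O(\sqrt{T/n})$, so the Lindeberg (negligible-jump) condition holds trivially.

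With these characteristics in hand, the limit should be identified as the solution of the martingale problem for the system $d\Phi=\mu_\Phi\,dt+\sigma_\Phi\,dW$, $d\Psi=\mu_\Psi\,dt+\sigma_\Psi\,d\hat W$, $dL=-\frac{\mu}{\sigma_\Phi}\,dW+\Upsilon\,d\hat W-\big(\frac{\mu^2}{2\sigma_\Phi^2}+\frac12\Upsilon^2\big)dt$, that is, as $(\Phi,\Psi,\log Z)$. The step I expect to be the main obstacle is that the coefficient $\Upsilon_t=F_t(\Phi)$ is path dependent, so the limiting triple is not a Markov diffusion and the Markovian statement of Theorem 7.4.1 in \cite{EK:86} (used for Proposition \ref{lem.numerics}) does not apply verbatim. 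I would resolve this by freezing the driving pair: since $\sigma_\Phi\ge\underline{\sigma}>0$, the Brownian motions $W$ and $\hat W$ are recovered from $(\Phi,\Psi)$, so $L$ is a genuine functional of the continuous limit path $(\Phi,\Psi)$ and the joint martingale problem is well posed once that of $(\Phi,\Psi)$ is. Concretely, I would pass to the almost sure setting via the Skorokhod representation theorem (using $(\Phi^{(n)},\Psi^{(n)})\Rightarrow(\Phi,\Psi)$ from Proposition \ref{lem.numerics}, whose continuous limit forces uniform convergence of paths), exploit the continuity of $F$ in the Skorokhod topology to pass to the limit in the drift and bracket terms computed above, and identify the limiting martingale part through its predictable characteristics. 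The tightness needed to launch this argument is immediate from that of $(\Phi^{(n)},\Psi^{(n)})$ together with the uniform $L^2$ bound on $L^{(n)}$ coming from its bounded increments, and the conclusion is precisely the claimed convergence $(\Phi^{(n)},\Psi^{(n)},Z^{(n)})\Rightarrow(\Phi,\Psi,Z)$.
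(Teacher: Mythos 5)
Your proposal is correct in substance but takes a genuinely different route from the paper. The paper works multiplicatively with $Z^{(n)}$ itself: it Taylor-expands the one-step likelihood ratios, writes the relative increment of $Z^{(n)}$ as a stochastic difference equation (\ref{5.main}) driven by the increments of $(\Phi^{(n)},\Psi^{(n)})$ plus an \emph{unidentified} $O(1/n)$ error, absorbs that error into an absolutely continuous process $\Xi^{(n)}$ (after splitting off a vanishing martingale part $M^{(n)}$), invokes Theorems 4.3--4.4 of \cite{DP:92} to obtain a limit SDE (\ref{5.400}) for $\hat Z$ with an \emph{unknown} drift $d\Xi_t/T$, and only then kills that drift: the uniform $L^2$ bound (\ref{uniform}) makes $\{Z^{(n)}_t\}$ uniformly integrable, so $\hat Z$ is a martingale with respect to the enlarged filtration, forcing the total drift to vanish and identifying $\hat Z=Z$ via (\ref{7.new}). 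You instead pass to $L^{(n)}=\log Z^{(n)}$, whose step-by-step factorization you justify correctly (both measures give the same conditional weight to the zero moves and preserve conditional independence), and you compute the conditional characteristics exactly, using the identity $\mathbb E_{\mathbb P_n}[R_i-1\mid\mathcal F^{(n)}_{(i-1)T/n}]=0$ at the \emph{discrete} level to generate the It\^o correction $-\frac12\bigl(\mu^2/\sigma_\Phi^2+\Upsilon^2\bigr)$ without tracking second-order Taylor coefficients --- in effect the same martingale-property trick as the paper's, applied pre-limit instead of post-limit. Your limiting characteristics are all correct (the identity $\bigl(\frac12+\mu_\Phi/\sigma_\Phi^2\bigr)\sigma_\Phi=\mu/\sigma_\Phi$ reconciles your coefficients with (\ref{7.new}) and (\ref{5.400})), and you rightly anticipate the main obstacle: the path-dependence of $\Upsilon=F(\Phi)$ makes the limit non-Markov, so Theorem 7.4.1 of \cite{EK:86} does not apply; note the paper faces the same issue and sidesteps it through \cite{DP:92} rather than a martingale problem. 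What your route buys is an explicit identification of the limit drift and a self-contained uniqueness argument --- since $\sigma_\Phi\geq\underline\sigma>0$, the Brownian motions are recovered from $(\Phi,\Psi)$, so every subsequential limit has the law of $(\Phi,\Psi,\log Z)$; what the paper's route buys is never having to compute the drift at all.

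Two steps need firming up before this is a proof. First, tightness of $L^{(n)}$ does not follow from a uniform $L^2$ bound alone; but since the one-step increments are uniformly $O(\sqrt{T/n})$ and the conditional variances are $O(1/n)$ per step, Aldous's criterion (or the semimartingale tightness criteria of Jacod--Shiryaev) applies and even yields C-tightness. Second, the final identification ``through predictable characteristics'' should be made concrete: having shown that $M:=L+\int_0^{\cdot}\bigl(\frac{\mu^2}{2\sigma_\Phi^2}+\frac12\Upsilon_u^2\bigr)du$ is a continuous local martingale with $d[M]_t=\bigl(\frac{\mu^2}{\sigma_\Phi^2}+\Upsilon_t^2\bigr)dt$, $d[M,\Phi]_t=-\mu\,dt$, $d[M,\Psi]_t=\Upsilon_t\sigma_\Psi\,dt$, one checks that the continuous local martingale $M+\int_0^{\cdot}\frac{\mu}{\sigma_\Phi}dW_u-\int_0^{\cdot}\Upsilon_u d\hat W_u$ has identically vanishing quadratic variation, hence is zero, which gives $L=\log Z$. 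Both repairs are standard, so your plan goes through.
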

\begin{proof}
In order to prove the lemma it suffices to show that for any subsequence there exists a further subsequence
(still denoted by $n$) such that
\begin{equation}\label{5.300}
(\Phi^{(n)},\Psi^{(n)},Z^{(n)})\Rightarrow (\Phi,\Psi,Z).
\end{equation}
Fix $n\in\mathbb N$. By applying Taylor's expansion we obtain that there exist uniformly bounded (in $n$) processes
$E^{n,1}_k, E^{n,2}_k$, $k=0,1,...,n$ such that
$$
\frac{ \mathbb  Q_n\left(\xi_k| \mathcal F^{(n)}_{\frac{(k-1)T}{n}}\right)}
{\mathbb P_n\left(\xi_k| \mathcal F^{(n)}_{\frac{(k-1)T}{n}}\right)}=1-\tilde{\sigma}\xi_k\sqrt\frac{T}{n}
\left(\frac{1}{2}+\frac{\mu_{\Phi}\left(\Phi^{(n)}_{\frac{(k-1)T}{n}},\Psi^{(n)}_{\frac{(k-1)T}{n}}\right)}
{\sigma^2_{\Phi}\left(\Phi^{(n)}_{\frac{(k-1)T}{n}},\Psi^{(n)}_{\frac{(k-1)T}{n}}\right)}
\right)+\frac{E^{n,1}_{k}}{n}+o(1/n)$$
and
$$
\frac{\mathbb Q_n\left(\hat\xi_k| \mathcal F^{(n)}_{\frac{(k-1)T}{n}}\right)}
{\mathbb P_n\left(\hat\xi_k| \mathcal F^{(n)}_{\frac{(k-1)T}{n}}\right)}=1+\tilde{\sigma}\hat\xi_k\sqrt\frac{T}{n}
\frac{F_{\frac{(k-1)T}{n}}(\Phi^{(n)})}
{\sigma_{\Psi}\left(\Phi^{(n)}_{\frac{(k-1)T}{n}},\Psi^{(n)}_{\frac{(k-1)T}{n}}\right)}+\frac{E^{n,2}_k}{n}
+o(1/n).$$
We conclude that there exists a uniformly bounded (in $n$) process
$E^{(n)}_k$, $k=0,1...,n$ such that
\begin{equation}\label{5.main}
\begin{split}
\frac{Z^{(n)}_{\frac{k T}{n}}-Z^{(n)}_{\frac{(k-1)T}{n}}}
{Z^{(n)}_{\frac{(k-1)T}{n}}}&=\frac{\mathbb Q_n\left(\xi_k| \mathcal F^{(n)}_{\frac{(k-1)T}{n}}\right)}
{\mathbb P_n\left(\xi_k| \mathcal F^{(n)}_{\frac{(k-1)T}{n}}\right)}\frac{\mathbb Q_n\left(\hat\xi_k| \mathcal F^{(n)}_{\frac{(k-1)T}{n}}\right)}
{\mathbb P_n\left(\hat\xi_k| \mathcal F^{(n)}_{\frac{(k-1)T}{n}}\right)}-1\\
&=-(\Phi^{(n)}_{\frac{kT}{n}}-\Phi^{(n)}_{\frac{(k-1)T}{n}})\left(\frac{1}{2}+\frac{\mu_{\Phi}\left(\Phi^{(n)}_{\frac{(k-1)T}{n}},\Psi^{(n)}_{\frac{(k-1)T}{n}}\right)}
{\sigma^2_{\Phi}\left(\Phi^{(n)}_{\frac{(k-1)T}{n}},\Psi^{(n)}_{\frac{(k-1)T}{n}}\right)}
\right)\\
&+(\Psi^{(n)}_{\frac{k T}{n}}-\Psi^{(n)}_{\frac{(k-1)T}{n}})\frac{F_{\frac{(k-1)T}{n}}(\Phi^{(n)})}
{\sigma_{\Psi}\left(\Phi^{(n)}_{\frac{(k-1)T}{n}},\Psi^{(n)}_{\frac{(k-1)T}{n}}\right)}+\frac{E^{(n)}_k}{n}+o(1/n).
\end{split}
\end{equation}
In particular,
$\left(\frac{Z^{(n)}_{\frac{k T}{n}}-Z^{(n)}_{\frac{(k-1)T}{n}}}
{Z^{(n)}_{\frac{(k-1)T}{n}}}\right)^2$ is of order $O(1/n)$. Since
$Z^{(n)}$ is a martingale, then by taking conditional expectation we arrive to
$$\mathbb E_{\mathbb P_n}\left([Z^{(n)}_{\frac{k T}{n}}]^2|\mathcal F^{(n)}_{\frac{(k-1)T}{n}}\right)=[Z^{(n)}_{\frac{(k-1)T}{n}}]^2(1+O(1/n)).$$
By taking expectation we obtain
$$\mathbb E_{\mathbb P_n}\left([Z^{(n)}_{\frac{k T}{n}}]^2\right)=\mathbb E_{\mathbb P_n}\left([Z^{(n)}_{\frac{(k-1)T}{n}}]^2\right)(1+O(1/n)).$$
This together with the Doob--Kolmogorov inequality gives
\begin{equation}\label{uniform}
\sup_{n\in\mathbb N}\mathbb E_{\mathbb P_n}\left(\sup_{0\leq t\leq T}[Z^{(n)}_t]^2\right)\leq 4 \sup_{n\in\mathbb N}
\mathbb E_{\mathbb P_n}\left([Z^{(n)}_T]^2\right)<\infty.
\end{equation}
Next, define $\hat E^{(n)}_k:=\mathbb E_{\mathbb P_n}\left(E^{(n)}_k|\mathcal F^{(n)}_{\frac{(k-1)T}{n}}\right)$, $k=1,...,n$
and consider the martingale
$$\hat M^{(n)}_k:=\frac{1}{n}\sum_{i=1}^k ( E^{(n)}_i-\hat E^{(n)}_i), \ \ k=0,1,...,n.$$
Since $E^{(n)}$, $n\in\mathbb N$, are uniformly bounded then
\begin{eqnarray*}
&\mathbb E_{\mathbb P_n}\left(\max_{0\leq k\leq n}|\hat M^{(n)}_k|^2\right)\leq 4
\mathbb E_{\mathbb P_n}\left(|\hat M^{(n)}_n|^2\right)\\
&=\frac{4}{n^2}\sum_{i=1}^n \mathbb E_{\mathbb P_n}\left[\left( E^{(n)}_i-\hat E^{(n)}_i\right)^2\right]=O(1/n).
\end{eqnarray*}
Thus,
\begin{equation}\label{mar}
\max_{0\leq k\leq n}|\hat M^{(n)}_k|\rightarrow 0 \ \ \mbox{in} \ \mbox{probability}.
\end{equation}
Introduce the adapted (to $\mathcal F^{(n)}$) processes
\begin{eqnarray*}
&\Xi^{(n)}_t:=\int_{0}^t \hat E^{(n)}_{\floor*{nu/T}} du, \ \ t\in [0,T]\\
&M^{(n)}_t:= \hat M^{(n)}_{\floor*{nt/T}}, \ \ t\in [0,T]
\end{eqnarray*}
where $\floor*{\cdot}$ is the integer part of $\cdot$
and $\hat E^{(n)}_0:=E^{(n)}_0.$

Again, $E^{(n)}$, $n\in\mathbb N$, are uniformly bounded, and so
$\Xi^{(n)}$, $n\in\mathbb N$, is tight. From (\ref{7.3}) and (\ref{mar}) we conclude that the sequence
$(\Phi^{(n)},\Psi^{(n)},\Xi^{(n)},M^{(n)})$, $n\in\mathbb N$, is tight as well.
 Thus, from Prohorov's Theorem, (\ref{7.3}) and (\ref{mar}) it follows that for any subsequence there exists a further subsequence
such that
\begin{equation}\label{5.main1}
(\Phi^{(n)},\Psi^{(n)},\Xi^{(n)},M^{(n)})\Rightarrow (\Phi,\Psi,\Xi,0)
\end{equation}
 for some absolutely continuous process $\Xi=\{\Xi_t\}_{t=0}^T$.
From Theorems 4.3--4.4 in \cite{DP:92}, (\ref{5.main}), (\ref{5.main1})
and the equality
$\frac{E^{(n)}_k}{n}=\frac{\hat E^{(n)}_k}{n}+\hat M^{(n)}_k-\hat M^{(n)}_{k-1}$
we obtain that
$$(\Phi^{(n)},\Psi^{(n)},\Xi^{(n)},M^{(n)},Z^{(n)})\Rightarrow (\Phi,\Psi,\Xi,0,\hat Z)$$
where $\hat Z$ is the solution of the SDE
\begin{equation}\label{5.400}
\frac{d\hat Z_t}{\hat Z_t}=-\left(\frac{1}{2}+\frac{\mu_{\Phi}(\Phi_t,\Psi_t)}
{\sigma^2_{\Phi}(\Phi_t,\Psi_t)}\right)d\Phi_t+\frac{\Upsilon_t}{\sigma_{\Psi}(\Phi_t,\Psi_t)}d\Psi_t+\frac{d\Xi_t}{T}
\end{equation}
with the initial condition $\hat Z_0=1$.

Finally, (\ref{uniform}) implies that for any $t\in [0,T]$ the random variables
$\{Z^{(n)}_t\}_{n\in\mathbb N}$ are uniformly integrable. This together with the fact that
for any $n$, $Z^{(n)}$ is a martingale with respect to the filtration generated by
$\Phi^{(n)},\Psi^{(n)},\Xi^{(n)},M^{(n)},Z^{(n)}$ gives that $\hat Z$ is a martingale with respect to the filtration generated by
$\Phi,\Psi,\Xi,\hat Z$.
Moreover, from (\ref{7.1000})--(\ref{7.2000})
we get that
$\{\Phi_t-\int_{0}^t\mu_{\Phi}(\Phi_u,\Psi_u)du\}_{t=0}^T$
and $\{\Psi_t-\int_{0}^t\mu_{\Psi}(\Phi_u,\Psi_u)du\}_{t=0}^T$
are martingales with respect to the filtration generated by
$\Phi,\Psi,\Xi,\hat Z$.
In particular, from L\'evy's Theorem it follows that the stochastic processes $W$ and $\hat W$ which we redefine by
$$W_t:=\frac{\Phi_t-\int_{0}^t\mu_{\Phi}(\Phi_u,\Psi_u)du}{\sigma_{\Phi}(\Phi_t,\Psi_t)}, \  \hat
W_t:=\frac{\Psi_t-\int_{0}^t\mu_{\Psi}(\Phi_u,\Psi_u)du}{\sigma_{\Psi}(\Phi_t,\Psi_t)}$$
are (independent) Brownian motions
 with respect to the filtration generated by
$\Phi,\Psi,\Xi,\hat Z$.
We conclude that
the drift of the right hand side of (\ref{5.400}) is
equal to zero. Namely,
$$\frac{d\hat Z_t}{\hat Z_t}=-\left(\frac{1}{2}+\frac{\mu_{\Phi}(\Phi_t,\Psi_t)}{\sigma^2_{\Phi}(\Phi_t,\Psi_t)}\right)\sigma_{\Phi}(\Phi_t,\Psi_t)dW_t+
\Upsilon_td\hat W_t=\frac{dZ_t}{Z_t},$$
where the last equality follows from (\ref{7.new}).
Hence, $\hat Z=Z$ and (\ref{5.300}) follows.
\end{proof}
Clearly, Lemma \ref{lem5.2*} implies (\ref{7.100}). This gives us the following result.
\begin{prop}\label{cor1}
Consider the set-up of Proposition~\ref{lem.numerics}. Assumption \ref{asm4.1} holds true.
\end{prop}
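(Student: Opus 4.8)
The plan is to assemble Proposition~\ref{cor1} out of the pieces already in place, with essentially all of the analytic work having been carried out in Lemma~\ref{lem5.2*}. The first move is to invoke the reduction recorded in Remark~\ref{AE:4}: to verify Assumption~\ref{asm4.1} it suffices to produce, for every $\mathbb{Q}$ whose density lies in a dense subset of $\{d\mathbb{Q}/d\mathbb{P}:\mathbb{Q}\in\mathcal{M}(S)\}$, an approximating sequence $\mathbb{Q}_n\in\mathcal{M}(S^{(n)})$ satisfying the joint weak convergence~(\ref{4.1}). Since $\mathcal{M}^d(S)\subset\mathcal{M}(S)$ has already been shown to be dense (via the Girsanov theorem and the equality of the usual filtrations generated by $S$ and by $\Phi=\ln S$), I would restrict attention to an arbitrary $\mathbb{Q}\in\mathcal{M}^d(S)$, represented through some $\Upsilon=F(\Phi)\in\mathcal{D}$ with density process $Z$ as in~(\ref{7.new}), and take $\mathbb{Q}_n$ to be the measures built in~(\ref{7.11}) and its accompanying display. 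The normalization in~(\ref{7.11}) is exactly what forces $\mathbb{Q}_n\in\mathcal{M}(S^{(n)})$, so the candidate approximating martingale measures are admissible.

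With the candidates fixed, the core convergence is supplied directly by Lemma~\ref{lem5.2*}, which gives $(\Phi^{(n)},\Psi^{(n)},Z^{(n)})\Rightarrow(\Phi,\Psi,Z)$ and hence the joint convergence $(S^{(n)},Z^{(n)})\Rightarrow(S,Z)$ recorded as~(\ref{7.100}), where $Z^{(n)}_t=\frac{d\mathbb{Q}_n}{d\mathbb{P}_n}\big|_{\mathcal{F}^{(n)}_t}$. It then remains only to pass from this path-level statement to the terminal statement~(\ref{4.1}), which concerns $\frac{d\mathbb{Q}_n}{d\mathbb{P}_n}=Z^{(n)}_T$ and $\frac{d\mathbb{Q}}{d\mathbb{P}}=Z_T$. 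For this I would apply the continuous mapping theorem to the map $(s,z)\mapsto(s,z_T)$ on $\mathbb{D}([0,T];\mathbb{R}^d)\times\mathbb{D}([0,T];\mathbb{R})$: evaluation at the right endpoint $T$ is continuous in the Skorokhod topology, and in any case the limiting density $Z$ is a continuous process, so this map is almost surely continuous under the law of $(S,Z)$. This produces $\left(\left(S^{(n)},\frac{d\mathbb{Q}_n}{d\mathbb{P}_n}\right);\mathbb{P}_n\right)\Rightarrow\left(\left(S,\frac{d\mathbb{Q}}{d\mathbb{P}}\right);\mathbb{P}\right)$, which is precisely~(\ref{4.1}) for this $\mathbb{Q}$, and the density reduction of Remark~\ref{AE:4} then closes the argument.

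Since the genuine analysis — Taylor expanding the likelihood ratios, controlling the error martingale, and identifying the limiting density with $Z$ through L\'evy's characterization — is entirely absorbed into Lemma~\ref{lem5.2*}, the proof of the proposition itself is largely bookkeeping. The only points requiring real care are, first, checking that~(\ref{7.11}) indeed places each $\mathbb{Q}_n$ in $\mathcal{M}(S^{(n)})$, which is a one-line computation verifying that the one-step conditional $\mathbb{Q}_n$-expectation of the multiplicative increment $e^{\tilde\sigma\sqrt{T/n}\,\xi_k}$ of $S^{(n)}$ equals one; and second, justifying the terminal-time evaluation as a weakly continuous operation. I expect the latter to be the most delicate formal step, but it is fully resolved by the continuity of the limiting density process, so no additional machinery beyond the continuous mapping theorem is needed.
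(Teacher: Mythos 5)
Your proposal is correct and takes essentially the same route as the paper: reduce via Remark~\ref{AE:4} to the dense family $\mathcal M^d(S)$, take the measures $\mathbb Q_n$ defined by \eqref{7.11} (whose membership in $\mathcal M(S^{(n)})$ is exactly the one-line conditional-expectation check you describe), and conclude from Lemma~\ref{lem5.2*} and \eqref{7.100}. The only step the paper compresses into ``Clearly'' --- passing from convergence of the density processes to convergence of the terminal densities required by \eqref{4.1} --- is precisely the endpoint-evaluation/continuous-mapping argument you make explicit, which is valid since Skorokhod time changes fix the endpoint $T$ and the limit process $Z$ is continuous.
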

We end this section by addressing condition (II) in Lemma \ref{AE:2}.
\begin{rem}
Consider the martingale measures $\mathbb Q_n\in\mathcal M(S^{(n)})$, $n\in\mathbb N$ which were defined
before Lemma \ref{lem5.2*} for $\Upsilon\equiv 0$.
Since $\mu^{\Phi},\sigma^{\Phi},\frac{1}{\sigma^ \Phi}$ are uniformly bounded, then
standard arguments yield that
for any $q>0$ (\ref{boundmeasure}) holds true.
\end{rem}

\section{Approximations of the Shortfall Risk in the Heston Model}\label{sec:6}%\setcounter{equation}{0}

In this section we focus on shortfall risk minimization for European call options (which corresponds
to $U$ given by (\ref{5.utility})) in the Heston model.
We start with the following estimate.
\begin{lem}\label{error}
For an initial capital $x$ let $\hat R(x)$
be the shortfall risk in the Heston model and let
$R(x)$ be the shortfall risk in the model given by
(\ref{7.0-}). Then for any $m\in\mathbb N$
$$|\hat R(x)-R(x)|\leq O(\underline\sigma^{2\kappa\theta/\sigma^2-1})+O(1/\overline{\sigma}^{m}),$$
where the $O$ terms do not depend on $x$.
\end{lem}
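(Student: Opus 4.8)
The plan is to exploit that the two dynamics are driven by the same noise and differ only through the truncation $h$, so a pathwise coupling confines every discrepancy to the rare event that the variance leaves $[\underline\sigma^2,\overline\sigma^2]$. Recall that $R(x)=\inf_{\pi\in\mathcal A(x)}\mathbb E[((S_T-K)^+-V^\pi_T)^+]$ is the shortfall risk in the model (\ref{7.0-}) and $\hat R(x)$ the analogous quantity in the Heston model. Realize both $(\hat S,\hat\nu)$ and $(S,\nu)$ on one probability space using the same Brownian motions $W,\tilde W$, and set $\tau:=\inf\{t:\hat\nu_t\notin(\underline\sigma^2,\overline\sigma^2)\}\wedge T$. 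Since $h(z)=z$ on $[\underline\sigma^2,\overline\sigma^2]$, the two SDEs coincide up to $\tau$; hence $(\hat S_t,\hat\nu_t)=(S_t,\nu_t)$ for $t\le\tau$, the time $\tau$ is also the first exit time of $\nu$, and on $\{\tau=T\}$ the whole paths agree.

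First I would prove two one-sided estimates by a stop-at-$\tau$ argument. Let $\hat\pi=(x,\hat\gamma)$ be $\epsilon$-optimal for the Heston shortfall problem. The stopped integrand $\hat\gamma\,\mathbf 1_{[0,\tau]}$ is admissible in the model (\ref{7.0-}): because $S=\hat S$ on $[0,\tau]$ its wealth equals $V^{\hat\pi}$ there and is frozen at $V^{\hat\pi}_\tau\ge0$ afterwards, so it stays nonnegative. Comparing terminal shortfalls, the integrands $((S_T-K)^+-V^{\hat\pi}_\tau)^+$ and $((\hat S_T-K)^+-V^{\hat\pi}_T)^+$ coincide on $\{\tau=T\}$ and, on $\{\tau<T\}$, the former is dominated by $(S_T-K)^+\le S_T$. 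Letting $\epsilon\downarrow0$ gives $R(x)\le\hat R(x)+\mathbb E[(S_T-K)^+\mathbf 1_{\{\tau<T\}}]$. The symmetric construction (stop an $\epsilon$-optimal hedge of the model (\ref{7.0-}) at $\tau$ and run it in the Heston model) gives $\hat R(x)\le R(x)+\mathbb E[(\hat S_T-K)^+\mathbf 1_{\{\tau<T\}}]$, whence $|\hat R(x)-R(x)|\le\mathbb E[(S_T\vee\hat S_T)\mathbf 1_{\{\tau<T\}}]$. This bound is manifestly independent of $x$, as required.

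It then remains to estimate $\mathbb E[(S_T\vee\hat S_T)\mathbf 1_{\{\tau<T\}}]$ via $\{\tau<T\}\subset A_{\mathrm{low}}\cup A_{\mathrm{high}}$, where $A_{\mathrm{low}}:=\{\min_{t\le T}\hat\nu_t\le\underline\sigma^2\}$ and $A_{\mathrm{high}}:=\{\max_{t\le T}\hat\nu_t\ge\overline\sigma^2\}$. For the truncated stock the volatility is bounded, so $S_T$ has moments of every order; for the Heston stock $\hat S_T\in L^p$ for some $p>1$ on the fixed horizon $T$, since the moment-explosion time exceeds $T$ for $p$ close to $1$. A Hölder (or a direct martingale) estimate reduces both contributions to powers of $\mathbb P(A_{\mathrm{low}})$ and $\mathbb P(A_{\mathrm{high}})$. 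For $A_{\mathrm{high}}$, polynomial moments of $\sup_{t\le T}\hat\nu_t$ (finite for the CIR variance) together with Markov's inequality give $\mathbb P(A_{\mathrm{high}})=O(\overline\sigma^{-k})$ for every $k$, hence the $O(1/\overline\sigma^{m})$ term. For $A_{\mathrm{low}}$ I would use the scale function $s$ of the CIR variance, with $s'(z)\propto z^{-2\kappa\theta/\sigma^2}e^{2\kappa z/\sigma^2}$, which under the Feller condition $2\kappa\theta>\sigma^2$ diverges at $0$ like $z^{1-2\kappa\theta/\sigma^2}$; the associated hitting estimate produces the rate $O(\underline\sigma^{2\kappa\theta/\sigma^2-1})$.

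The main obstacle is precisely this lower-boundary estimate: extracting the sharp exponent $2\kappa\theta/\sigma^2-1$ from the scale-function behaviour of $\hat\nu$ near zero as a hitting-probability-over-a-finite-horizon bound (not merely from the marginal law of $\hat\nu_T$), and combining it with the stock integrability so that the Hölder step does not degrade the exponent. By contrast, the coupling and the stop-at-$\tau$ reduction are robust and elementary, and the high-barrier term is routine; the real analytic work lies in the CIR boundary behaviour and the accompanying moment control of $\hat S_T$ on the fixed horizon.
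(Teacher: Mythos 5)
Your first reduction is exactly the paper's: couple the two models through the common Brownian motions, note that $(\hat S,\hat\nu)$ and $(S,\nu)$ agree up to the exit time $\Theta_{\underline\sigma,\overline\sigma}$ of $\sqrt{\hat\nu}$ from $(\underline\sigma,\overline\sigma)$, stop near-optimal strategies there, and conclude
$|\hat R(x)-R(x)|\leq \mathbb E_{\mathbb P}\left[(\hat S_T+S_T)\mathbb I_{\{\Theta_{\underline\sigma,\overline\sigma}<T\}}\right]$,
uniformly in $x$. (The paper states this inequality without proof; your stop-at-$\tau$ argument is a correct justification of it.) The gap is in the second half, and you have named it yourself without resolving it: estimating $\mathbb E[\hat S_T\mathbb I_A]$ by H\"older gives $\|\hat S_T\|_p\,\mathbb P(A)^{1/q}$ with $q=p/(p-1)$, so the lower-barrier exponent becomes $(2\kappa\theta/\sigma^2-1)/q$. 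For the Heston model the critical moment order $p^\ast(T)$ is finite and in general close to $1$ on a fixed horizon, so $q$ cannot be taken near $1$, and the stated rate $O(\underline\sigma^{2\kappa\theta/\sigma^2-1})$ is genuinely out of reach by this route. This is not a technicality to be patched later; as written, the proposal proves a strictly weaker bound.

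The paper's device that you are missing is a change of measure in place of H\"older. First, using that $e^{-\mu t}\hat S_t$ and $e^{-\mu t}S_t$ are martingales and optional sampling (together with $S_\Theta=\hat S_\Theta$ under the coupling), one replaces the terminal values by stopped ones:
$\mathbb E_{\mathbb P}\left[(\hat S_T+S_T)\mathbb I_{\{\Theta<T\}}\right]\leq 2e^{\mu T}\,\mathbb E_{\mathbb P}\left[e^{-\mu\Theta}\hat S_\Theta\,\mathbb I_{\{\Theta<T\}}\right]$.
Then the nonnegative stopped martingale $e^{-\mu\Theta}\hat S_\Theta/S_0$ (true martingale, since the volatility is bounded by $\overline\sigma$ up to $\Theta$) is used as a Radon--Nikodym density $d\mathbf P/d\mathbb P$, turning the right-hand side into $2S_0e^{\mu T}\,\mathbf P(\Theta<T)$ \emph{exactly}, with no integrability loss. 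The price is that the barrier probabilities must be computed under $\mathbf P$, where by Girsanov the variance is the tilted CIR process $d\alpha_t=(\kappa(\theta-\alpha_t)+\sigma\rho\alpha_t)dt+\sigma\sqrt{\alpha_t}\,d\mathbf W_t$, which agrees with $\hat\nu$ on $[0,\Theta]$; since the extra linear drift does not change the boundary index $2\kappa\theta/\sigma^2$ at zero, the upper barrier is handled by Markov's inequality (all moments of $\sup_t\alpha_t$ are finite, giving $O(1/\overline\sigma^m)$ for every $m$, matching your routine step) and the lower barrier by the finite-horizon hitting estimate of Theorem 2 in \cite{JY:03}, giving $\mathbf P(\inf_{t\leq T}\sqrt{\alpha_t}\leq\underline\sigma)=O(\underline\sigma^{2\kappa\theta/\sigma^2-1})$. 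Your scale-function heuristic identifies the right exponent, but it must be applied to the tilted process $\alpha$ and upgraded to a finite-horizon hitting bound, which is precisely what the cited result supplies.
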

\begin{proof}
Define the stopping time
$$\Theta_{\underline{\sigma},\overline{\sigma}}:=T\wedge\inf\{t: \sqrt{\hat\nu_t}\notin(\underline{\sigma},\overline{\sigma})\}.$$ Observe that on the event
$\Theta_{\underline{\sigma},\overline{\sigma}}=T$ the processes $\hat S$ and $S$ coincide. Hence,
\begin{equation}\label{1}
|\hat R(x)-R(x)|\leq \mathbb E_{\mathbb P}[(\hat S_T+S_T)\mathbb{I}_{\Theta_{\underline{\sigma},\overline{\sigma}}<T}]\leq 2e^{\mu T} \mathbb E_{\mathbb
P}[e^{-\mu\Theta_{\underline{\sigma},\overline{\sigma}}}
\hat S_{\theta_{\underline{\sigma},\overline{\sigma}}} \mathbb{I}_{\Theta_{\underline{\sigma},\overline{\sigma}}<T}]
\end{equation}
where the last inequality is due to the fact that
the processes $e^{-\mu t}\hat S_t, e^{-\mu t}S_t$, $t\in [0,T]$ are martingales.

Introduce the probability measure $\mathbf P$ by
$\frac{d\mathbf P}{d\mathbb P}|{\mathcal F^S_T}:=\frac{e^{-\mu\Theta_{\underline{\sigma},\overline{\sigma}}}\hat S_{\Theta_{\underline{\sigma},\overline{\sigma}}}}{S_0}.$
Then by Girsanov theorem
the process
$\mathbf W_t:=\tilde W_t-\rho\int_{0}^{t\wedge\Theta_{\underline{\sigma},\overline{\sigma}}} \sqrt{\hat\nu_u}$, $t\in [0,T]$, is a Brownian motion with
respect to $\mathbf P$.
Let $\{\alpha_t\}_{t=0}^T$ be the unique strong solution of the SDE
$$d\alpha_t=\left(\kappa(\theta-{\alpha_t})+\sigma\rho\alpha_t\right)dt+\sigma \sqrt{ \alpha_t}d\mathbf  W_t, \ \ \alpha_0=\hat \nu_0.$$
Observe that
\begin{equation}\label{box}
\alpha_{[0,\Theta_{\underline{\sigma},\overline{\sigma}}]}=\hat\nu_{[0,\Theta_{\underline{\sigma},\overline{\sigma}}]}.
\end{equation}
Clearly, for any $m\in\mathbb N$ we have
$$\mathbb E_{\mathbf P}\left(\sup_{0\leq t\leq T} [\sqrt\alpha_t]^{m}\right)<\infty.$$ Thus, from the Markov inequality
we get
\begin{equation}\label{2}
\mathbf P\left(\sup_{0\leq t\leq T}\sqrt\alpha_t\geq\overline{\sigma}\right)=O(1/\overline{\sigma}^{m}), \ \ \forall m\in\mathbb N.
\end{equation}
Moreover, from Theorem 2 in \cite{JY:03} it follows that
\begin{equation}\label{3}
\mathbf P\left(\inf_{0\leq t\leq T}\sqrt\alpha_t\leq\underline{\sigma}\right)=O(\underline\sigma^{2\kappa\theta/\sigma^2-1}).
\end{equation}
 By combining (\ref{1})--(\ref{3}) we conclude that
\[
\begin{split}
|\hat R(x)-R(x)&|\leq 2S_0e^{\mu T}\left(\mathbf P\left(\inf_{0\leq t\leq T}\sqrt\alpha_t\leq\underline{\sigma}\right)+
\mathbf P\left(\sup_{0\leq t\leq T}\sqrt\alpha_t\geq\overline{\sigma}\right)\right)\\
 &\leq O(\underline\sigma^{2\kappa\theta/\sigma^2-1})+O(1/\overline{\sigma}^{m})
\end{split}
\]
 as required.
\end{proof}

Next, we focus
on approximating the shortfall risk in the model given by (\ref{7.0-}).
In order to apply
Theorem \ref{main}
we need to verify the required Assumptions.
Observe that
Assumption \ref{asm2.1}, Assumption \ref{asm2.2}(ii) ($U^{+}\equiv 0$) and Assumption \ref{asm4.2} trivially hold true.
Moreover, from Remark \ref{AE:new}
we obtain
Assumption \ref{asm2.1+}.
Since the drift and the volatility are uniformly bounded we
get that the random variables $\{S^{(n)}_T\}_{n\in\mathbb N}$
are uniformly integrable, which gives
Assumption \ref{asm2.2}(i).
In view of
Propositions
\ref{lem.numerics},\ref{cor1} we conclude
that our Assumptions
are satisfied and so Theorem \ref{main} holds true.

Thus, fix $n\in\mathbb N$ and recall the discrete models introduced in Section \ref{sec:dis}.
The stock price process $S^{(n)}$ is piece wise constant and so the investor trades only at the jump times
$\frac{kT}{n}$, $k=0,1...,n.$
Notice that
$\left\{\sum_{m=1}^k \xi_m,\sum_{m=1}^k \hat \xi_m\right\}_{k=0}^n$ is a lattice valued Markov chain (with respect to $\mathbb P_n$).
Hence, we introduce the functions
$J^{(n)}_k(i,j,\lambda)$, $k=0,1...,n$ such that $J^{(n)}_k(i,j,\lambda)$
measures the shortfall risk at time $kT/n$ given that
$\sum_{m=1}^k \xi_m=i$,
$\sum_{m=1}^k \hat\xi_m=j$, and
$\lambda$ is the ratio of the portfolio value and the stock price.
The stock price is recovered by
$$S^{(n)}_{\frac{kT}{n}}=S_0 e^{\tilde\sigma\sqrt\frac{T}{n}\sum_{m=1}^k\xi_m}=S_0 e^{i\tilde\sigma\sqrt\frac{T}{n}}.$$
Clearly, if $\lambda\geq 1$, then the shortfall risk is zero because we can buy the stock and hold it until maturity. Namely,
$J^{(n)}_k(i,j,\lambda)=0$ for $\lambda\geq 1$. Hence, we assume that
$\lambda\in [0,1]$.

Next, we describe the dynamic programming principle to solve the discrete control-problem.
At time $kT/n$ the investor decides about his investment policy.
Assume that the investor portfolio value is $\lambda S^{(n)}_{\frac{kT}{n}}$.
We have a trinomial model with growth rates
$\left\{e^{-\tilde{\sigma}\sqrt{\frac{T}{n}}},1,e^{\tilde{\sigma}\sqrt{\frac{T}{n}}}\right\}$.
From the binomial representation theorem we easily deduce that the set of replicable portfolios at time $(k+1)T/n$ are of the form
$\Lambda(\xi_{k+1}) S^{(n)}_{\frac{(k+1)T}{n}}$ where
$\Lambda:\{-1,0,1\}\rightarrow\mathbb R$ satisfies $\Lambda(0)=\lambda$ and
$$\frac{\Lambda(-1)+\Lambda(1)e^{\tilde{\sigma}\sqrt{\frac{T}{n}}}}{1+e^{\tilde{\sigma}\sqrt{\frac{T}{n}}}}=\lambda.$$
Thus, if $\Lambda(-1)$ is known then we set
\begin{equation}\label{6.0}
\Lambda(1):=1\wedge\left(\lambda(1+e^{-\tilde{\sigma}\sqrt{\frac{T}{n}}})-\Lambda(-1)e^{-\tilde{\sigma}\sqrt{\frac{T}{n}}}\right).
\end{equation}
We take a truncation in order to have $\Lambda(1)\in [0,1]$.
In view of our admissibility condition, we denote by $\mathcal A(\lambda)$ the set of all $\Lambda(-1)\in [0,1]$
for which the right hand side of
(\ref{6.0}) is non-negative.

We arrive to the following recursive relations.
Define
$$J^{(n)}_{k}(i,j,\lambda):\{-k,1-k,...,k\}\times\{-k,1-k,...,k\} \times [0,1]\rightarrow\mathbb R_{+}, \ \ k=0,1,...,n$$
by
$$J^{(n)}_{n}(i,j,\lambda):=U\left(\lambda S_0\exp\left(i\tilde{\sigma}\sqrt\frac{T}{n}\right),S_0\exp\left(i\tilde{\sigma}\sqrt\frac{T}{n}\right)\right),$$
and for $k<n$,
\begin{equation}\label{6.1--}
\begin{split}
&J^{(n)}_k(i,j,\lambda):=\\
&\sup_{\Lambda(-1)\in\mathcal A(\lambda)}\mathbb E_{\mathbb P_n}\bigg(J^{(n)}_{k+1}\bigg(i+\xi_{m+1},j+\hat\xi_{m+1},\Lambda(\xi_{m+1})\bigg)\bigg|\sum_{m=1}^k\xi_m=i, \
\sum_{m=1}^k\hat\xi_m=j\bigg)
\end{split}
\end{equation}
where
$\Lambda(0)=\lambda$ and
$\Lambda(1)$ is given by (\ref{6.0}).
For $k=0$ we
have
$J^{(n)}_0(x/S_0)=u_n(x).$

Observe that the functions $J^{(n)}_k(i,j,\lambda)$ are piece wise linear and continuous in $\lambda$, and so
they can be represented by an array which consists of the slope values and the slope jump points.
This together with the condition $J^{(n)}_k(i,j,1)=0$ is sufficient to recover the function.
Of course the array will depend on time $kT/n$ and the states $i,j$. Thus, theoretically, the dynamic programming given by (\ref{6.1--}) can be implemented using a computer.
However, from practical point of view
the number of the slope points of the function $J^{(n)}_k$ grows exponentially (in $n-k$),
and so for large $n$ it cannot be implemented.
Hence, we need to introduce a grid structure for the portfolio value as well.

Thus, choose $M\in\mathbb N$ and consider the grid
\begin{equation}\label{eq:cgrid}
GR:=\left\{0,\frac{1}{M},\frac{2}{M},...,1\right\}.
\end{equation}

For a given $\Lambda(-1)\in GR$ we define two grid values for $\Lambda(1)$.
The first value is
\begin{equation}\label{6.1}
\Lambda^{-}(1):=1\wedge\frac{\floor*{\left(\lambda(1+e^{-\tilde{\sigma}\sqrt{\frac{T}{n}}})-\Lambda(-1)e^{-\tilde{\sigma}\sqrt{\frac{T}{n}}}\right)M}}{M}
\end{equation}
where, recall that $\floor*{\cdot}$ is the integer part of $\cdot$.
The second value is
\begin{equation}\label{6.2}
\Lambda^{+}(1):=1\wedge\frac{\ceil*{\left(\lambda(1+e^{-\tilde{\sigma}\sqrt{\frac{T}{n}}})-\Lambda(-1)e^{-\tilde{\sigma}\sqrt{\frac{T}{n}}}\right)M}+1}{M}
\end{equation}
where $\ceil*{\cdot}=\min\{n\in\mathbb Z: n\geq \cdot\}$.
Define two value functions
\begin{equation}\label{eq:J-pm-func}
J^{(n)}_{k}(\pm,i,j,\lambda):\{-k,1-k,...,k\}\times\{-k,1-k,...,k\} \times GR\rightarrow\mathbb R_{+}, \ \ k=0,1,...,n
\end{equation}
as following.
The terminal condition is
$$J^{(n)}_{n}(\pm, i,j,\lambda):=U\left(\lambda S_0\exp\left(i\tilde{\sigma}\sqrt\frac{T}{n}\right),S_0\exp\left(i\tilde{\sigma}\sqrt\frac{T}{n}\right)\right).$$
For $k<n$,
\[
\begin{split}
&J^{(n)}_k(\pm,i,j,\lambda)\\
&:=\max_{\Lambda(-1)\in\mathcal A(\lambda)\bigcap GR}\mathbb E_{\mathbb P_n}\bigg(J^{(n)}_{k+1}\bigg(\pm,i+\xi_{m+1},j+\hat\xi_{m+1},\\
&\Lambda^{\pm}(\xi_{m+1})\bigg)\bigg|\sum_{m=1}^k\xi_m=i, \ \sum_{m=1}^k\hat\xi_m=j\bigg)\nonumber
\end{split}
\]
where
$\Lambda^{\pm}(-1)=\Lambda(-1)$, $\Lambda^{\pm}(0)=\lambda$ and
$\Lambda^{\pm}(1)$ are given by (\ref{6.1})--(\ref{6.2}).

For $k=0$ we obtain two values
$J^{(n)}_0(+,x/S_0)$ and $J^{(n)}_0(-,x/S_0)$. Observe that the complexity of the above dynamic programming is polynomial in $M,n$.
For the exact value $u_n(x)=J^{(n)}_0(x/S_0)$ we have the following simple lemma.
\begin{lem}\label{lem6.1}
Assume that $\frac{x}{S_0}\in GR$. Then
$$J_n(x/S_0)\in [J^{(n)}_0(-,x/S_0),J^{(n)}_0(+,x/S_0)].$$
\end{lem}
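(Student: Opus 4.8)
The plan is to prove both inequalities simultaneously by backward induction on $k$, establishing the stronger statement that for every node $(i,j)$ and every $\lambda\in GR$,
$$J^{(n)}_k(-,i,j,\lambda)\le J^{(n)}_k(i,j,\lambda)\le J^{(n)}_k(+,i,j,\lambda);$$
the lemma is then the special case $k=0$, $(i,j)=(0,0)$, $\lambda=x/S_0\in GR$. The base case $k=n$ is immediate since all three functions equal the terminal utility. Two preliminary facts drive the induction. First, each \emph{exact} value function $\lambda\mapsto J^{(n)}_k(i,j,\lambda)$ is non-decreasing (this follows from the monotonicity of $U$ in its first variable and the recursion (\ref{6.1--})); this is the only monotonicity I will need. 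Second, writing $r:=e^{-\tilde\sigma\sqrt{T/n}}$ and $A:=\lambda(1+r)-\Lambda(-1)r$ for the untruncated up-ratio in (\ref{6.0}), the trivial bounds $\lfloor AM\rfloor/M\le A\le\lceil AM\rceil/M$ give at once $\Lambda^-(1)\le\Lambda(1)\le\Lambda^+(1)$.

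For the lower bound the argument is clean. Fix $\lambda\in GR$ and a grid control $\Lambda(-1)\in\mathcal A(\lambda)\cap GR$ used in the $(-)$--scheme; since $GR\subset[0,1]$ this same $\Lambda(-1)$ is admissible in the exact problem. In the $(-)$--scheme the three successor ratios are $\Lambda(-1)$, $\lambda$ and $\Lambda^-(1)\le\Lambda(1)$, all lying in $GR$. Applying the induction hypothesis $J^{(n)}_{k+1}(-,\cdot)\le J^{(n)}_{k+1}(\cdot)$ branch by branch (at the grid points where the $(-)$--ratios live) and then the monotonicity of the exact value function in the up--branch, the conditional expectation defining the $(-)$--value is dominated by the exact conditional expectation for the control $\Lambda(-1)$, which is $\le J^{(n)}_k(i,j,\lambda)$. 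Taking the maximum over grid controls yields $J^{(n)}_k(-,i,j,\lambda)\le J^{(n)}_k(i,j,\lambda)$.

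For the upper bound I would take an almost optimal exact control $\Lambda(-1)\in\mathcal A(\lambda)$ and round it \emph{up} to the grid, $\hat\Lambda(-1):=\lceil\Lambda(-1)M\rceil/M\ge\Lambda(-1)$. The down--branch is handled by monotonicity ($\hat\Lambda(-1)\ge\Lambda(-1)$) together with the induction hypothesis, and the zero--branch directly by induction. The up--branch is exactly where the seemingly odd ``$+1$'' in (\ref{6.2}) earns its keep: rounding the control up lowers the untruncated up-ratio by at most $r/M<1/M$, so with $\hat A:=\lambda(1+r)-\hat\Lambda(-1)r$ one has $\hat AM>AM-1$, whence $\lceil\hat AM\rceil+1\ge\lceil AM\rceil\ge AM$ and therefore $\Lambda^+(1;\hat\Lambda(-1))\ge\Lambda(1)$ survives; monotonicity and induction then dominate the up--branch as well, giving $J^{(n)}_k(i,j,\lambda)\le J^{(n)}_k(+,i,j,\lambda)$.

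The one delicate point — and the main obstacle — is admissibility of the rounded-up control in the upper bound. Rounding up preserves admissibility ($\hat A\ge0$) whenever $A\ge r/M$, and in particular throughout the regime $\lambda\ge r/(1+r)$, where $\mathcal A(\lambda)=[0,1]$ and $\hat A\ge\lambda(1+r)-r\ge0$. The only exceptional configuration is $\lambda<r/(1+r)$ with the exact control within $1/M$ of the boundary $\bar\Lambda:=\lambda(1+r)/r$ of $\mathcal A(\lambda)$, where $\lceil\Lambda(-1)M\rceil/M$ may leave the admissible set; there I would instead use the largest admissible grid control $\lfloor\bar\Lambda M\rfloor/M\le\Lambda(-1)$. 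Crucially, this configuration forces $A<r/M$, so the exact up-ratio $\Lambda(1)=1\wedge A<1/M$ is negligible while the $(+)$--scheme still assigns up-ratio at least $1/M$; the gain in the up--branch then compensates the loss on the down--branch, and making this quantitative via the concavity and bounded slope of the exact value functions (which are piecewise linear with $J^{(n)}_{k+1}(i,j,1)=0$) is the crux. Finally, the second lattice coordinate $\hat\xi$ plays no role in any ratio estimate — it merely distributes each successor over several values of $j$ under the \emph{same} ratio rule — so the branch-by-branch comparisons above apply verbatim.
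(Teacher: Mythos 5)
Your main line coincides with the paper's proof. The paper proves the lower bound exactly as you do (it calls it obvious: any grid control is admissible in the exact problem, $\Lambda^-(1)\leq\Lambda(1)$ by $\lfloor AM\rfloor/M\leq A$, and monotonicity of $J^{(n)}_{k+1}$ in $\lambda$ plus induction finishes branch by branch), and the upper bound by the same device you use: round the exact control up to $1\wedge\lceil\tilde\Lambda(-1)M\rceil/M$, check $\Lambda^+(1)\geq\tilde\Lambda(1)$ — which is precisely where the ``$+1$'' in (\ref{6.2}) is consumed, via $\lceil\hat AM\rceil+1\geq\lceil AM\rceil\geq AM$ — and run a backward induction on $k$ using only the monotonicity of $J^{(n)}_k(i,j,\cdot)$. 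Up to that point your write-up is the paper's argument with the arithmetic made explicit.

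The gap is in your exceptional case, and you correctly flag it yourself: the claimed compensation between the up-branch gain and the down-branch loss is asserted, not proved, and it does not follow from concavity and piecewise linearity. Quantitatively, with $r=e^{-\tilde\sigma\sqrt{T/n}}$ the up-branch surplus in the ratio variable can be as small as $(1-r)/M$ while the down-branch deficit can be almost $1/M$; the relevant marginal values belong to \emph{different} functions $J^{(n)}_{k+1}(i+1,\cdot,\cdot)$ and $J^{(n)}_{k+1}(i-1,\cdot,\cdot)$ at different nodes, weighted by different conditional probabilities, and no stated property of the scheme compares them. Moreover, even if a quantitative bound were obtained, it would replace the clean inductive inequality $J^{(n)}_k\leq J^{(n)}_k(+,\cdot)$ by $J^{(n)}_k\leq J^{(n)}_k(+,\cdot)+\varepsilon_k$ with errors accumulating over $n$ steps — which proves a statement of the type in Remark \ref{rem:grdM}, not the exact sandwich of Lemma \ref{lem6.1}. (A small additional slip: $\lfloor\bar\Lambda M\rfloor/M\leq\Lambda(-1)$ holds only in your exceptional configuration, not in general.) It is worth knowing that the paper's own one-line proof is silent on exactly this point: it rounds the control up without verifying that the rounded control stays in $\mathcal A(\lambda)\cap GR$, which can fail when $\lambda<r/(1+r)$ and $\tilde\Lambda(-1)$ is within $1/M$ of the boundary $\lambda(1+r)/r$. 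The repair consistent with the design of (\ref{6.2}) is to read the feasible set of the $(+)$-scheme as all grid controls for which (\ref{6.2}) returns a value in $[0,1]$ — the ``$+1$'' tolerates an overshoot of the admissibility boundary by up to $r/M$, so the ceiling-rounded control is always feasible there — after which your main argument closes the proof with no exceptional case at all. As written, however, your upper bound is incomplete precisely at the step you call the crux.
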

\begin{proof}
The inequality $J^{(n)}_0(-,x/S_0)\leq J^{(n)}_0(x/S_0)$ is obvious. Let us prove that
$J^{(n)}_0(x/S_0)\leq J^{(n)}_0(+,x/S_0)$.
Choose $\lambda\in GR$ and
$\tilde \Lambda(-1),\tilde \Lambda(1)\in [0,1]$ which satisfy (\ref{6.0}). Define
$\Lambda(-1):=1\wedge\frac{\ceil*{\tilde \Lambda(-1)M}}{M}$ and let
$\Lambda^{+}(1)$ be given by (\ref{6.2}).
Then it is straightforward to check that $\Lambda(-1)\geq \tilde \Lambda(-1)$ and
$\Lambda^{+}(1)\geq\tilde\Lambda(1).
$
Hence, by applying backward induction (on $k$) and the fact that
$J^{(n)}_k(i,j,\lambda)$ is non-decreasing in $\lambda$ we get
that for any $k$, $J^{(n)}_k(\cdot)\leq  J^{(n)}_k(+,\cdot)$
where we take the restriction of $J^{(n)}_k(\cdot)$ to $\{-k,1-k,...,k\}\times\{-k,1-k,...,k\} \times GR$.
For $k=0$, we obtain $J^{(n)}_0(x/S_0)\leq J^{(n)}_0(+,x/S_0)$ as required.
\end{proof}
\begin{rem}\label{rem:grdM}
By using the fact that $U$ is Lipschitz continuous in the first variable, it can be shown that
the difference $J^{(n)}_0(+,x/S_0)-J^{(n)}_0(-,x/S_0)$ is of order $O(n/M)$.
In practice this difference goes to zero much faster (in $M$). As we will see in the
following numerical results, already for $M$ ``close" to $n$ the difference
$J^{(n)}_0(+,x/S_0)-J^{(n)}_0(-,x/S_0)$ becomes very small.
\end{rem}
\subsection{Numerical Results}

In this section we implement numerically the above described procedure.
In Table \ref{Table1new1} and in the corresponding Figure \ref{figure1} we compute the functions defined in \eqref{eq:J-pm-func}.
To serve as a reference we also evaluate the function
$
\overline{u}(x)=-\mathbb E_P\left[((S_T-K)^+-x)^+\right],
$
a lower bound, which corresponds to the value of spending no extra effort in reducing the shortfall.

\begin{table}[H]
\begin{center}
\begin{tabular}{c c c c c}
&$J_0^{(n)}(-,x/S_0)$&$J_0^{(n)}(+,x/S_0)$&$\overline{u}_0^{(n)}(x)$\\
\hline
x=0&-24.5421&-24.0371&-24.6095\\
\hline\\
x=5&-18.4702&-17.7050&-21.4086\\
\hline\\
x=10&-12.3159&-11.6165&-18.2077\\
\hline\\
x=15&-7.0529&-6.3398&-16.3018\\
\hline\\
x=20&-2.7913&-2.2453&-14.3959\\
\hline\\
x=25&-0.6802&-0.4201&-12.4901\\
\hline\\
x=30&-0.0825&-0.0274&-10.5842\\
\hline\\
x=35&-0.0043&-0.0004&-8.6783\\
\hline\\
x=40&0&0&-7.1540\\
\hline\\
x=45&0&0&-6.4423\\
\hline\\
x=50&0&0&-5.7306\\
\hline\\
x=55&0&0&-5.0190\\
\hline\\
x=60&0&0&-4.3073\\
\hline\\
x=70&0&0&-2.8840\\
\hline\\
x=80&0&0&-2.0045\\
\hline\\
x=90&0&0&-1.6418\\
\hline\\
x=100&0&0&-1.2792\\
\hline\\
\end{tabular}
\end{center}
\caption{Shortfall risk minimization for call options. Parameters used in computation:
$K=90,\overline{\sigma}=1,\tilde{\sigma}=5,\underline{\sigma}=0.0001;\sigma=0.39,\rho=-0.64,\kappa=1.15,\theta=0.348,\mu=0.05,S_0=100,T=1,\nu_0=0.09,n=400,M=400$. }
\label{Table1new1}
\end{table}

\begin{figure}[H]
\begin{center}
\includegraphics[scale=.7]{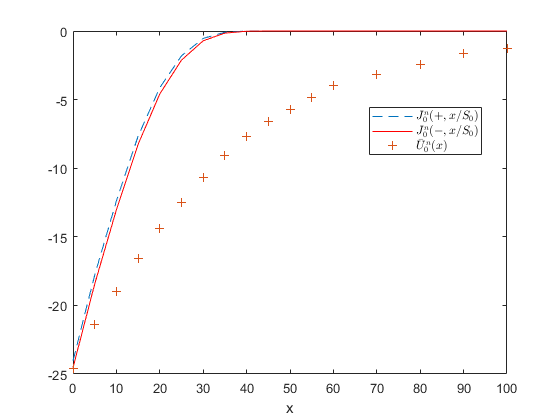}
\end{center}
\caption{ Plot of the values reported in Table~\ref{Table1new1}.}
\label{figure1}
\end{figure}

In the next table we analyze the sensitivity of the problem to $\overline{\sigma}$. The smaller this parameter, the faster the algorithm takes. Although, Lemma~\ref{error}
indicates an error bound for large $\overline{\sigma}$ (which was obtained by an application of Markov's inequality), we observe that we can in practice take
$\overline{\sigma}=1$ for our parameters.

\begin{table}[H]
\begin{center}
\begin{tabular}{c c c c c c c}
&$\overline{\sigma}=0.4(0.1757)$&$\overline{\sigma}=0.6(0.8085)$&$\overline{\sigma}=0.8(0.9939)$&$\overline{\sigma}=1$&$\overline{\sigma}=2$\\
\hline
x=0&-15.3139&-23.1077&-22.0861&-24.5421&-24.5421\\
\hline\\
x=10&-4.1129&-9.6884&-10.9334&-12.3159&-12.3159\\
\hline\\
x=20&-0.1435&-4.5287&-1.9145&-2.7913&-2.7913\\
\hline\\
\end{tabular}
\end{center}
\caption{Variation with respect to $\overline{\sigma}$. Parameters are the same as in Table~\ref{Table1new1}. The values in the parentheses represent
$\mathbb P(\Theta_{\underline{\sigma},\overline{\sigma}}<T)$ rounded to 4 decimals points. We did not indicate these values when this probability is extremely close to 1. }
\label{Table1new2}
\end{table}

In Table~\ref{TableJne} we analyze the sensitivity of solution to the grid size of the control  variable defined in  \eqref{eq:cgrid}. We observe, as stated in
Remark~\ref{rem:grdM}, that the we can actually take $M= kn$, where $k<1$. In this table, we determine the range of $k$ we can choose. We observe that choosing $n$ larger
leads to more error reduction than choosing $k$ larger. We have also checked this for values of $k>1$.

\begin{table}[H]
\begin{center}
\begin{tabular}{c c c c c}
&M=n/4&M=n/2&M=n\\
\hline
n=50&-9.2138&-6.6971&-6.6586\\
\hline
n=100&-5.4667&-5.4282&-5.4238\\
\hline
n=200&-3.7184&-3.6541&-3.6448\\
\hline
n=400&-2.9834&-2.8392&-2.7913\\
\hline
n=800&-2.6675&-2.5299&-2.4833\\
\hline
\end{tabular}
\end{center}
\caption{Variation with respect to $M$. $x=20$. Other parameters are the same as in Table~\ref{Table1new1}.}
\label{TableJne}
\end{table}

Table~\ref{TableJneerror} and the corresponding Figure~\ref{TableJneerror} demonstrate the convergence with respect to $n$. We observe that the convergence rate is a power of
$n$. We leave the rigorous demonstration of this result for future work.

\begin{table}[H]
\begin{center}
\begin{tabular}{c c c}
&M=n/4& $\frac{J_0^n(-,x/S_0)-J_0^{n/2}(-,x/S_0)}{|J_0^{n/2}(-,x/S_0)|}$\\
\hline
n=50&-9.2138&--\\
\hline
n=100&-5.4667&0.4067\\
\hline
n=200&-3.7184&0.3198\\
\hline
n=400&-2.9834&0.1977\\
\hline
n=800&-2.6675&0.1059\\
\hline
n=1600&-2.6171&0.0189\\
\hline
\end{tabular}
\end{center}
\caption{$x=20$. Other parameters are the same as in Table~\ref{Table1new1}.}
\label{TableJneerror}
\end{table}

\begin{figure}[H]
\begin{center}
\includegraphics[scale=.7]{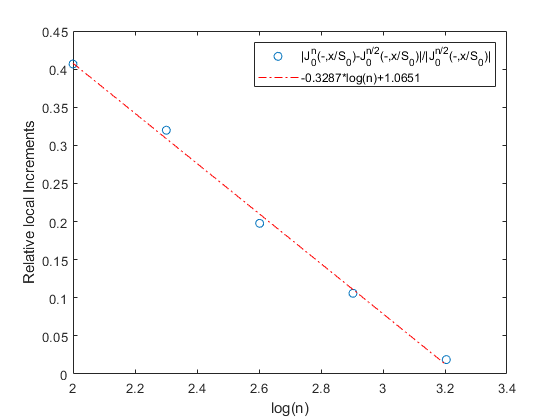}
\end{center}
\caption{Plot of the values in Table~\ref{TableJneerror}.}
\label{figure2}
\end{figure}

\bibliographystyle{plainnat}
\bibliography{finance}

\begin{thebibliography}{43}
\providecommand{\natexlab}[1]{#1}
\providecommand{\url}[1]{\texttt{#1}}
\expandafter\ifx\csname urlstyle\endcsname\relax
  \providecommand{\doi}[1]{doi: #1}\else
  \providecommand{\doi}{doi: \begingroup \urlstyle{rm}\Url}\fi

\bibitem[Backhoff~Veraguas and Silva(2018)]{BS:18}
Julio Backhoff~Veraguas and Francisco~J. Silva.
\newblock Sensitivity analysis for expected utility maximization in incomplete
  {B}rownian market models.
\newblock \emph{Math. Financ. Econ.}, 12\penalty0 (3):\penalty0 387--411, 2018.
\newblock ISSN 1862-9679.
\newblock \doi{10.1007/s11579-017-0209-9}.
\newblock URL \url{https://doi.org/10.1007/s11579-017-0209-9}.

\bibitem[Bank and Baum(2004)]{BankBaum:04}
Peter Bank and Dietmar Baum.
\newblock Hedging and portfolio optimization in financial markets with a large
  trader.
\newblock \emph{Math. Finance}, 14\penalty0 (1):\penalty0 1--18, 2004.
\newblock ISSN 0960-1627.

\bibitem[Bayraktar and Kravitz(2013)]{BK:13}
Erhan Bayraktar and Ross Kravitz.
\newblock Stability of exponential utility maximization with respect to market
  perturbations.
\newblock \emph{Stochastic Process. Appl.}, 123\penalty0 (5):\penalty0
  1671--1690, 2013.
\newblock ISSN 0304-4149.
\newblock \doi{10.1016/j.spa.2012.12.007}.
\newblock URL \url{https://doi.org/10.1016/j.spa.2012.12.007}.

\bibitem[Biagini and Frittelli(2008)]{BF:08}
Sara Biagini and Marco Frittelli.
\newblock A unified framework for utility maximization problems: an {O}rlicz
  space approach.
\newblock \emph{Ann. Appl. Probab.}, 18\penalty0 (3):\penalty0 929--966, 2008.
\newblock ISSN 1050-5164.
\newblock \doi{10.1214/07-AAP469}.
\newblock URL \url{https://doi.org/10.1214/07-AAP469}.

\bibitem[Billingsley(1999)]{B:68}
Patrick Billingsley.
\newblock \emph{Convergence of probability measures}.
\newblock Wiley Series in Probability and Statistics: Probability and
  Statistics. John Wiley \& Sons, Inc., New York, second edition, 1999.
\newblock ISBN 0-471-19745-9.
\newblock \doi{10.1002/9780470316962}.
\newblock URL \url{https://doi.org/10.1002/9780470316962}.
\newblock A Wiley-Interscience Publication.

\bibitem[Collevecchio et~al.(2016)Collevecchio, Hamza, and Shi]{CHS:16}
Andrea Collevecchio, Kais Hamza, and Meng Shi.
\newblock Bootstrap random walks.
\newblock \emph{Stochastic Process. Appl.}, 126\penalty0 (6):\penalty0
  1744--1760, 2016.
\newblock ISSN 0304-4149.
\newblock \doi{10.1016/j.spa.2015.11.016}.
\newblock URL \url{https://doi.org/10.1016/j.spa.2015.11.016}.

\bibitem[Cox and Huang(1989)]{CH:89}
J.~Cox and C.~Huang.
\newblock Optimal consumption and portfolio policies when asset prices follow a
  diffusion process.
\newblock \emph{Journal of Economic Theory}, 49:\penalty0 33--83, 1989.

\bibitem[Cox et~al.(1985)Cox, Ingersoll, and Ross]{CIR:85}
J.C. Cox, J.E. Ingersoll, and S.A. Ross.
\newblock A theory of the trerm structure of interest rates.
\newblock \emph{Econometrica}, 53:\penalty0 385--407, 1985.

\bibitem[Cvitani\'{c} et~al.(2017)Cvitani\'{c}, Schachermayer, and
  Wang]{CSW:18}
Jaksa Cvitani\'{c}, Walter Schachermayer, and Hui Wang.
\newblock Erratum to: {U}tility maximization in incomplete markets with random
  endowment [ {MR}1841719].
\newblock \emph{Finance Stoch.}, 21\penalty0 (3):\penalty0 867--872, 2017.
\newblock ISSN 0949-2984.
\newblock \doi{10.1007/s00780-017-0331-9}.
\newblock URL \url{https://doi.org/10.1007/s00780-017-0331-9}.

\bibitem[Cvitani\'{c} et~al.(1999)Cvitani\'{c}, Pham, and Touzi]{CPT:99}
Jak\v{s}a Cvitani\'{c}, Huy\^{e}n Pham, and Nizar Touzi.
\newblock Super-replication in stochastic volatility models with portfolio
  constraints.
\newblock \emph{Journal of Applied Probability}, 36:\penalty0 523--545, 1999.

\bibitem[Cvitani\'{c} et~al.(2001)Cvitani\'{c}, Schachermayer, and
  Wang]{CSW:01}
Jak\v{s}a Cvitani\'{c}, Walter Schachermayer, and Hui Wang.
\newblock Utility maximization in incomplete markets with random endowment.
\newblock \emph{Finance Stoch.}, 5\penalty0 (2):\penalty0 259--272, 2001.
\newblock ISSN 0949-2984.
\newblock \doi{10.1007/PL00013534}.
\newblock URL \url{https://doi.org/10.1007/PL00013534}.

\bibitem[Delbaen and Schachermayer(1994)]{DelbSch:94}
Freddy Delbaen and Walter Schachermayer.
\newblock A general version of the fundamental theorem of asset pricing.
\newblock \emph{Math. Ann.}, 300\penalty0 (3):\penalty0 463--520, 1994.
\newblock ISSN 0025-5831.
\newblock \doi{10.1007/BF01450498}.
\newblock URL \url{https://doi.org/10.1007/BF01450498}.

\bibitem[Dolinsky and Neufeld(2018)]{DN:18}
Yan Dolinsky and Ariel Neufeld.
\newblock Super-replication in fully incomplete markets.
\newblock \emph{Math. Finance}, 28\penalty0 (2):\penalty0 483--515, 2018.
\newblock ISSN 0960-1627.
\newblock \doi{10.1111/mafi.12149}.
\newblock URL \url{https://doi.org/10.1111/mafi.12149}.

\bibitem[Dudley(1968)]{D:68}
R.~M. Dudley.
\newblock Distances of probability measures and random variables.
\newblock \emph{Ann. Math. Statist}, 39:\penalty0 1563--1572, 1968.
\newblock ISSN 0003-4851.
\newblock \doi{10.1007/978-1-4419-5821-1_4}.
\newblock URL \url{https://doi.org/10.1007/978-1-4419-5821-1_4}.

\bibitem[Duffie and Protter(1992)]{DP:92}
Darrell Duffie and Philip Protter.
\newblock From discrete-- to continuous--time finance: Weak convergence of the
  financial gain process.
\newblock \emph{Math. Finance}, 2:\penalty0 1--15, 1992.

\bibitem[Ethier and Kurtz(1986)]{EK:86}
Stewart~N. Ethier and Thomas~G. Kurtz.
\newblock \emph{Markov processes}.
\newblock Wiley Series in Probability and Mathematical Statistics: Probability
  and Mathematical Statistics. John Wiley \& Sons, Inc., New York, 1986.
\newblock ISBN 0-471-08186-8.
\newblock \doi{10.1002/9780470316658}.
\newblock URL \url{https://doi.org/10.1002/9780470316658}.
\newblock Characterization and convergence.

\bibitem[Frey and Sin(1999)]{FS:99}
R\"{u}diger Frey and Carlos~A. Sin.
\newblock Bounds on {E}uropean option prices under stochastic volatility.
\newblock \emph{Math. Finance}, 9\penalty0 (2):\penalty0 97--116, 1999.
\newblock ISSN 0960-1627.
\newblock \doi{10.1111/1467-9965.00064}.
\newblock URL \url{https://doi.org/10.1111/1467-9965.00064}.

\bibitem[G\"{o}ing-Jaeschke and Yor(2003)]{JY:03}
Anja G\"{o}ing-Jaeschke and Marc Yor.
\newblock A survey and some generalizations of {B}essel processes.
\newblock \emph{Bernoulli}, 9\penalty0 (2):\penalty0 313--349, 2003.
\newblock ISSN 1350-7265.
\newblock \doi{10.3150/bj/1068128980}.
\newblock URL \url{https://doi.org/10.3150/bj/1068128980}.

\bibitem[Heston(1993)]{Hest:93}
Steven~L. Heston.
\newblock A closed-form solution for options with stochastic volatility with
  applications to bond and currency options.
\newblock \emph{Review of Financial Studies}, 6:\penalty0 327--343, 1993.

\bibitem[H.He(1991)]{H:91}
H.He.
\newblock Optimal consumption- portfolio policies: A convergence from discrete
  to continuous time models.
\newblock \emph{Journal of Economical Theory}, 55:\penalty0 340--363, 1991.

\bibitem[Hu et~al.(2005)Hu, Imkeller, and M\"{u}ller]{HIM:05}
Ying Hu, Peter Imkeller, and Matthias M\"{u}ller.
\newblock Utility maximization in incomplete markets.
\newblock \emph{Ann. Appl. Probab.}, 15\penalty0 (3):\penalty0 1691--1712,
  2005.
\newblock ISSN 1050-5164.
\newblock \doi{10.1214/105051605000000188}.
\newblock URL \url{https://doi.org/10.1214/105051605000000188}.

\bibitem[Hubalek and Schachermayer(1998)]{HS:98}
Friedrich Hubalek and Walter Schachermayer.
\newblock When does convergence of asset price processes imply convergence of
  option prices?
\newblock \emph{Math. Finance}, 8\penalty0 (4):\penalty0 385--403, 1998.
\newblock ISSN 0960-1627.
\newblock \doi{10.1111/1467-9965.00060}.
\newblock URL \url{https://doi.org/10.1111/1467-9965.00060}.

\bibitem[Hull and White(1987)]{HW:87}
J.~Hull and A.~White.
\newblock Pricing of options on assets with stochastic volatility.
\newblock \emph{Journal of Finance}, 42:\penalty0 281--300, 1987.

\bibitem[Jacod et~al.(2000)Jacod, Méléard, and Protter]{JMP:00}
J.~Jacod, S.~Méléard, and P.~Protter.
\newblock Explicit form and robustness of martingale representations.
\newblock \emph{Annals of Probability}, 28:\penalty0 1747--1780, 2000.

\bibitem[Jouini and Napp(2004)]{JN:04}
Ely\`es Jouini and Clotilde Napp.
\newblock Convergence of utility functions and convergence of optimal
  strategies.
\newblock \emph{Finance Stoch.}, 8\penalty0 (1):\penalty0 133--144, 2004.
\newblock ISSN 0949-2984.
\newblock \doi{10.1007/s00780-003-0106-3}.
\newblock URL \url{https://doi.org/10.1007/s00780-003-0106-3}.

\bibitem[Karatzas et~al.(1991)Karatzas, Lehoczky, Shreve, and Xu]{KLSX:91}
Ioannis Karatzas, John~P. Lehoczky, Steven~E. Shreve, and Gan-Lin Xu.
\newblock Martingale and duality methods for utility maximization in an
  incomplete market.
\newblock \emph{SIAM J. Control Optim.}, 29\penalty0 (3):\penalty0 702--730,
  1991.
\newblock ISSN 0363-0129.
\newblock \doi{10.1137/0329039}.
\newblock URL \url{https://doi.org/10.1137/0329039}.

\bibitem[Kardaras and \v{Z}itkovi\'{c}(2011)]{KZ:11}
Constantinos Kardaras and Gordan \v{Z}itkovi\'{c}.
\newblock Stability of the utility maximization problem with random endowment
  in incomplete markets.
\newblock \emph{Math. Finance}, 21\penalty0 (2):\penalty0 313--333, 2011.
\newblock ISSN 0960-1627.
\newblock \doi{10.1111/j.1467-9965.2010.00433.x}.
\newblock URL \url{https://doi.org/10.1111/j.1467-9965.2010.00433.x}.

\bibitem[Kramkov and Schachermayer(1999)]{KS:99}
D.~Kramkov and W.~Schachermayer.
\newblock The asymptotic elasticity of utility functions and optimal investment
  in incomplete markets.
\newblock \emph{Ann. Appl. Probab.}, 9\penalty0 (3):\penalty0 904--950, 1999.
\newblock ISSN 1050-5164.
\newblock \doi{10.1214/aoap/1029962818}.
\newblock URL \url{https://doi.org/10.1214/aoap/1029962818}.

\bibitem[Kramkov(1996)]{Kram:96}
D.~O. Kramkov.
\newblock Optional decomposition of supermartingales and hedging contingent
  claims in incomplete security markets.
\newblock \emph{Probab. Theory Related Fields}, 105\penalty0 (4):\penalty0
  459--479, 1996.
\newblock ISSN 0178-8051.
\newblock \doi{10.1007/BF01191909}.
\newblock URL \url{https://doi.org/10.1007/BF01191909}.

\bibitem[Larsen(2009)]{L:09}
Kasper Larsen.
\newblock Continuity of utility-maximization with respect to preferences.
\newblock \emph{Math. Finance}, 19\penalty0 (2):\penalty0 237--250, 2009.
\newblock ISSN 0960-1627.
\newblock \doi{10.1111/j.1467-9965.2009.00365.x}.
\newblock URL \url{https://doi.org/10.1111/j.1467-9965.2009.00365.x}.

\bibitem[Larsen and \v{Z}itkovi\'{c}(2007)]{LZ:07}
Kasper Larsen and Gordan \v{Z}itkovi\'{c}.
\newblock Stability of utility-maximization in incomplete markets.
\newblock \emph{Stochastic Process. Appl.}, 117\penalty0 (11):\penalty0
  1642--1662, 2007.
\newblock ISSN 0304-4149.
\newblock \doi{10.1016/j.spa.2006.10.012}.
\newblock URL \url{https://doi.org/10.1016/j.spa.2006.10.012}.

\bibitem[Larsen and Yu(2012)]{LY:12}
Kasper Larsen and Hang Yu.
\newblock Horizon dependence of utility optimizers in incomplete models.
\newblock \emph{Finance Stoch.}, 16\penalty0 (4):\penalty0 779--801, 2012.
\newblock ISSN 0949-2984.
\newblock \doi{10.1007/s00780-012-0171-6}.
\newblock URL \url{https://doi.org/10.1007/s00780-012-0171-6}.

\bibitem[Larsen et~al.(2018)Larsen, Mostovyi, and \v{Z}itkovi\'{c}]{LMZ:18}
Kasper Larsen, Oleksii Mostovyi, and Gordan \v{Z}itkovi\'{c}.
\newblock An expansion in the model space in the context of utility
  maximization.
\newblock \emph{Finance Stoch.}, 22\penalty0 (2):\penalty0 297--326, 2018.
\newblock ISSN 0949-2984.
\newblock \doi{10.1007/s00780-017-0353-3}.
\newblock URL \url{https://doi.org/10.1007/s00780-017-0353-3}.

\bibitem[Merton(1969)]{Merton:69}
R.C. Merton.
\newblock Lifetime portoflio selection under uncertainty: The continuous time
  case.
\newblock \emph{Rev. Econ. Stat}, 51:\penalty0 247--257, 1969.

\bibitem[Merton(1971)]{Merton:71}
R.C. Merton.
\newblock Optimum consumption and portfolio rules in a contiunous-time model.
\newblock \emph{J. Econ. Th}, 3:\penalty0 373--413, 1971.

\bibitem[Mostovyi and S\^{i}rbu(2018)]{MS:18}
O.~Mostovyi and M.~S\^{i}rbu.
\newblock Sensitivity analysis of the utility maximization problem with respect
  to model perturbations.
\newblock \emph{preprint, arXiv: 1705.08291}, 2018.

\bibitem[Mulinacci(2011)]{M:11}
Sabrina Mulinacci.
\newblock The efficient hedging problem for {A}merican options.
\newblock \emph{Finance Stoch.}, 15\penalty0 (2):\penalty0 365--397, 2011.
\newblock ISSN 0949-2984.
\newblock \doi{10.1007/s00780-010-0151-7}.
\newblock URL \url{https://doi.org/10.1007/s00780-010-0151-7}.

\bibitem[Neufeld(2018)]{N:18}
Ariel Neufeld.
\newblock Buy-and-hold property for fully incomplete markets when
  super-replicating markovian claims.
\newblock \emph{International Journal of Theoretical and Applied Finance},
  21\penalty0 (08):\penalty0 1850051, 2018.
\newblock \doi{10.1142/S0219024918500516}.
\newblock URL \url{https://doi.org/10.1142/S0219024918500516}.

\bibitem[Prigent(2003)]{P:03}
Jean-Luc Prigent.
\newblock \emph{Weak convergence of financial markets}.
\newblock Springer Finance. Springer-Verlag, Berlin, 2003.
\newblock ISBN 3-540-42333-8.
\newblock \doi{10.1007/978-3-540-24831-6}.
\newblock URL \url{https://doi.org/10.1007/978-3-540-24831-6}.

\bibitem[R\'{a}sonyi and Stettner(2005)]{RS:05}
Mikl\'{o}s R\'{a}sonyi and Lukasz Stettner.
\newblock On utility maximization in discrete-time financial market models.
\newblock \emph{Ann. Appl. Probab.}, 15\penalty0 (2):\penalty0 1367--1395,
  2005.
\newblock ISSN 1050-5164.
\newblock \doi{10.1214/105051605000000089}.
\newblock URL \url{https://doi.org/10.1214/105051605000000089}.

\bibitem[Reichlin(2016)]{Reichlin:16}
Christian Reichlin.
\newblock Behavioral portfolio selection: asymptotics and stability along a
  sequence of models.
\newblock \emph{Math. Finance}, 26\penalty0 (1):\penalty0 51--85, 2016.
\newblock ISSN 0960-1627.
\newblock \doi{10.1111/mafi.12053}.
\newblock URL \url{https://doi.org/10.1111/mafi.12053}.

\bibitem[Sin(1998)]{S:98}
Carlos~A. Sin.
\newblock Complications with stochastic volatility models.
\newblock \emph{Adv. in Appl. Probab.}, 30\penalty0 (1):\penalty0 256--268,
  1998.
\newblock ISSN 0001-8678.
\newblock \doi{10.1239/aap/1035228003}.
\newblock URL \url{https://doi.org/10.1239/aap/1035228003}.

\bibitem[Whitt(2007)]{W:07}
Ward Whitt.
\newblock Proofs of the martingale {FCLT}.
\newblock \emph{Probab. Surv.}, 4:\penalty0 268--302, 2007.
\newblock ISSN 1549-5787.
\newblock \doi{10.1214/07-PS122}.
\newblock URL \url{https://doi.org/10.1214/07-PS122}.

\end{thebibliography}

\end{document}